\newtheorem{theorem}{Theorem}
\newtheorem{lemma}[theorem]{Lemma}
\newtheorem{proposition}[theorem]{Proposition}
\newtheorem{corollary}[theorem]{Corollary}
\newtheorem{claim}[theorem]{Claim}
\newtheorem{example}[theorem]{Example}
\newcommand{\EF}[1]{\ifstrempty{#1}{\textrm{\textup{EF}}}{\textrm{\textup{EF{$#1$}}}}}
\newcommand{\EQ}[1]{\ifstrempty{#1}{\textrm{\textup{EQ}}}{\textrm{\textup{EQ{$#1$}}}}}
\newcommand{\EQX}{\textrm{\textup{EQX}}}
\newcommand{\PoF}{\textup{PoF}}
\newcommand{\PoE}{\textup{PoE}}
\renewcommand{\ge}{\geqslant}
\renewcommand{\le}{\leqslant}
\colorlet{mygray}{gray!40}
\colorlet{mycolor}{blue!15}
\Crefname{claim}{Claim}{Claims}
\Crefname{corollary}{Corollary}{Corollaries}
\Crefname{definition}{Definition}{Definitions}
\Crefname{example}{Example}{Examples}
\Crefname{lemma}{Lemma}{Lemmas}
\Crefname{property}{Property}{Properties}
\Crefname{proposition}{Proposition}{Propositions}
\Crefname{remark}{Remark}{Remarks}
\Crefname{theorem}{Theorem}{Theorems}
\title{The Price of Equity with Binary Valuations\\ and Few Agent Types}
\author{
	\begin{tabular}{m{0.12\textwidth}m{0.12\textwidth}m{0.12\textwidth}m{0.12\textwidth}m{0.12\textwidth}m{0.12\textwidth}}
		\multicolumn{3}{c}{\textbf{Umang Bhaskar}} & \multicolumn{3}{c}{\textbf{Neeldhara Misra}}\\
		\multicolumn{3}{c}{\small{Tata Institute of Fundamental Research}} & \multicolumn{3}{c}{\small{Indian Institute of Technology Gandhinagar}}\\
		\multicolumn{3}{c}{\small{Mumbai, Maharashtra, India}} & \multicolumn{3}{c}{\small{Palaj, Gujarat, India}}\\		\multicolumn{3}{c}{\href{mailto:umang@tifr.res.in}{\small{\texttt{umang@tifr.res.in}}}} & \multicolumn{3}{c}{\href{mailto:neeldhara.m@iitgn.ac.in}{\small{\texttt{neeldhara.m@iitgn.ac.in}}}}\\
		&&&&&\\
		\multicolumn{3}{c}{\textbf{Aditi Sethia}} & \multicolumn{3}{c}{\textbf{Rohit Vaish}}\\
		\multicolumn{3}{c}{\small{Indian Institute of Technology Gandhinagar}} & \multicolumn{3}{c}{\small{Indian Institute of Technology Delhi}}\\
            \multicolumn{3}{c}{\small{Palaj, Gujarat, India}} & \multicolumn{3}{c}{\small{New Delhi, Delhi, India}}\\
		\multicolumn{3}{c}{\href{mailto:aditi.sethia@iitgn.ac.in}{\small{\texttt{aditi.sethia@iitgn.ac.in}}}} & \multicolumn{3}{c}{\href{mailto:rvaish@iitd.ac.in}{\small{\texttt{rvaish@iitd.ac.in}}}}\\
	\end{tabular}
}
\date{}
\begin{document}

\maketitle 

\begin{abstract}
In fair division problems, the notion of price of fairness measures the loss in welfare due to a fairness constraint. Prior work on the price of fairness has focused primarily on envy-freeness up to one good (EF1) as the fairness constraint, and on the utilitarian and egalitarian welfare measures. Our work instead focuses on the price of equitability up to one good (EQ1) (which we term \emph{price of equity}) and considers the broad class of \emph{generalized $p$-mean} welfare measures (which includes utilitarian, egalitarian, and Nash welfare as special cases). We derive fine-grained bounds on the price of equity in terms of the \emph{number of agent types} (i.e., the maximum number of agents with distinct valuations), which allows us to identify scenarios where the existing bounds in terms of the number of agents are overly pessimistic.

Our work focuses on the setting with binary additive valuations, and obtains upper and lower bounds on the price of equity for $p$-mean welfare for all $p \leqslant 1$. For any fixed $p$, our bounds are tight up to constant factors. A useful insight of our work is to identify the \emph{structure} of allocations that underlie the upper (respectively, the lower) bounds \emph{simultaneously} for all $p$-mean welfare measures, thus providing a unified structural understanding of price of fairness in this setting. This structural understanding, in fact, extends to the more general class of binary submodular (or matroid rank) valuations. We also show that, unlike binary additive valuations, for binary submodular valuations the number of agent types does not provide bounds on the price of equity.
\end{abstract}

\section{Introduction}
\label{sec:Intro}

Tradeoffs are inevitable when we pursue multiple optimization objectives that are typically not simultaneously achievable. Quantifying such tradeoffs is a fundamental problem in computation, game theory, and economics. Our focus in this work is on the ``price of fairness'' in the context of fair division problems, which is a notion that captures tradeoffs between \emph{fairness} and \emph{welfare}. 

Recall that a fair division instance in the discrete setting involves a set of $n$ \emph{agents} $N = \{1,2,\dots,n\}$, $m$ indivisible \emph{goods} $M = \{g_1, \ldots, g_m\}$, and $\mathcal{V} \coloneqq \{v_1,v_2,\dots,v_n\}$, a \emph{valuation profile} consisting of each agent's valuation of the goods. For any agent $i \in N$, its valuation function $v_i: 2^M \rightarrow \mathbb{N} \cup \{0\}$ specifies its numerical value (or \emph{utility}) for every subset of goods in $M$. We will assume that the valuations are normalised, that is, for all $i \in N$, $v_i(M) = W$, where $W$ is the normalisation constant. Our goal is to devise an \emph{allocation} of goods to agents; defined as an ordered partition\footnote{Unless otherwise specified, we implicitly assume that allocations are \emph{complete}, i.e., every good is assigned to some agent.} of the $m$ goods into $n$ ``bundles'', where the bundles are (possibly empty) subsets of $M$, and the convention is that the $i^{th}$ bundle in the partition is the set of goods assigned to the agent $i$. 

The \emph{welfare} of an allocation is a measure of the utility that the agents derive from the allocation. For additive valuations, the individual utility that an agent $i$ derives from their bundle $A_i$ is simply the sum of the values that they have for the goods in $A_i$. The overall welfare of an allocation $A$ is typically defined by aggregating individual utilities in various ways. Not surprisingly, there are several notions of welfare corresponding to different approaches to consolidating the individual utilities. For instance, the \emph{utilitarian social welfare} is the sum of utilities of agents under $A$; the \emph{egalitarian social welfare} is the lowest utility achieved by any agent with respect to $A$; and the \emph{Nash social welfare} is the geometric mean of utilities of agents under $A$. One may view all of these welfare notions as special cases of the $p$-mean welfare (where $p \in (-\infty,0) \cup (0,1]$), which is defined as the generalized $p$-mean of utilities of agents under $A$, i.e., $W_{p}(A) \coloneqq \left( \frac{1}{n} \sum_{i \in N} \bigl(v_i(A_i)\bigr)^p \right)^{\nicefrac{1}{p}}.$ Note that for $p>1$, the $p$-mean optimal allocation tends to concentrate the distribution among fewer agents (consider the simple case of two identical agents with additive valuations who value each of two goods at 1), which is contrary to the spirit of fairness. Hence we focus on $p \leqslant 1$.


A natural goal for a fair division problem is to obtain an allocation that maximizes the overall welfare. However, observe that optimizing exclusively for welfare can lead to undesirable allocations. To see this, consider an instance with additive valuations where all the valuation functions are the same, i.e., the utility of any good $g$ is the same for all agents in $N$. In this case, \emph{every} allocation has the same utilitarian welfare. 
So, when we only optimize for---in this example, utilitarian---welfare, we have no way of distinguishing between, say, the allocation that allocates all goods to one agent and one that distributes the goods more evenly among the agents. To remedy this, one is typically interested in allocations that not only maximize welfare, but are also ``fair''. 

There are several notions of fairness studied in the literature. Consider an allocation $A = (A_1,\dots,A_n)$. We say that $A$ is \emph{envy-free} (\EF{}) if for any pair of agents $i$ and $k$, we have that $i$ values $A_i$ at least as much as they value $A_k$, i.e., $v_i(A_i) \geqslant v_i(A_k)$; and \emph{equitable} (\EQ{}) if every pair of agents $i$ and $k$ value their respective bundles equally, i.e., $v_i(A_i) = v_k(A_k)$. While both these fairness goals are natural, they may not be achievable, such as in a trivial instance with one good valued positively by two agents. This has motivated several approximations, and in particular, the notions of \emph{envy-freeness up to one good} and \emph{equitability up to one good} have been widely studied. The allocation $A$ is \emph{envy-free up to one good} (\EF{1}) if for any pair of agents $i,k \in N$ such that $A_k \neq \emptyset$, there is a good $g \in A_k$ such that $v_i(A_i) \geqslant v_i(A_k \setminus \{g\})$. Analogously, $A$ is \emph{equitable up to one good} (\EQ{1}) if for any pair of agents $i,k \in N$ such that $A_k \neq \emptyset$, there is a good $g \in A_k$ such that $v_i(A_i) \geqslant v_k(A_k \setminus \{g\})$. For instances with additive valuations (and somewhat beyond), \EF{1} (and, similarly, \EQ{1}) allocations are guaranteed to exist.

The price of fairness is informally the cost of achieving a specific fairness notion, where the cost is viewed through the lens of a particular welfare concept. For a fairness notion $\mathcal{F}$ (such as \EQ{1} or \EF{1}) and a welfare notion $\mathcal{W}$ (such as egalitarian or utilitarian welfare), the price of fairness is the ``worst-case ratio'' 
of the maximum welfare (measured by $\mathcal{W}$) that can be obtained by \emph{any} allocation, to the maximum welfare that can be obtained among allocations that are fair according to $\mathcal{F}$. For example, it is known from the work of~\cite{CKM+19unreasonable} that under additive valuations, any allocation that maximizes the Nash social welfare satisfies EF1. Thus, the price of fairness of \EF{1} with respect to Nash social welfare is $1$. Further,~\cite{BBS20optimal} show that the price of EF1 with respect to utilitarian welfare is $O(\sqrt{n})$ for normalised subadditive valuations.

In this contribution, we focus on bounds for the price of fairness in the context of \EQ{1}, a notion that we will henceforth refer to as the \emph{price of equity} (PoE) when there is no ambiguity. Much of the existing literature on price of fairness analysis focuses on \emph{specific} welfare measures (e.g., utilitarian, egalitarian, and Nash social welfare). Our work deviates from this trend by analyzing the \emph{entire} family of generalized $p$-mean welfare measures (i.e., for \emph{all} $p \leqslant 1$); recall that this captures the notions of egalitarian, utilitarian, and Nash welfare as special cases. Our results therefore address the behavior of the price of equity for a wide spectrum of welfare notions.

Further, we obtain bounds in terms of the \emph{number of agent types} --- which we denote by $r$ --- rather than the total number of agents. The number of agent types of a fair division instance is the largest number of agents whose valuations are mutually distinct: in other words, it is the number of distinct valuation functions in the instance. Note that the number of agent types is potentially \emph{much} smaller than the total number of agents. The notion of agent types has been popular in the fair division literature for the reason that it is a natural quantification of the ``simplicity'' of the structure of the instance as given by the valuations. Note that the well-studied special case of identical valuations is equivalent to the class of instances for which $r = 1$, and therefore one might interpret parameterizing by $r$ as a smooth generalization of the case of identical valuations. For a representative selection of studies that focus on instances with a bounded number of agent types, we refer the reader to~\citep{BliemBN16,BouveretCEIP17,GargKM21,BranzeiLM16}.

We restrict ourselves to the setting of \emph{binary submodular} (also known as matroid rank) valuations. A valuation function $v_i$ is submodular if for any subsets of goods $S, S' \subseteq M$ such that $S \subseteq S'$, and for any good $g \not \in S'$, $v_i(S \cup g) - v_i(S) \geqslant v_i(S' \cup g) - v_i(S')$. That is, the marginal value of adding $g$ to $S$ is at least that of adding $g$ to a superset of $S$. Valuation $v_i$ is binary submodular if for any subset of goods $S \subseteq M$ and any good $g$, the marginal value $v_i(S \cup g) - v_i(S) \in \{0,1\}$. Binary submodular valuations are frequently studied in fair division and are considered to be a useful special case such as in allocating items under a budget, or with exogenous quotas~\citep{BCI+21finding,VZ23general}. It also provides algorithmic leverage: many computational questions of interest that are hard in general turn out to be tractable once we restrict our attention to binary submodular valuations. As an example, while it is NP-hard to compute a Nash social welfare maximizing allocation even for identical additive valuations~\citep{RoosR10}, such an allocation can be computed in polynomial time under binary submodular valuations in conjunction with other desirable properties such as strategyproofness, envy-freeness up to any good, and ex-ante envy-freeness~\citep{BEF21fair}.

A strict subset of binary submodular valuations is the class of \emph{binary additive} valuations---this is a subclass of additive valuations wherein each value $v_i(g)$ is either $0$ or $1$. 
Binary additive valuations provide a simple way for agents to express their preferences as they naturally align with the idea of agents ``approving'' or ``rejecting'' goods. These are also widely studied in the literature on fair division, for example, see~\citep{O20multi,KSV20almost,BEF21fair,ABF+21maximum,AR21almost}. In the case of voting too, binary additive valuations play a role.~\cite{DS15maximizing} consider the complexity of maximizing Nash social welfare when scores inherent in classical voting procedures are used to associate utilities with the agents' preferences, and find that the case of approval ballots --- which happen to lead to binary additive valuations --- are a tractable subclass. 

\subsection*{Our Contributions and Techniques}

\renewcommand{\arraystretch}{2}
\begin{table*}[t]
\centering
\begin{tabular}{|rr|cc|cc|}
 \hline
 \multicolumn{2}{|c|}{\multirow{2}{*}{\textbf{\PoE{}}}} & \multicolumn{2}{c|}{\textbf{Agent types ($r$)}} \\
 \cline{3-4}
 & & Lower bound & Upper bound\\
\hline
  %
  %
  \multicolumn{2}{|c|}{Utilitarian welfare ($p=1$)} & \cellcolor{mycolor}$r-1$ & \cellcolor{mycolor}$r$ \\
  %
  \hline
  %
  \multicolumn{2}{|c|}{Nash welfare ($p = 0$)} & \cellcolor{mycolor}$\frac{(r-1)/e}{\ln (r-1)}$ & \cellcolor{mycolor}$\frac{(r-1)}{\ln (r-1)/e}$ \\
  %
  \hline
  %
  \multicolumn{2}{|c|}{Egalitarian welfare ($p\rightarrow-\infty$)} & 1 & 1~\citep{SCD23equitability} \\
  \hline
  %
  \multicolumn{2}{|c|}{$p \in (0,1)$} & \cellcolor{mycolor}$p(r-1)/e$ & \cellcolor{mycolor}$2r - 1$ \\
  %
  \hline
  \multicolumn{2}{|c|}{$p \in (-1,0)$} & \cellcolor{mycolor} $2^{1/p} (r-1)^{1/(1-p)}$ & \cellcolor{mycolor}  $2^{-1/p} (-p)^{1/p(1-p)} (r-1)^{1/(1-p)} $  \\
  %
  \hline
   \multicolumn{2}{|c|}{$p \leqslant -1$} & \cellcolor{mycolor} $2^{1/p} (r-1)^{1/(1-p)}$ & \cellcolor{mycolor}  $2 (r-1)^{1/(1-p)} $  \\
  \hline
\end{tabular}
\vspace{0.1in}
\caption{Summary of results for the price of equity (\PoE{}). Each cell indicates either the lower or the upper bound  (columns) on \PoE{} for a specific welfare measure (rows) as a function of the number of \emph{agent types} $r$. Our contributions are highlighted by shaded boxes. The lower bounds are from~\Cref{thm:lbdgoodstwo}, while the upper bounds are shown in~\Cref{thm:ubrank} and~\Cref{thm:pof-upperbounds}. ~\Cref{sec:figures} in the appendix presents the upper and lower bounds graphically as a function of $r$, for $p=1$, $p=0$, $p=-1$, and $p=-10$.}
\label{tab:Results}
\end{table*}

We now turn to a discussion of our findings (see~\Cref{tab:Results} for a summary of our results for binary additive valuations). Given an instance of fair division with binary submodular valuations, let $A^\star$ be an allocation that maximizes the Nash social welfare. It is implicit from the results of~\cite{BCI+21finding} that $A^\star$ also has maximum $p$-mean welfare for all $p \leqslant 1$ (for details, refer to~\Cref{subsec:Opt-pmean-allocations}). We show an analogous result for \EQ{1} allocations, by demonstrating that there exists an \EQ{1} allocation (which we call $B$, or the \emph{truncated allocation}) that maximizes the $p$-mean welfare for all $p$. To this end, in allocation $A^\star$, let $i$ be an agent with minimum value, and let $\ell = v_i(A_i^\star)$. If the allocation is not already \EQ{1}, then we reallocate ``excess'' goods from the bundles of agents who value their bundles at more than $\ell + 1$ and give them to agent $i$. Notice that agent $i$ must have marginal value $0$ for all these excess goods, otherwise this reallocation would improve the Nash welfare. It turns out that this allocation $B$ is \EQ{1} and also has --- among \EQ{1} allocations --- the highest $p$-mean welfare. 

\begin{restatable}[]{theorem}{bisoptimal}
    \label{thm:b-is-optimal}
    For any $p \in \mathbb{R} \cup \{- \infty\}$  and binary submodular valuations, the $p$-mean welfare of the truncated allocation $B$ is at least that of any other \EQ{1} allocation.
\end{restatable}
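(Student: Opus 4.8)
The plan is to show that the utility vector of $B$ dominates that of every \EQ{1} allocation under \emph{every} nondecreasing transform, which then yields optimality for all $p$ at once. Write $b=(v_k(B_k))_k$ and, for an arbitrary \EQ{1} allocation $C$, $c=(v_k(C_k))_k$; let $S_j(x)$ denote the sum of the $j$ smallest coordinates of $x$. The first step is a rigidity observation about \EQ{1} profiles. Fix $C$ and let $u=\min_k v_k(C_k)$. Binary submodularity gives $v_k(C_k\setminus\{g\})\ge v_k(C_k)-1$ for every good $g$, so applying the \EQ{1} guarantee to a minimum-value agent yields $u\ge \min_{g\in C_k} v_k(C_k\setminus\{g\})\ge v_k(C_k)-1$; hence $v_k(C_k)\in\{u,u+1\}$ for every $k$. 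Since $A^\star$ maximizes $p$-mean welfare for all $p\le 1$, in particular egalitarian welfare, its minimum value $\ell$ is the largest attainable, so $u\le\ell$. The same rigidity applies to $B$, which by construction is \EQ{1} with values in $\{\ell,\ell+1\}$ and whose $\ell$-valued agents are exactly those of $A^\star$; write $n_\ell$ for their number.

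The second step reduces the theorem to a single inequality. I claim it suffices to prove $\sum_k\phi(b_k)\ge\sum_k\phi(c_k)$ for every nondecreasing $\phi$, together with $\min_k b_k\ge\min_k c_k$. Indeed, choosing $\phi(t)=t^p$ for $p\in(0,1]$, $\phi=\log$ for $p=0$, and $\phi(t)=-t^p$ for $p<0$, and then post-composing with the monotone map $x\mapsto x^{1/p}$ (decreasing when $p<0$, which flips the inequality back the right way), gives $W_p(B)\ge W_p(C)$ for every finite $p$, while the minimum inequality handles $p=-\infty$. Degenerate values of $0$ when $p\le 0$ make both welfares $0$ and are treated separately. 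Note this argument also covers $p>1$ via convex increasing $\phi$, explaining the full range of $p$ in the statement.

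The third step proves the master inequality by a short case analysis on $u$. If $u<\ell$, then every coordinate of $c$ is at most $u+1\le\ell$ while every coordinate of $b$ is at least $\ell$, so sorting both ascending gives $b_j\ge c_j$ coordinatewise and the inequality holds for any nondecreasing $\phi$ (and clearly $\min_k b_k\ge\min_k c_k$). If $u=\ell$, both $b$ and $c$ take values in $\{\ell,\ell+1\}$, so $\min_k b_k\ge\ell=\min_k c_k$; writing $m_\ell$ for the number of $\ell$-valued agents in $C$, a direct computation gives $\sum_k\phi(b_k)-\sum_k\phi(c_k)=(m_\ell-n_\ell)\bigl(\phi(\ell+1)-\phi(\ell)\bigr)$, which is nonnegative provided $m_\ell\ge n_\ell$. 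This last inequality is the crux, and it is exactly where the Lorenz-domination of $A^\star$ is needed: since $A^\star$ maximizes $S_{n_\ell}$ over all allocations we have $S_{n_\ell}(A^\star)=n_\ell\ell\ge S_{n_\ell}(C)$, whereas $m_\ell<n_\ell$ would force the $n_\ell$ smallest coordinates of $c$ to include $n_\ell-m_\ell$ copies of $\ell+1$ and hence $S_{n_\ell}(C)=n_\ell\ell+(n_\ell-m_\ell)>n_\ell\ell$, a contradiction. Thus $m_\ell\ge n_\ell$ and the master inequality follows.

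I expect the main substantive obstacle to be precisely this count comparison $m_\ell\ge n_\ell$, i.e.\ converting the global optimality of $A^\star$ into the local statement that no \EQ{1} allocation can have fewer minimum-valued agents than $B$; everything else is either the rigidity of \EQ{1} profiles or bookkeeping. A secondary, more routine point is verifying that $B$ itself is \EQ{1} with the asserted two-value profile, which is guaranteed by the construction (each truncated bundle is trimmed to an independent set of size $\ell+1$, every good of which is critical, while the displaced goods have zero marginal value to the receiving agent by Nash-optimality); I would record this at the outset but treat it as the routine part.
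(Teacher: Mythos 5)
Your proposal is correct and follows essentially the same route as the paper's proof: establish that every \EQ{1} allocation under binary submodular valuations has a two-valued utility profile $\{u,u+1\}$ with $u \le \ell$, reduce the comparison with $B$ to counting minimum-value agents, and then rule out an \EQ{1} allocation with fewer $\ell$-valued agents than $B$ by appealing to the optimality of $A^\star$ over \emph{all} allocations. The one place you should tighten is the crux itself: you justify $m_\ell \ge n_\ell$ by asserting that $A^\star$ maximizes $S_{n_\ell}$ (Lorenz domination) over all allocations, which is a genuinely stronger property than anything established in the paper --- leximin optimality does \emph{not} imply maximization of partial sums for general valuations, and for matroid rank valuations this is a nontrivial theorem in its own right. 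Fortunately you do not need it: since your comparison vector $c$ is two-valued in $\{\ell,\ell+1\}$, the assumption $m_\ell < n_\ell$ means the sorted vector of $C$ strictly lexicographically dominates that of $A^\star$ (they agree on the first $m_\ell$ coordinates, all equal to $\ell$, and then $C$ has $\ell+1$ where $A^\star$ has $\ell$), directly contradicting the fact that $A^\star$ is leximin (\Cref{prop:nsw-is-optimal}); this is exactly how the paper closes the argument. With that substitution your proof is complete and matches the paper's.
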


Notice that together with the result of~\cite{BCI+21finding},~\Cref{thm:b-is-optimal} allows us to focus only on the maximum Nash social welfare allocation $A^\star$ and the truncated allocation $B$ to obtain upper bounds on the PoE for all $p \leqslant 1$ simultaneously. 

We now describe our bounds on the PoE for binary additive valuations. Our lower bounds are based on varying parameters in a single basic instance. The parameters are $r$, the number of agent types, and $W$, the normalisation constant for the agents. Given $r$ and $W$, the instance has $m = rW$ goods, divided into $r$ groups of $W$ goods each. The groups are $M_1$, $M_2$, $\ldots$, $M_r$. There are $W+1$ agents who value all the goods in $M_1$ at $1$ each and everything else at $0$. Further, for each $2 \leqslant i \leqslant r$, we have exactly one agent who values the goods in $M_i$ and nothing else.

To summarize, we have $W+1$ agents of the first type, who have a common interest in $W$ goods. Any allocation must leave one of these agents with zero value. Beyond these coveted goods, each of the remaining goods is valued by exactly one agent. A welfare maximizing allocation will 
allocate each good in $M_2 \cup \cdots \cup M_r$ to the unique agent who values it; however, an \EQ{1} allocation is constrained by the fact that an agent of the first type must get value $0$.\footnote{For $p \leqslant 0$, we use the standard convention that allocation $A$ is a $p$-mean optimal allocation if (a) it maximizes number of agents with positive value, and (b) among all allocations that satisfy (a), maximizes the $p$-mean welfare when restricted to agents with positive value.} It turns out that using this family of instances, we can obtain the following bounds. 

\begin{restatable}[\textbf{PoE lower bounds}]{theorem}{lbdgoodstwo}
    Let $s := r - 1$. The price of equity for binary additive valuations is at least:
    \begin{enumerate}
        \item $s$, for $p=1$,
        \item $\frac{p}{e}s$, for $p \in (0,1)$,
        \item $\frac{s}{e \ln s}$, for $p = 0$,
        \item $2^{1/p} s^{1/(1-p)}$, for $p < 0$.
    \end{enumerate}
    \label{thm:lbdgoodstwo}
\end{restatable}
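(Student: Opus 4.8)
The plan is to analyze the single family of instances described just before the theorem statement, compute the optimal (unconstrained) $p$-mean welfare and the best $p$-mean welfare achievable by any \EQ{1} allocation, and take their ratio. Let $s = r-1$ and fix the normalisation constant $W$; recall the instance has $W+1$ agents of the first type (all valuing the $W$ goods in $M_1$), and for each $i \in \{2,\dots,r\}$ a unique agent valuing the $W$ goods in $M_i$. First I would pin down an (essentially) optimal unconstrained allocation: give each of the $s$ singleton-type agents all $W$ of their goods (value $W$ each), distribute the $W$ goods of $M_1$ among the $W+1$ first-type agents so as to maximize $p$-mean welfare among them, and note that at least one first-type agent must receive value $0$. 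I would then pin down the best \EQ{1} allocation: since one first-type agent is forced to value its bundle at $0$, \EQ{1} forces every other agent's bundle to have value at most $1$, so the best \EQ{1} allocation gives each agent value $0$ or $1$.

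The core of the argument is then a clean ratio computation, done separately for each of the four regimes of $p$, since the convention for $p \le 0$ (first maximize the number of positively-valued agents, then optimize the $p$-mean on that support) changes what ``optimal'' means. For $p=1$ the utilitarian welfare of the optimal allocation is roughly $(s+1)W$ against an \EQ{1} optimum of about $W + (\text{number of agents})$ worth of value spread one-per-agent, and choosing $W$ large (or comparing leading terms) yields the ratio $s$. For $p \in (0,1)$ and $p=0$ I would write out $W_p$ explicitly for both allocations: the optimal allocation has $s$ agents at value $W$ and a batch of first-type agents sharing $M_1$, while the \EQ{1} optimum has all its positively-valued agents at value exactly $1$; taking the ratio and then optimizing over the free parameter $W$ (differentiating $W_p^{\text{OPT}}/W_p^{\EQ{1}}$ in $W$, or substituting the extremal $W$) produces the stated bounds $\tfrac{p}{e}s$ and $\tfrac{s}{e\ln s}$. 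For $p<0$ the dominant contribution to $\sum_i v_i(A_i)^p$ comes from the smallest-valued agents, so the $p$-mean is controlled by the minimum positive value; here the ratio reduces to a comparison of minima and counts, and substituting the optimal $W$ yields $2^{1/p} s^{1/(1-p)}$.

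The main obstacle, and the step I would be most careful about, is the optimization over the free normalisation parameter $W$: the price of equity is a supremum over instances, so for each $p$ I must choose $W$ (as a function of $s$) that maximizes the welfare ratio, and then verify that the resulting expression simplifies to the clean closed form claimed. This requires handling the $p\to 0$ limit correctly (where $W_p$ becomes the geometric mean, i.e. the Nash welfare, and the $\ln s$ factor arises from balancing $W^{1/(W+1)}$-type terms against the count of agents), and for negative $p$ correctly identifying that the $p$-mean is essentially governed by the least-valued positive agent rather than by the whole profile. A secondary point to check is that the claimed \EQ{1} optimum really is \EQ{1} and really is optimal among \EQ{1} allocations for the relevant welfare notion; since \Cref{thm:b-is-optimal} already establishes that the truncated allocation simultaneously maximizes $p$-mean welfare over all \EQ{1} allocations for every $p$, I can invoke it to certify that I am comparing against the true \EQ{1} optimum rather than an arbitrary \EQ{1} allocation, which keeps each of the four ratio computations self-contained.
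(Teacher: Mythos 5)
Your proposal is correct and follows essentially the same route as the paper's own proof: the same single instance family, the same observation that one first-type agent must receive value $0$ so that every \EQ{1} allocation has all values at most $1$, the same closed-form ratio $\left(\frac{W + s W^p}{W+s}\right)^{1/p}$, and the same final step of choosing $W$ as a function of $s$ in each regime of $p$ (the paper takes $W = s^2$ for $p=1$, $W = ps$ for $p \in (0,1)$, $W = s/\ln s$ for $p=0$, and $W = s^{1/(1-p)}$ for $p<0$). Your use of \Cref{thm:b-is-optimal} to certify the \EQ{1} optimum is a valid alternative to the paper's direct argument, and the remaining work is exactly the parameter substitutions you flagged.
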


We now turn to the upper bounds for binary additive valuations. It turns out that the PoE for utilitarian welfare is bounded by the \emph{rank} of the instance, where the rank is simply the rank of the $n \times m$ matrix $\{v_{i}(g_j)\}_{1\leqslant i \leqslant n; 1\leqslant j \leqslant m}$. Observe that the rank is a lower bound for the number of agent types, so this result also implies an upper bound of $r$ on the PoE. In fact, the rank could be logarithmic in the number of agent types, and hence this is a significantly tighter bound than the number of agent types.

To obtain this upper bound, in allocation $B$ (which, as shown in \Cref{thm:b-is-optimal}, maximizes the utilitarian welfare among \EQ{1} allocations) we 
show that the number of wasted goods is at most $m(1-\frac{1}{k})$, where $k$ is the rank of the instance. This implies the theorem.

\begin{restatable}[\textbf{Utilitarian PoE upper bound}]{theorem}{ubrank}
    Under binary additive valuations and utilitarian welfare as the objective, the price of equity is at most the rank of the instance.
    \label{thm:ubrank}
\end{restatable}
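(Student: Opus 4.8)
The plan is to work throughout with the un-normalised utilitarian welfare $\sum_i v_i(A_i)$, since the factor $\nicefrac{1}{n}$ in $W_1$ cancels in the price-of-equity ratio. First I would discard every good that no agent values: such a good contributes $0$ to every agent, so removing it changes neither the welfare-maximising value, nor the best \EQ{1} welfare, nor the rank of the instance (it only deletes a zero column). Hence I may assume every good is valued by at least one agent, in which case the maximum utilitarian welfare is exactly $m$ (assign each good to a valuer), and this value is attained by $A^\star$, because $A^\star$ is $p$-mean-optimal for all $p \leqslant 1$, in particular for $p=1$. By \Cref{thm:b-is-optimal} (taken at $p=1$), the best utilitarian welfare among \EQ{1} allocations equals $\sum_i v_i(B_i)$. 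It therefore suffices to prove $\sum_i v_i(B_i) \geqslant m/k$, where $k$ is the rank, since then the ratio is at most $m/(m/k)=k$.

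Next I would control the goods that $B$ wastes. As $A^\star$ attains welfare $m$, every good is assigned in $A^\star$ to an agent who values it at $1$; in particular the minimum-value agent $i$ holds exactly its $\ell$ valued goods. The passage from $A^\star$ to $B$ only moves ``excess'' goods onto $i$, and these are precisely goods that $i$ values at $0$. Consequently the only goods that are wasted in $B$ (i.e.\ assigned to an agent valuing them $0$) are these moved goods in $i$'s bundle, and each of them lies outside the set of $W$ goods that $i$ values. Thus the number of wasted goods is at most $m-W$, giving $\sum_i v_i(B_i) \geqslant m-(m-W)=W$.

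It remains to show $W \geqslant m/k$, equivalently $m \leqslant kW$, and this is the crux of the argument. I would proceed linear-algebraically. Fix $k$ agents whose rows form a basis of the row space of the valuation matrix $\{v_i(g_j)\}$. I claim that every (valued) good $g$ is valued by at least one of these $k$ basis agents: otherwise the column of $g$ restricted to the basis rows is identically $0$, and since every row $v_i$ is a linear combination of the basis rows, evaluating at $g$ forces $v_i(g)=0$ for every agent $i$, contradicting that $g$ is valued. Hence the set of all goods is contained in the union of the $W$-element sets of goods valued by the $k$ basis agents, whence $m \leqslant kW$.

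Combining the two estimates yields $\sum_i v_i(B_i) \geqslant W \geqslant m/k$, so the price of equity is at most $k$, the rank. The main obstacle is the third step: converting the algebraic rank into the combinatorial covering bound $m \leqslant kW$ via the observation that a valued column cannot have all its basis entries zero. (This also recovers the ``wasted goods $\leqslant m(1-\nicefrac1k)$'' formulation, since $m-W \leqslant m - m/k$.) Finally, the lower-bound instance, for which $m=rW$, the rank is $k=r$, and $\sum_i v_i(B_i)=W+(r-1)$, shows that both inequalities are essentially tight as $W \to \infty$.
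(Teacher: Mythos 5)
Your proof is correct, and it rests on the same key lemma as the paper's: a good whose column is zero on the $k$ basis rows must be valued by nobody (every row is a linear combination of basis rows), which together with normalisation gives $m \leqslant kW$, i.e., $W \geqslant m/k$. Where you genuinely differ is in how the wasted goods of the \EQ{1} allocation are controlled. The paper works with an arbitrary utilitarian-optimal \EQ{1} allocation and argues by contradiction: if more than $m(1-\frac{1}{k})$ goods were wasted, a \emph{poor} agent (one at the lower utility level $\ell$) could not value any wasted good---otherwise reassigning such a good to that agent would raise welfare while preserving \EQ{1}---forcing $W < m/k$ and hence, via the rank argument, the existence of a good valued by no one. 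You instead exploit the explicit truncation construction of $B$ from \Cref{subsec:Truncated_Allocation}: since $A^\star$ is utilitarian optimal (hence non-wasteful), the only wasted goods in $B$ are those moved onto the minimum-value agent $i_l$, all of which $i_l$ values at $0$, giving wasted $\leqslant m - W$ directly. Both routes then conclude via $\sum_i v_i(B_i) \geqslant W \geqslant m/k$. Your version is more direct (no contradiction), and it actually needs less than the full strength of \Cref{thm:b-is-optimal}: for an upper bound on the \PoE{} it suffices that the truncated allocation is \EQ{1} with welfare at least $m/k$, not that it is optimal among \EQ{1} allocations. The paper's exchange argument buys self-containedness: it applies to any welfare-maximal \EQ{1} allocation without invoking the truncation machinery or the utilitarian optimality of $A^\star$.
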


For other values of $p$, we obtain the following upper bounds.

\begin{restatable}[\textbf{PoE upper bounds}]{theorem}{pofupperbounds}
    Let $s := r - 1$. The price of equity for binary additive valuations is at most
    \begin{enumerate}
        \item $1+s$ for $p=1$
        \item $1 + 2s$ for $p \in (0,1)$
        \item $\frac{s}{\ln (s/e)}$ for $p=0$ (i.e., the Nash social welfare)
        \item $s^{1/(1-p)} 2^{-1/p} (-1/p)^{1/p(p-1)}$ for $p \in (-1,0)$
        \item $2 s^{1/(1-p)}$ for $p \leqslant -1$
    \end{enumerate}
    \label{thm:pof-upperbounds}
\end{restatable}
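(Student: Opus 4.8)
The plan is to exploit that both relevant allocations are simultaneously optimal across all $p$. By the result of \citep{BCI+21finding} the Nash-optimal allocation $A^\star$ maximizes $W_p$ over all allocations for every $p \le 1$, and by \Cref{thm:b-is-optimal} the truncated allocation $B$ maximizes $W_p$ among all \EQ{1} allocations for every $p$. Hence for a fixed instance and a fixed $p$ the price of equity is exactly the ratio $W_p(A^\star)/W_p(B)$, so it suffices to bound this single ratio; I would do so instance-by-instance and then take the worst case over instances.

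First I would record the structure of the two utility profiles. In $B$ every agent's value lies in $\{\ell, \ell+1\}$, where $\ell := \min_i v_i(A^\star_i)$ (the minimum is preserved by the truncation, since agent $i$ only receives goods of zero marginal value); thus $W_p(B) \ge \ell$ when $\ell \ge 1$, and $W_p(B) = 1$ over the positive agents when $\ell = 0$. For $A^\star$, a standard exchange argument for Nash-optimal binary allocations shows that agents of the same type have values differing by at most one; since the minimum-value type keeps all its agents in $\{\ell, \ell+1\}$, at most $s = r-1$ types contain an agent of value exceeding $\ell+1$. The normalization $v_i(M) = W$ is crucial here: it caps every individual value by $W$ and, because all agents of a type approve the same $W$ goods, caps the total value \emph{within} any single type by $W$. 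Consequently the ``rich'' value in $A^\star$ is at most $W$, and a rich type split among $q$ agents has per-agent value at most $W/q$, so the worst case concentrates each rich type into a single agent, leaving at most $s$ rich agents.

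With this in hand I would reduce, via a convexity/majorization argument, to the canonical two-level profile underlying the lower-bound family of \Cref{thm:lbdgoodstwo}: a block of poor agents all at the low value, together with at most $s$ rich agents at a common value $v \le W$, where normalization ties the number of poor agents and $v$ to the same scale $W$. The price of equity then reduces to maximizing a one-parameter expression of the form $\bigl((W + s\,v^{p})/(W+s)\bigr)^{1/p}$ over the free scale. For $p=1$ this is exactly the total-value (rank) bound of \Cref{thm:ubrank}, giving $1+s$. For $p \in (0,1)$ and for $p=0$ the maximizer is $W \sim s$ up to a $p$-dependent constant, yielding the linear bound $1+2s$ and the Nash bound $s/\ln(s/e)$ respectively. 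For $p<0$ the stationarity condition takes the form $1 = (1-p)\,W^{p} + s\,(-p)\,W^{p-1}$, whose dominant balance gives $W \sim (s(-p))^{1/(1-p)}$ and hence the $s^{1/(1-p)}$ scaling, with the displayed constants $2^{-1/p}(-1/p)^{1/p(p-1)}$ (for $p \in (-1,0)$) and $2$ (for $p \le -1$) coming from evaluating the expression at this maximizer.

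The main obstacle is the regime $p<0$. There the $p$-mean is governed by the smallest utilities, so the advantage enjoyed by $A^\star$ is only indirect: large rich values barely raise $\sum_i u_i^{p}$ and instead help only by shrinking the fraction of poor agents, and pinning down how far this can go requires the normalization bound on per-type totals rather than the cruder rank bound, which on its own is too lossy for negative $p$. Two further technical points demand care: justifying the reduction to the single canonical profile (namely, that spreading a rich type across several agents, or using several poor types, never improves the ratio), and handling the $\ell=0$ versus $\ell \ge 1$ cases together with the positive-agent convention so that the optimization is clean in every regime. Extracting the exact constant in the $p \in (-1,0)$ bound is the most calculation-heavy step.
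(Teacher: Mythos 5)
Your overall route is the same as the paper's: restrict attention to $A^\star$ versus the truncated allocation $B$ (justified by \Cref{thm:b-is-optimal} and the simultaneous optimality of $A^\star$), argue that the worst case is the two-level profile of the lower-bound family from \Cref{thm:lbdgoodstwo}, and then optimize a one-parameter expression. The genuine gap is in the step you attribute to normalization: ``normalization ties the number of poor agents and $v$ to the same scale $W$.'' Normalization only yields the \emph{upper} bounds you state (each agent's value is at most $W$, and the total value within a type is at most $W$). The indispensable \emph{lower} bound --- that the non-truncated (poor) block collectively holds at least $W$ goods in $A^\star$, equivalently $\lambda \cdot \sum_{k \leqslant \rho} n_k \geqslant W$ as in \eqref{align:lambda-sum-2} --- does not follow from normalization; it is a consequence of the Nash optimality of $A^\star$. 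This is exactly \Cref{claim:items-in-first-rho-types} in the paper, proved by an exchange argument: if the poor types held fewer than $W$ goods, some good approved by the lowest type would be held by an agent of a rich type, and a two-step transfer (any good from a value-$(\geqslant \lambda+1)$ agent of that rich type to the holder, then the approved good to the minimum-value agent) strictly increases the Nash welfare, a contradiction. Without this anchor your one-parameter family is unmoored: nothing in your sketch excludes a utility profile with one poor agent at value $1$ and $s$ rich agents at value $W$, whose ratio $\bigl((1+sW^p)/(1+s)\bigr)^{1/p}$ is unbounded in $W$ already for $p=1$. So, as written, the argument cannot establish the theorem.

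A second, related problem: your canonical expression $\bigl((W+s\,v^p)/(W+s)\bigr)^{1/p}$ with $v \leqslant W$ is \emph{exactly} the ratio attained by the lower-bound instance, so a valid reduction to it would make the upper and lower bounds coincide exactly rather than up to constants --- a red flag. The actual reduction (\Cref{lem:pof-ub-base}) cannot avoid the $\pm 1$ mismatch between the poor level $\ell$ and the truncation level $\lambda = \ell+1$: because the poor agents' contribution sits in both numerator and denominator of the ratio, one needs the fraction inequalities (\Cref{prop:fraction-inequality-neg} and \Cref{prop:fraction-inequality}) together with \Cref{corollary:averaging-subset} and \Cref{claim:averaging-values}, and for $p>0$ this forces the $2^p$ factor in the lemma (whence the bound $1+2s$), while for $p<0$ it produces the $2^{-1/p}$-type constants in items 4 and 5. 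Indeed your sketch is internally inconsistent on this point: applying Jensen to your displayed expression gives $1+s$ for $p \in (0,1)$, yet you claim it yields $1+2s$. Repairing the proposal therefore requires (i) proving the anchor via the Nash exchange argument above, and (ii) carrying out the deferred ``convexity/majorization'' step --- which is the bulk of the paper's \Cref{lem:pof-ub-base} --- with the $\lambda$-versus-$\ell$ discrepancy tracked explicitly, since that is where every constant in the statement comes from.
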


We note that for any fixed $p$, the lower bounds (\Cref{thm:lbdgoodstwo}) and upper bounds (\Cref{thm:pof-upperbounds}) are 
within a constant factor of each other.

Conceptually, for the proof of the upper bounds, we show that the worst case for the PoE is in fact the family of instances used for showing our lower bounds in~\Cref{thm:lbdgoodstwo}. In particular, any instance can be transformed into one belonging to the lower bound family, without improving the PoE. Note that for the PoE, we can focus on the allocations $A^\star$ and $B$ irrespective of the $p$-mean welfare measure, since these maximize the $p$-mean welfare for all $p \leqslant 1$ simultaneously. For a given instance, let $l$ be the minimum value of any agent in $A^\star$. We divide the agent types into two groups: types for which every agent has value at most $l+1$ in $A^\star$, and types for which an agent has value $> l+1$. Note that for a type in the first group, each agent of this type retains its value in $B$, while for a type in the second group, the value of each agent of this type is truncated to $l+1$. Our proof shows that agents in the first group must have total value at least $W$, as in the lower bound example. We also use $W$ as an upper bound for the total value of each agent type in the second group. Then letting $\alpha$ be the fraction of agents in the first group, and optimizing over $\alpha$, gives us the required upper bounds.

We then consider the PoE for binary additive valuations with the additional structure that both the rows and the columns are normalised. That is, each agent values exactly $W$ goods, and each good is valued by exactly $W_c$ agents. For such \emph{doubly normalised} instances, we show the PoE is 1.

\begin{restatable}[]{theorem}{doublynormalised}
For doubly normalised instances under binary additive valuations, the PoE for the $p$-mean welfare is 1 for all $p\le 1$.
\label{thm:doubly-normalised}
\end{restatable}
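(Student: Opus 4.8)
The plan is to reduce the claim to the construction of a single balanced allocation and then invoke the structural results already established. By the discussion preceding \Cref{thm:b-is-optimal}, the Nash-optimal allocation $A^\star$ maximizes the $p$-mean welfare $W_p$ over \emph{all} allocations for every $p \le 1$, while \Cref{thm:b-is-optimal} guarantees that the truncated allocation $B$ maximizes $W_p$ over all \EQ{1} allocations. Hence the \PoE{} equals $W_p(A^\star)/W_p(B) \ge 1$, and to prove equality it suffices to exhibit a \emph{single} \EQ{1} allocation whose $p$-mean welfare already matches the global optimum $W_p(A^\star)$. I will produce such an allocation $C$ with two properties: (i) every good is assigned to an agent who values it, and (ii) every agent's utility lies in $\{\lfloor m/n\rfloor, \lceil m/n\rceil\}$. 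Double counting the $1$-valued agent--good incidences gives $nW = m W_c$, so the common fractional target $m/n = W/W_c$ is exactly the per-agent load we aim for.

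\emph{Existence of $C$.} First I would write down the uniform fractional allocation that splits each good equally among the $W_c$ agents who value it, assigning a share $1/W_c$ along each $1$-valued edge. Each agent then receives total fractional value $W \cdot (1/W_c) = m/n$, so this fractional allocation is perfectly equitable. To integralize it, I would set up a flow network with a source feeding one unit into each good, unit-capacity arcs from each good to the agents who value it, and arcs from each agent to the sink carrying a load constrained to the integer interval $[\lfloor m/n\rfloor, \lceil m/n\rceil]$. The uniform fractional allocation is a feasible fractional flow saturating all goods, and since the constraint matrix of a bipartite flow network is totally unimodular, an \emph{integral} feasible flow of the same value exists; reading off the saturated good-to-agent arcs yields the desired allocation $C$. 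This balanced-rounding step is the main obstacle: I must argue that the lower bound $\lfloor m/n\rfloor$ at every agent can be met simultaneously with saturating all goods, which is precisely where the feasibility of the uniform fractional point (guaranteeing a fractional flow respecting both bounds) is essential.

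\emph{Verifying optimality and \EQ{1}.} Allocation $C$ has utilitarian welfare $m$, the maximum possible since every good is valued by at least one agent, and its utility vector is as balanced as an integer vector of sum $m$ can be. Because $W_p$ for $p \le 1$ is a symmetric, coordinatewise-nondecreasing, Schur-concave function of the utility profile, the balanced maximum-sum vector of $C$ maximizes $W_p$ among \emph{all} achievable profiles; for $p \le 0$ one checks that $C$ additionally maximizes the number of agents with positive value (all $n$ agents when $m \ge n$, and exactly $m$ agents each with value $1$ when $m < n$), so $C$ is $p$-mean optimal under the convention in the footnote as well. Thus $W_p(C) = W_p(A^\star)$. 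Finally, $C$ is \EQ{1}: the utilities of any two agents differ by at most $1$, and since every good in a bundle is valued by its holder, removing any single good from a nonempty bundle $A_k$ decreases $v_k(A_k)$ by exactly $1$, so $v_i(A_i) \ge v_k(A_k) - 1 = v_k(A_k \setminus \{g\})$. Combining, $W_p(B) \ge W_p(C) = W_p(A^\star) \ge W_p(B)$, so all three coincide and the \PoE{} is $1$.
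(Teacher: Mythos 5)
Your proof is correct and takes essentially the same route as the paper's main proof: both start from the uniform fractional allocation that gives a $1/W_c$ share of each good to every agent who values it, round it to an integral allocation using total unimodularity of the bipartite incidence matrix (your flow network and the paper's polytope decomposition are the same argument in different clothing), and obtain a non-wasteful allocation in which every agent gets $\lfloor W/W_c\rfloor$ or $\lceil W/W_c\rceil$ goods, which is simultaneously \EQ{1} and globally welfare-optimal. The only minor difference is the final optimality step, where you argue directly via Schur-concavity and monotonicity of the $p$-mean (handling the $p\leqslant 0$ convention explicitly), while the paper routes through its leximin machinery (\Cref{prop:leximin-if-nearly-equal}, \Cref{prop:NashOpt_Simultaneously_Maximizes_p-mean}, \Cref{prop:nsw-is-optimal}); both are valid.
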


Finally, we obtain bounds on the PoE for binary submodular valuations. For identical valuations, it follows from similar results for \EF{1} that the PoE is 1.

\begin{restatable}[]{proposition}{mrvpofonetype}
    When all agents have identical binary submodular valuations, the PoE is 1 for p-mean welfare measure for all $p \le 1$.
    \label{prop:mrv-pof-1}
\end{restatable}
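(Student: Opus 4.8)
The plan is to exhibit a single allocation that is simultaneously $p$-mean optimal among \emph{all} allocations and is $\EQ{1}$; since $\EQ{1}$ allocations form a subset of all allocations, this immediately forces the ratio defining the $\PoE{}$ to equal $1$ for every $p \le 1$. The natural candidate is a maximum Nash social welfare allocation $A^\star$, which by the (already cited) results of \cite{BCI+21finding} maximizes the $p$-mean welfare for all $p \le 1$ simultaneously; it therefore suffices to argue that $A^\star$ can be taken to be $\EQ{1}$.

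The key observation is that for \emph{identical} valuations the notions $\EF{1}$ and $\EQ{1}$ coincide. Indeed, writing $v := v_i = v_k$ for the common valuation, the $\EF{1}$ condition for the pair $(i,k)$ asks for a good $g \in A_k$ with $v_i(A_i) \ge v_i(A_k \setminus \{g\})$, which is literally $v(A_i) \ge v(A_k \setminus \{g\})$ --- exactly the $\EQ{1}$ condition for the same pair. First I would record this equivalence, so that it remains only to produce a $p$-mean optimal allocation that is $\EF{1}$.

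For binary submodular valuations such an allocation is already available: a maximum Nash welfare allocation is envy-free up to any good ($\EFX$), and hence $\EF{1}$, by the results of \cite{BEF21fair}. Combining the three ingredients --- $A^\star$ is $\EFX$ (hence $\EF{1}$), $A^\star$ is $p$-mean optimal for all $p \le 1$, and $\EF{1}$ equals $\EQ{1}$ under identical valuations --- yields that $A^\star$ is an $\EQ{1}$ allocation attaining the global $p$-mean optimum, so the $\PoE{}$ is $1$. Equivalently, one may invoke \Cref{thm:b-is-optimal}: since $A^\star$ is itself $\EQ{1}$, the truncated allocation $B$ satisfies $W_p(B) \ge W_p(A^\star)$, while $W_p(B) \le W_p(A^\star)$ trivially, forcing equality.

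The only point requiring care --- and the step I expect to be the main obstacle if one wants a self-contained proof rather than citing \cite{BEF21fair} --- is establishing that the Nash-optimal $A^\star$ is $\EQ{1}$ directly. This can be done by a short exchange argument using only that the common rank function $r$ satisfies the matroid span and exchange properties. For a pair with $r(A_k) > r(A_i)$ one shows: (i) the gap is at most $1$, since otherwise picking $g \in A_k \setminus \mathrm{span}(A_i)$ and moving it to $i$ raises $v(A_i)$ by $1$ while lowering $v(A_k)$ by at most $1$ --- a transfer from the richer to the poorer agent that strictly increases the Nash welfare; and (ii) $A_k$ must contain a coloop $g^\star$ (a good with $r(A_k \setminus \{g^\star\}) = r(A_k) - 1$), since a coloop-free $A_k$ of strictly larger rank would again admit a good $g \in A_k \setminus \mathrm{span}(A_i)$ whose removal leaves $r(A_k)$ unchanged, so moving it to $i$ strictly improves the Nash welfare. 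Together (i) and (ii) give $v(A_k \setminus \{g^\star\}) = r(A_k) - 1 = r(A_i) = v(A_i)$, verifying $\EQ{1}$ for that pair, while pairs with $r(A_k) \le r(A_i)$ are immediate. Handling the standard zero-utility convention in these transfer steps --- a transfer that turns a $0$ into a positive value increases the count of positively-valued agents and hence improves the Nash welfare under the convention --- completes the argument.
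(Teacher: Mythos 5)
Your proof is correct, and while its skeleton matches the paper's (both routes pass through a maximum Nash welfare allocation $A^\star$, its simultaneous $p$-mean optimality via \Cref{prop:NashOpt_Simultaneously_Maximizes_p-mean}, and the coincidence of \EF{1} and \EQ{1} under identical valuations), your central argument is genuinely different. The paper first \emph{cleans} $A^\star$ to a partial allocation $\hat{A}$, applies \Cref{prop:mrv-ef1} (Corollary 3.8 of \citealp{BCI+21finding}: clean, utilitarian-optimal, $\Phi$-minimizing allocations are \EF{1}) to get that $\hat{A}$ is \EQ{1}, and then adds the wasted goods back. Your primary route cites the \EFX{} guarantee of \cite{BEF21fair} instead; one caveat is that this guarantee is for clean (non-redundant) allocations, and a \emph{complete} maximum Nash welfare allocation need not be \EFX{} under identical matroid-rank valuations --- a richer bundle containing a wasted good violates \EFX{} (e.g., goods $a=(1,0)$, $b=(0,1)$, and two copies $c,d$ of $a$, with two agents and the Nash-optimal allocation $(\{a,b,d\},\{c\})$) --- so the citation buys you only the \emph{existence} of some maximum Nash welfare allocation that is \EF{1}, which is all you need but should be stated as such. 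Your backup exchange argument is the real payoff: it proves the stronger statement that \emph{every} complete Nash-optimal allocation is \EQ{1}, using only span/coloop properties of the common rank function, and it correctly handles the zero-value convention. Besides being self-contained, it supplies precisely the justification missing from the paper's own final step, which asserts that adding back zero-marginal-value goods preserves \EQ{1}: as a general inference this fails (adding $c=a+b$ to the independent bundle $\{a,b\}$ creates a circuit with no coloop, so no single removal lowers its value), and it is your observation --- that in a Nash-optimal allocation a coloop-free richer bundle $A_k$ would have every good transferable without loss, forcing $A_k \subseteq \mathrm{span}(A_i)$ for a poorer agent $i$ and contradicting $r(A_k) > r(A_i)$ --- that rules this configuration out.
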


However, this is the best that can be obtained, in the sense that even with just \emph{two} agent types, the PoE for utilitarian welfare is at least $n/6$, where $n$ is the number of agents. Hence we cannot obtain bounds on the PoE that depend on the number of agent types, as we did for binary additive valuations.

\begin{restatable}[]{theorem}{pofmrvtwotypes}
    The PoE for utilitarian welfare when agents have binary submodular valuations is at least $n/6$ (where $n$ is the number of agents), even when there are just two types of agents.
    \label{thm:pofmrvtwotypes}
\end{restatable}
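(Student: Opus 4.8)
The plan is to exhibit a family of two-type instances, parameterized by the normalisation constant $W$, whose price of equity grows linearly in the number of agents. The essential idea---and the feature that separates binary submodular from binary additive valuations---is that a single matroid-rank valuation can simultaneously serve many mutually non-conflicting agents (via a large uniform matroid), so a high unconstrained welfare is attainable with only \emph{two} distinct valuation functions, while a scarce contested resource still forces the \EQ{1} welfare to collapse. Concretely, I would take $m = W^2 + W$ goods split into a \emph{contested} set $D$ with $|D| = W$ and a set of $W^2$ \emph{abundant} goods. The \emph{blocker} type consists of $W+1$ agents, each with the additive (free-matroid) valuation $v(S) = |S \cap D|$ and value $0$ outside $D$, so $v(M) = W$. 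The \emph{worker} type consists of $W$ agents, each with the uniform-matroid valuation $v(S) = \min(|S|, W)$ over all $m$ goods, so again $v(M) = W$. Both are binary submodular and normalised to $W$, and there are $n = 2W+1$ agents of just two types.

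First I would compute the unconstrained optimum. Since every valuation is a rank function, $\sum_i v_i(A_i) \le \sum_i |A_i| = m$, and this bound is attained by giving each worker a private block of $W$ abundant goods (value $W$ each) and all of $D$ to a single blocker (value $W$); hence $\OPT = m = W^2 + W$. Next I would bound the \EQ{1} welfare. By pigeonhole the $W+1$ blockers share only the $W$ goods of $D$, so in \emph{every} allocation at least one blocker receives no good of $D$ and therefore has value $0$. Fixing such a zero-valued agent $i$, the \EQ{1} condition applied to every other agent $k$ with $A_k \neq \emptyset$ forces $\min_{g \in A_k} v_k(A_k \setminus \{g\}) \le v_i(A_i) = 0$, and for both valuation types this drop-by-one quantity equals $0$ only when the agent's own value is at most $1$. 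Thus every agent has value at most $1$ in any \EQ{1} allocation, so the \EQ{1} welfare is at most $n - 1 = 2W$, achieved by giving a single good to each worker and to $W$ of the blockers. Dividing, the price of equity of this instance is $(W^2 + W)/(2W) = (W+1)/2 = (n+1)/4 \ge n/6$.

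The main obstacle is the \EQ{1} collapse step, which requires care about the matroid structure: for a worker given more than $W$ goods, removing a single good does \emph{not} reduce its rank, so the ``up to one good'' relaxation provides no slack against the zero-valued blocker---this is exactly what pins every value down to at most $1$, and it must be checked for both types (for blockers one uses that holding two or more goods of $D$ leaves value $\ge 1$ after any single removal). The remaining verifications---normalisation $v(M) = W$ for both types, submodularity of the two rank functions, and tightness of the two welfare computations---are routine. A closing remark would note that taking the number of workers equal to $W$ optimises the ratio, and that this linear-in-$n$ behaviour is genuinely unavoidable: in contrast to the additive setting of \Cref{thm:ubrank}, bounding the number of agent types here places no bound on the price of equity.
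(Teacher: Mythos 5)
Your proof is correct, and it follows the paper's high-level strategy---an explicit two-type instance in which matroid-rank structure lets a single valuation function serve linearly many agents, so the unconstrained optimum is $\Theta(n^2)$ while every \EQ{1} allocation has welfare $O(n)$---but the construction and the \EQ{1} collapse differ in instructive ways. The paper takes $2k$ agents and $k+1$ groups of $k$ goods, each group a copy of the standard basis of $\mathbb{R}^k$; type-1 agents have rank value on the first group only, and type-2 agents have rank value across all groups. There the $k$ type-1 agents exactly match the $k$ goods they care about, so no agent is forced to zero value, and the \EQ{1} bound comes from a two-step analysis (type-1 values at most $1$, hence type-2 values at most $2$), giving \EQ{1} welfare at most $3k$ and ratio $(k+1)/3 = (n+2)/6$. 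You instead import the pigeonhole device from the paper's binary-additive lower bound (\Cref{thm:lbdgoodstwo}): with $W+1$ blockers contending for only $W$ contested goods, some agent has value $0$ in \emph{every} allocation, and \EQ{1} then collapses all values to at most $1$ in a single step. Your version buys a cleaner collapse argument and a slightly better constant, $(n+1)/4$ versus the paper's $(n+2)/6$, and it replaces the paper's linear-matroid presentation with a uniform matroid for the workers; the paper's version shows, in turn, that the linear-in-$n$ gap arises even when no agent is forced down to zero value. One cosmetic point: since allocations in this paper are assumed complete, your \EQ{1}-achieving allocation should also assign the $W^2 - W$ leftover abundant goods, e.g.\ to the zero-value blocker (this preserves \EQ{1} since those goods are worthless to it); this is immaterial for the theorem, as only the upper bound of $2W$ on \EQ{1} welfare is needed for the ratio.
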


Nonetheless, we do obtain an upper bound of $2n$ on the PoE for binary submodular valuations.

\begin{restatable}[]{theorem}{pofmrvubn}
    For binary submodular valuations and any $p \leqslant 1$, the PoE for $p$-mean welfare is at most $2n$.
    \label{thm:pofmrvubn}
\end{restatable}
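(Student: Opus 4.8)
The plan is to leverage the two canonical allocations already identified: the Nash-optimal allocation $A^\star$, which by \cite{BCI+21finding} maximizes $W_p$ among all allocations for every $p \le 1$, and the truncated allocation $B$, which by \Cref{thm:b-is-optimal} maximizes $W_p$ among all \EQ{1} allocations for every $p$. Consequently the price of equity for any $p \le 1$ equals the single ratio $W_p(A^\star)/W_p(B)$, so it suffices to bound this. I would take $A^\star$ to be clean (every allocated good has marginal value $1$ for its owner), and let $\ell := \min_i v_i(A_i^\star)$, with $i^\star$ a corresponding minimum-value agent. Since $B$ is obtained from $A^\star$ by pushing every agent of value exceeding $\ell+1$ down to exactly $\ell+1$ (moving its excess goods to $i^\star$, for which they have zero marginal value), every bundle of $B$ has value in $\{\ell, \ell+1\}$; as the $p$-mean always lies between the minimum and maximum bundle values, this gives $W_p(B) \ge \ell$ for all $p$, and $W_p(A^\star) \le \max_i v_i(A_i^\star) \le W$ by normalisation.

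The heart of the argument is the structural inequality $W \le n(\ell+1)$, which is where normalisation and submodularity enter. The key observation (already implicit in the construction of $B$) is that for every agent $j$ with $v_j(A_j^\star) \ge \ell+2$, \emph{each} good of $A_j^\star$ has marginal value $0$ for $i^\star$ relative to $A_{i^\star}^\star$; otherwise transferring such a good to $i^\star$ would strictly increase the Nash welfare (or, when $\ell=0$, the number of positively-valued agents), contradicting the optimality of $A^\star$. By submodularity this forces $v_{i^\star}\bigl(A_{i^\star}^\star \cup \bigcup_{j :\, v_j(A_j^\star) \ge \ell+2} A_j^\star\bigr) = \ell$. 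Covering $M$ by this set together with the at most $n-1$ remaining bundles, each of which has size at most $\ell+1$ (by cleanliness), subadditivity of $v_{i^\star}$ yields $W = v_{i^\star}(M) \le \ell + (n-1)(\ell+1) \le n(\ell+1)$.

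Finally I would combine the two bounds. For $\ell \ge 1$ the argument is uniform in $p$: the ratio is at most $W/\ell \le n(\ell+1)/\ell \le 2n$, using $\ell+1 \le 2\ell$. The case $\ell = 0$ is where the conventions for $p \le 0$ must be handled with care: here the structural inequality gives $W \le n$, in $B$ every positively-valued agent has value exactly $1$ while the positive agents of $A^\star$ and $B$ coincide, and a short computation (respectively, the positive-agent convention for $p \le 0$) bounds the ratio by $W \le n$. I expect the main obstacle to be the clean verification of the structural inequality $W \le n(\ell+1)$ --- in particular arguing the marginal-value-$0$ claim for \emph{all} goods of the high-value bundles rather than only the excess ones moved during truncation, and correctly invoking the positive-agent-count convention when $\ell = 0$. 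Once this inequality is in place, the elementary mean inequalities deliver the $2n$ bound simultaneously for all $p \le 1$.
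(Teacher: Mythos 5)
Your overall strategy is the same as the paper's: reduce the PoE for every $p \leqslant 1$ to the single ratio $\mathcal{W}_p(A^\star)/\mathcal{W}_p(B)$, lower-bound $\mathcal{W}_p(B)$ by the minimum value $\ell$ in $A^\star$, upper-bound $\mathcal{W}_p(A^\star)$ by $W$, and prove a structural inequality of the form $W \leqslant n(\ell+1)$ by a transfer argument against Nash optimality (the paper phrases this as: every positive-value agent in $B$ has value at least $W/(2n)$, obtained from $W - \nu \leqslant (n-1)(\nu+1)$ for the minimum positive value $\nu$). However, there is a genuine gap in your proof of the structural inequality: the claim that one may ``take $A^\star$ to be clean'' is false. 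Under binary submodular valuations a \emph{complete} allocation generally cannot be clean --- for instance, with two agents whose common valuation is $v(S) = \min(|S|,2)$ and ten goods, every complete allocation wastes goods --- and your covering step uses cleanliness essentially: you bound $v_{i^\star}(A_j^\star)$ for each low-value agent $j$ by $|A_j^\star| \leqslant \ell+1$, which collapses when $A_j^\star$ contains wasted goods and can be arbitrarily large. Nor can you simply pass to the clean partial allocation, because you need $v_{i^\star}(M) = W$ over the \emph{full} good set, and the wasted goods would then be uncovered.

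The repair stays within your framework but requires one more idea, namely extending the marginal-value-zero claim beyond the high-value bundles: any good that is \emph{wasted} on some agent $j \neq i^\star$ must also have zero marginal value for $i^\star$ relative to $A_{i^\star}^\star$, since otherwise transferring it to $i^\star$ costs $j$ nothing and strictly increases the Nash welfare (or, when $\ell = 0$, the number of positively-valued agents), contradicting optimality of $A^\star$. All wasted goods can then be absorbed, together with the high-value bundles, into the set whose union with $A_{i^\star}^\star$ still has value $\ell$ for $i^\star$; the remaining goods are the non-wasted goods of the low-value bundles, at most $v_j(A_j^\star) \leqslant \ell+1$ per agent, and subadditivity gives $W \leqslant \ell + (n-1)(\ell+1) \leqslant n(\ell+1)$ as you intended. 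Once repaired, your argument is essentially a covering reformulation of the paper's counting argument (which counts the at least $W-\nu$ goods valued by the minimum positive-value agent that sit in other bundles, and bounds each other agent's share of them by $\nu+1$); your handling of the remaining steps, including the $\ell = 0$ case via the positive-agent convention, matches the paper's conclusion.
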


\subsection*{Related Work}

The notion of \emph{price of fairness} was proposed in the works of \cite{BFT11price} and~\cite{CKK+12efficiency}. These formulations were inspired from similar notions in game theory---specifically, \emph{price of stability} and \emph{price of anarchy}---that capture the loss in social welfare due to strategic behavior.\footnote{Price of anarchy was defined by~\cite{KP09worst} and subsequently studied in the notable work of \cite{RT02bad}, while price of stability was defined by~\cite{ADK+08price}.} \cite{CKK+12efficiency} studied the price of fairness for divisible and indivisible resources under three fairness notions: \emph{proportionality}~\citep{S48division}, \emph{envy-freeness}~\citep{GS58puzzle,F67resource}, and \emph{equitability}~\citep{DS61cut}. For indivisible resources, they defined price of fairness only with respect to those instances that admit some allocation satisfying the fairness criterion.

Recently, \cite{BLM+21price} studied price of fairness for indivisible goods for fairness notions whose existence is guaranteed; in particular, they studied \emph{envy-freeness up to one good} (\EF{1}) and \emph{maximum Nash welfare} allocations.\footnote{The \EF{1} notion was formulated by~\cite{B11combinatorial} although subsequently it was observed that an algorithm of~\cite{LMM+04approximately} achieves this guarantee for monotone valuations. The Nash social welfare function was originally proposed in the context of the bargaining problem~\citep{N50bargaining} and subsequently studied for resource allocation problems by~\cite{EG59consensus}.} In a similar vein,~\cite{SCD21connections} studied price of fairness for allocating indivisible \emph{chores} for different relaxations of envy-freeness and maximin share. 
Perhaps closest to our work is a recent paper by~\cite{SCD23equitability}. This work studies price of equity and price of equitability for any item (\EQX{}) for indivisible goods as well as indivisible chores under utilitarian and egalitarian welfare. The valuations are assumed to be additive but not necessarily binary. For indivisible goods, the price of equity is shown to be between $n-1$ and $3n$, where $n$ is the number of agents, while for egalitarian welfare, a tight bound of $1$ is provided.

\section{Preliminaries}
\label{sec:Preliminaries}

\paragraph{\textbf{Problem instance.}} An instance of the fair division problem is specified by a tuple $\langle N, M, \mathcal{V} \rangle$, where $N = \{1,2,.\dots,n\}$ is a set of $n \in \mathbb{N}$ \emph{agents}, $M = \{g_1, \ldots, g_m\}$ is a set of $m$ indivisible \emph{goods}, and $\mathcal{V} \coloneqq \{v_1,v_2,\dots,v_n\}$ is the \emph{valuation profile} consisting of each agent's valuation function. For any agent $i \in N$, its valuation function $v_i: 2^M \rightarrow \mathbb{N} \cup \{0\}$ specifies its numerical value (or \emph{utility}) for every subset of goods in $M$. For simplicity, for a valuation function $v$, we will denote $v(\{g\})$ as $v(g)$.

Agents $i$ and $j$ are said to be of the same \emph{type} if their valuation functions are identical, i.e., if for every subset of goods $S \subseteq M$, $v_i(s) = v_j(S)$. We will use $r$ to denote the number of distinct agent types in an instance. Further, an instance is \emph{normalised} if for some constant $W$, $v_i(M) = W$ for all agents $i$. Our work focuses on instances with normalised valuations, since there are trivial instances where the price of equity for any $p$-mean welfare for $p \in \mathbb{R}$ is large without this assumption (e.g., the simple instance with 2 agents and $k$ goods, where agent 1 has value 1 for the first good and zero for the others, and agent 2 has value $1$ for all goods, has price of equity $k/3$ for the utilitarian welfare). 

\paragraph{\textbf{Classes of valuation functions.}} A valuation function $v$ is:
\begin{itemize}
    \item \emph{monotone} if for any two subsets of goods $S$ and $T$ such that $S \subseteq T$, we have $v(S) \leqslant v(T)$,
    \item \emph{monotone submodular} (or simply submodular) if it is monotone and for any two subsets of goods $S$ and $T$ such that $S \subseteq T$ and any good $g \in M \setminus T$, we have $v(S \cup \{g\}) - v(S) \geqslant v(T \cup \{g\}) - v(T)$,
    \item \emph{additive} if for any subset of goods $S \subseteq M$, we have $v(S) = \sum_{g \in S} v(g)$,
    \item \emph{binary submodular} (or \emph{matroid rank}) if it is submodular and for any subset $S \subseteq M$ and any good $g \in M \setminus S$, we have $v(S \cup \{g\}) - v(S) \in \{0,1\}$, and
    \item \emph{binary additive} if it is additive and for any good $g \in M$, $v(\{g\}) \in \{0,1\}$.
\end{itemize}
The containment relations between these classes are as follows:
$$\textup{Binary additive} \subseteq \textup{Additive} \subseteq \textup{Submodular} \subseteq \textup{Monotone}$$
and
$$\textup{Binary additive} \subseteq \textup{Binary submodular} \subseteq \textup{Submodular} \subseteq \textup{Monotone}.$$
The domains of additive and binary submodular valuations are incomparable in the sense that an instance belonging to one class may not belong to the other.

We will primarily focus on binary submodular valuations in \Cref{sec:optimal-p-mean} and~\ref{sec:binary-submodular}, and on binary additive valuations in \Cref{sec:bav-lb},~\ref{sec:bav-ub}, and~\ref{sec:doubly-normalised}. 

\paragraph{\textbf{Allocation.}} A \emph{bundle} refers to any (possibly empty) subset of goods. An allocation $A \coloneqq (A_1,\dots,A_n)$ is a partition of the set of goods $M$ into $n$ bundles; here, $A_i$ denotes the bundle assigned to agent $i$.

Given an allocation $A$, we say that agent $i$ values good $g$ if $v_i(A_i \cup \{g\}) > v_i(A_i \setminus \{g\})$. Thus if $g \in A_i$, then removing $g$ decreases $i$'s value. Else, assigning $g$ to $A_i$ increases $i$'s value. For additive valuations, specifying an allocation is unnecessary, and we say $i$ values $g$ if $v_i(\{g\}) = 1$.

Further, for an allocation $A$, we say a good $g \in A_i$ is \emph{wasted} if $v_i(A_i \setminus \{g\}) = v_i(A_i)$, i.e., if removing it does not change the value of agent $i$. For additive valuations, this implies that $v_i(g) = 0$. We say an allocation (possibly partial) is \emph{wasteful} if some good is wasted (and is \emph{non-wasteful} or \emph{clean} otherwise). If $A$ is a clean allocation, then for binary submodular valuations, for each agent $i$, $v_i(A_i) = |A_i|$.

\paragraph{\textbf{Fairness notions.}} An allocation $A = (A_1,\dots,A_n)$ is said to be:
\begin{itemize}
    \item \emph{envy-free} (\EF{}) if for any pair of agents $i,k \in N$, we have $v_i(A_i) \geqslant v_i(A_k)$~\citep{GS58puzzle,F67resource}, 
    \item \emph{envy-free up to one good} (\EF{1}) if for any pair of agents $i,k \in N$ such that $A_k \neq \emptyset$, there is a good $g \in A_k$ such that $v_i(A_i) \geqslant v_i(A_k \setminus \{g\})$~\citep{B11combinatorial,LMM+04approximately},
    \item \emph{equitable} (\EQ{}) if for any pair of agents $i,k \in N$, we have $v_i(A_i) = v_k(A_k)$~\citep{DS61cut}, and
    \item \emph{equitable up to one good} (\EQ{1}) if for any pair of agents $i,k \in N$ such that $A_k \neq \emptyset$, there is a good $g \in A_k$ such that $v_i(A_i) \geqslant v_k(A_k \setminus \{g\})$~\citep{GMT14near,FSV+19equitable}.
\end{itemize}

Note that if all agents are identical (i.e., they are of the same type), then the notions of \EF{1} and \EQ{1} (and similarly, \EF{} and \EQ{}) coincide.

\paragraph{\textbf{Pareto optimality.}} An allocation $A = (A_1,\dots,A_n)$ is said to be Pareto dominated by another allocation $B = (B_1,\dots,B_n)$ if for every agent $i \in N$, $v_i(B_i) \geqslant v_i(A_i)$ and for some agent $k \in N$, $v_k(B_k) > v_k(A_k)$. A \emph{Pareto optimal} allocation is one that is not Pareto dominated by any other allocation.


\paragraph{\textbf{Welfare measures.}} We will now discuss various welfare measures associated with an allocation $A$.
\begin{itemize}
    \item \emph{Utilitarian social welfare} is the sum of utilities of agents under $A$, i.e., $\mathcal{W}^\texttt{util}(A) \coloneqq \sum_{i \in N} v_i(A_i)$.
    \item \emph{Egalitarian social welfare} is the utility of the least-happy agent under $A$, i.e., $\mathcal{W}^\texttt{egal}(A) \coloneqq \min_{i \in N} v_i(A_i)$.
    \item \emph{Nash social welfare} is the geometric mean of utilities of agents under $A$, i.e., $\mathcal{W}^\texttt{Nash}(A) \coloneqq \left( \Pi_{i \in N} v_i(A_i) \right)^{\nicefrac{1}{n}}$, and
    \item for any $p \in \mathbb{R}$, the \emph{$p$-mean welfare} is the generalized $p$-mean of utilities of agents under $A$, i.e., $\mathcal{W}_{p}(A) \coloneqq \left( \frac{1}{n} \sum_{i \in N} v_i^p(A_i) \right)^{\nicefrac{1}{p}}$.
\end{itemize}

For any $p \in \mathbb{R}$, the $p$-mean welfare is a strictly increasing, symmetric function of the agent values. It can be observed that utilitarian, egalitarian, and Nash welfare are all special cases of $p$-mean welfare for $p=1$, $p \rightarrow -\infty$, and $p \rightarrow 0$, respectively~\citep{M04fair}.

Given an instance, it may be that in every allocation, some agent gets zero value. In this case, we need to redefine the $p$-mean welfare. We fix a largest subset of agents $S$ that can simultaneously get positive value in an allocation, and then define $\mathcal{W}_p(A) = \left( \frac{1}{|S|} \sum_{i \in S} v_i^p(A_i) \right)^{\nicefrac{1}{p}}$. This follows prior work on the Nash welfare, e.g.,~\citep{CKM+19unreasonable,BCI+21finding}.

A \emph{leximin} allocation is one which maximizes the minimum utility, then subject to that, it maximizes the second minimum, and so on. Thus a leximin allocation also maximizes the egalitarian welfare. 

\paragraph{\textbf{Price of fairness.}} Given a fairness notion $\mathcal{F}$ (e.g., \EF{1}, \EQ{1}) and a $p$-mean welfare measure, the price of fairness of $\mathcal{F}$ with respect to a welfare measure $\mathcal{W}_p$ is the supremum over all fair division instances with $n$ agents and $m$ goods of the ratio of the maximum welfare (according to $\mathcal{W}_p$) of any allocation and the maximum welfare of any allocation that satisfies $\mathcal{F}$. 

Formally, let $\mathcal{I}_{n,m}$ denote the set of all fair division instances with $n$ agents and $m$ items. Let $\mathcal{A}(I)$ denote the set of all allocations in the instance $I$, and further let $\mathcal{A}_{\mathcal{F}}(I)$ denote the set of all allocations in the instance $I$ that satisfy the fairness notion $\mathcal{F}$.


Then, the price of fairness (\PoF{}) of the fairness notion $\mathcal{F}$ with respect to the welfare measure $\mathcal{W}_p$ is defined as:
$$\PoF(\mathcal{F},\mathcal{W}_p) \coloneqq \sup_{I \in \mathcal{I}_{n,m}} \frac{ \max_{A^* \in \mathcal{A}(I)} \mathcal{W}_p(A^*)}{  \max_{B \in \mathcal{A}_{\mathcal{F}}(I)} \mathcal{W}_p(B)}.$$

As indicated earlier, throughout this paper we will focus on equitability up to one good (\EQ{1}) as the fairness notion of choice  (i.e., $\mathcal{F}$ is $\EQ{1}$). For notational simplicity, we will just write $\PoF$ instead of $\PoF(\mathcal{F},\mathcal{W})$ whenever the welfare measure $\mathcal{W}$ is clear from context, and we will refer to this ratio as the price of equity (PoE) whenever the fairness notion in question is \EQ{1}.


\subsection{Some properties of $p$-mean welfare}

We state here some basic properties of the $p$-mean welfare that will be useful in due course. 

\begin{restatable}[]{claim}{pmeanconcave}
    For all $p < 1$, the $p$-mean welfare is a concave function of the agent valuations.
    \label{claim:pmean-is-concave}
\end{restatable}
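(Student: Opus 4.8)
The plan is to fix an allocation $A$ and regard the $p$-mean welfare purely as a function of the vector of agent utilities $x = (x_1,\dots,x_n)$ with $x_i := v_i(A_i)$; ``concavity in the agent valuations'' then means concavity of
$$f(x) := \left( \frac{1}{n}\sum_{i=1}^n x_i^p \right)^{1/p}$$
on the relevant domain (the nonnegative orthant when $0<p<1$, and the strictly positive orthant when $p<0$, consistent with the convention that only agents with positive value are counted). The first observation I would record is that $f$ is positively homogeneous of degree $1$, i.e.\ $f(tx)=t\,f(x)$ for every $t>0$, which is immediate from the exponents. The point of isolating homogeneity is that a positively homogeneous degree-$1$ function is concave \emph{if and only if} it is superadditive, i.e.\ $f(x+y)\ge f(x)+f(y)$: given superadditivity, for $\lambda\in[0,1]$ we get $f(\lambda x+(1-\lambda)y)\ge f(\lambda x)+f((1-\lambda)y)=\lambda f(x)+(1-\lambda)f(y)$.

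So the real content reduces to superadditivity, which is precisely Minkowski's inequality in the regime $p\le 1$ (where, for $p<1$ and nonnegative arguments, the usual $\ell_p$ triangle inequality reverses). I would either cite this classical inequality (Hardy--Littlewood--P\'olya) or derive it from H\"older's inequality with the conjugate exponents associated to $p$. An equally short self-contained alternative is the Hessian route: for $p\neq 0$ (and $p<1$) a direct computation gives, for every direction $z$,
$$z^\top \nabla^2 f(x)\, z = c\,(p-1)\,h^{1/p-2}\left[\, h\sum_i x_i^{p-2}z_i^2 \;-\; \Big(\sum_i x_i^{p-1}z_i\Big)^2 \,\right],$$
where $h=\sum_i x_i^p>0$ and $c=n^{-1/p}>0$; since $p-1<0$, negative semidefiniteness of the Hessian is equivalent to the bracket being nonnegative, which is exactly the Cauchy--Schwarz inequality applied to the vectors $(x_i^{p/2})_i$ and $(x_i^{(p-2)/2}z_i)_i$.

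Finally I would dispose of the edge cases. The value $p=0$ corresponds to the geometric mean $\left(\prod_i x_i\right)^{1/n}$, whose concavity on the positive orthant follows from concavity of $\log$ together with AM--GM (or as the $p\to 0$ limit of the above, since a pointwise limit of concave functions is concave); the egalitarian case $p\to-\infty$ is $\min_i x_i$, concave as a minimum of linear functions. The boundary where some $x_i=0$ (possible only when $0<p<1$) is handled by extending concavity from the interior to the closure by continuity. I expect the main bookkeeping obstacle to be the sign tracking for $p<0$: one must keep the domain strictly positive and check that the factor $(p-1)<0$, combined with the orientation of Cauchy--Schwarz (respectively Minkowski), still yields concavity rather than convexity. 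The homogeneity-plus-Minkowski framing is attractive precisely because it makes the argument uniform across all $p<1$ and sidesteps these case-by-case sign errors.
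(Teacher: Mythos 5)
Your proposal is correct, and it in fact contains the paper's own proof as one of its two routes: the ``Hessian alternative'' you sketch is, term for term, the argument the paper reproduces from Ahle --- the same factorization of the Hessian as a nonnegative scalar $(1-p)f^{1-2p}$ times a matrix $A$, and the same application of Cauchy--Schwarz to the vectors $\bigl(z_i x_i^{(p-2)/2}\bigr)_i$ and $\bigl(x_i^{p/2}\bigr)_i$ to show $z^\top A z \le 0$; your quadratic-form identity agrees with the paper's after expanding the diagonal entries of $A$. Your primary route, however, is genuinely different from the paper: you observe that $f$ is positively homogeneous of degree $1$, so concavity is equivalent to superadditivity, and superadditivity is exactly the reverse Minkowski inequality valid for $p<1$. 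What this framing buys is uniformity and robustness: a single classical inequality covers $0<p<1$ and $p<0$ with no second-order sign bookkeeping, it does not require differentiability, and (together with your limit or AM--GM argument for the geometric mean) it extends to $p=0$ and to the boundary where some $x_i=0$. What the Hessian route buys is self-containedness --- no appeal to a named classical inequality --- at the price of being confined to $p\neq 0$ and to interior points; indeed the paper's reproduction leaves the $p=0$ case and the boundary implicit, whereas you dispose of them explicitly (including $p\to-\infty$ as a minimum of linear functions), which is a small but genuine gain in completeness over the paper's write-up.
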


This proof was shown by~\cite{Ahle20Concave}. We reproduce it here for completeness.

\begin{proof}
    
    Consider $f(x)=\left(\Sigma_i x_i^p\right)^{1 / p}$. The Hessian matrix $H$ is then given by:
    $$
    H_{i j} =(1-p) f^{1-2 p} A \quad \text { where } A_{i j}= \begin{cases}-x_i^{p-2} \Sigma_{k \neq i} x_k^p & \text { if } i=j \\ x_i^{p-1} x_j^{p-1} & \text { if } i \neq j\end{cases}
    $$
    For $p \le 1$, the matrix $H$ is negative semidefinite, since the initial coefficient $(1-p) f^{1-2p} \ge 0$, and for any vector $v$,
    $$
    v^T A v ~=~ \left(\sum_{i=1}^n v_i x_i^{p-1}\right)^2-\sum_{i=1}^n v_i^2 x_i^{p-2} \sum_j x_i^p ~ \le ~ 0
    $$
    where the last inequality follows by applying the Cauchy-Schwarz inequality\footnote{For any vectors $a$, $b$, $(\sum_i a_i b_i)^2 \le (\sum_i a_i^2) \times (\sum_i b_i^2)$} to $\left(v_i x^{p / 2-1}\right) \cdot\left(x_i^{p / 2}\right)$. Hence, the function $f$ is concave.
\end{proof}

\begin{corollary}
Given a vector of values for $n$ agents $x \in \mathbb{R}_+^n$ and a subset $S \subseteq N$ of agents, let $x'$ be the vector where $x_i' = x_i$ if $i \not \in S$, and $x_i' = \sum_{j \in S} x_j/|S|$ if $i \in S$. Then for all $p \leqslant 1$, 

\[
\left(\frac{1}{n} \sum_{i=1}^n (x_i)^p\right)^{1/p} \leqslant \left(\frac{1}{n} \sum_{i=1}^n (x_i')^p\right)^{1/p} \, ,
\]

\noindent i.e., averaging out the value for a subset of agents weakly increases the $p$-mean welfare.
\label{corollary:averaging-subset}
\end{corollary}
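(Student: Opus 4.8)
The plan is to deduce this directly from the concavity established in \Cref{claim:pmean-is-concave}, combined with the symmetry of the $p$-mean welfare, via a standard symmetrization (Jensen) argument. First I would observe that the map $x \mapsto \left(\frac{1}{n}\sum_i x_i^p\right)^{1/p}$ differs from the function $f(x) = \left(\sum_i x_i^p\right)^{1/p}$ of \Cref{claim:pmean-is-concave} only by the positive multiplicative constant $n^{-1/p}$; hence it inherits concavity for every $p < 1$, and it is also a symmetric function of its arguments (as noted in the preliminaries). The boundary case $p = 1$ can be dispatched separately and trivially, since there the expression is just the arithmetic mean $\frac{1}{n}\sum_i x_i$, which is unchanged by replacing the coordinates in $S$ by their common average; so equality holds.

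For $p < 1$, the key step is to exhibit $x'$ as an average of permuted copies of $x$. Let $G$ be the group of permutations of $N$ that fix every coordinate outside $S$ and permute the coordinates in $S$ arbitrarily, so $|G| = |S|!$. For $i \notin S$ we have $(\sigma x)_i = x_i$ for all $\sigma \in G$, while for $i \in S$ the value $(\sigma x)_i$ takes each of the values $\{x_j : j \in S\}$ equally often as $\sigma$ ranges over $G$. Consequently $\frac{1}{|G|}\sum_{\sigma \in G} \sigma x = x'$ exactly. Writing $g$ for the concave symmetric function in question and applying Jensen's inequality, followed by symmetry $g(\sigma x) = g(x)$, I would conclude
\[
g(x') = g\!\left(\frac{1}{|G|}\sum_{\sigma \in G} \sigma x\right) \ge \frac{1}{|G|}\sum_{\sigma \in G} g(\sigma x) = \frac{1}{|G|}\sum_{\sigma \in G} g(x) = g(x),
\]
which is precisely the desired inequality.

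The step I would treat most carefully—though it is not so much an obstacle as a bookkeeping point—is the domain. For $p < 0$ the quantity $x_i^p$ is defined only when $x_i > 0$, so I would restrict attention to $x \in \mathbb{R}_{>0}^n$ (or, following the stated convention on $p$-mean welfare, to the largest subset of agents receiving positive value). Averaging keeps every coordinate strictly positive whenever $x$ does, so $x'$ remains in the domain and the concavity/Jensen argument applies verbatim, with no boundary issues. I would also record explicitly that concavity of $g$ on the relevant convex domain is exactly what \Cref{claim:pmean-is-concave} supplies (up to the harmless positive scaling by $n^{-1/p}$), so no fresh Hessian computation is required here.
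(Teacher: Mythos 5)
Your proof is correct and is exactly the argument the paper intends: the corollary is stated without proof immediately after \Cref{claim:pmean-is-concave}, and the intended derivation is precisely your symmetrization step---write $x'$ as the average of permuted copies of $x$, then apply concavity (Jensen) and symmetry of the $p$-mean. Your added care about the positive-scaling constant $n^{-1/p}$, the trivial $p=1$ case, and the strict-positivity domain for $p<0$ fills in details the paper leaves implicit, but the route is the same.
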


\begin{restatable}[]{claim}{usefulinequality}
    Given $l \in \mathbb{N}$, and a vector $(x_1, \ldots, x_l) \in \mathbb{R}_+^l$, for $p \in [0,1]$,

    \[
    \frac{1}{l} \sum_{i=1}^l x_i^{1-p} \leqslant \left(\frac{1}{l} \sum_{i=1}^l x_i\right)^{1-p} \, ,
    \]

    \noindent while for $p < 0$, the opposite inequality holds.
    \label{claim:averaging-values}
\end{restatable}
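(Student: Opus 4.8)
The plan is to read off both inequalities from Jensen's inequality applied to the one-variable power function $\phi(t) = t^{1-p}$ on $[0,\infty)$, taken with respect to the uniform distribution that places weight $1/l$ on each of the points $x_1, \dots, x_l$. Writing $q := 1-p$, the two regimes in the claim correspond exactly to $q \in [0,1]$ (i.e.\ $p \in [0,1]$) versus $q > 1$ (i.e.\ $p < 0$), and the only quantity that controls the direction of the inequality is the sign of $\phi''(t) = q(q-1)\,t^{q-2}$ on the positive reals.

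First I would handle $p \in [0,1]$. Here $q \in [0,1]$, so $q(q-1) \le 0$ and $\phi$ is concave on $[0,\infty)$. Jensen's inequality for a concave function then gives
\[
\frac{1}{l}\sum_{i=1}^l \phi(x_i) \;\le\; \phi\!\left(\frac{1}{l}\sum_{i=1}^l x_i\right),
\]
which is precisely $\tfrac{1}{l}\sum_{i} x_i^{1-p} \le \bigl(\tfrac{1}{l}\sum_{i} x_i\bigr)^{1-p}$. For $p < 0$ we instead have $q = 1-p > 1$, hence $q(q-1) > 0$ and $\phi$ is convex; the very same application of Jensen's inequality now runs in the reverse direction, yielding the claimed opposite inequality $\tfrac{1}{l}\sum_{i} x_i^{1-p} \ge \bigl(\tfrac{1}{l}\sum_{i} x_i\bigr)^{1-p}$.

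Finally I would dispose of the degenerate cases so the argument is airtight on all of $\mathbb{R}_+^l$: the endpoint $p = 0$ gives $q = 1$, where $\phi$ is linear and both inequalities collapse to equality; and zero entries are harmless since $0^q = 0$ for $q > 0$, so concavity/convexity of $\phi$ on the closed half-line still applies. The ``main obstacle'' here is essentially bookkeeping rather than mathematical difficulty: one must make sure the direction of the inequality correctly tracks the concave-to-convex transition of $t \mapsto t^{1-p}$ as $1-p$ crosses $1$ (equivalently, as $p$ crosses $0$), and confirm the boundary and zero cases do not break the Jensen step. Everything beyond that is a single invocation of Jensen's inequality.
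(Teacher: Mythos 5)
Your proof is correct and matches the paper's own argument: both apply Jensen's inequality to $x \mapsto x^{1-p}$, using its concavity for $p \in [0,1]$ and convexity for $p < 0$ (the paper dispatches the endpoints $p \in \{0,1\}$ by direct substitution, which your linear/constant-function remarks cover equivalently). No gaps; the extra bookkeeping about zero entries is harmless but not needed beyond what the paper does.
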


\begin{proof}
    For $p \in \{0,1\}$, the claim can be seen by simply substituting these values. For $p \in (0,1)$, the function $f(x) = x^{1-p}$ is concave, hence an application of Jensen's inequality gives us the claim. For $p < 0 $, the function $f(x) = x^{1-p}$ is convex, hence again, Jensen's inequality gives us the claim. 
\end{proof}

\section{Optimal allocations for binary submodular valuations}
\label{sec:optimal-p-mean}

We first show that for obtaining bounds on the price of equity for the class of binary submodular valuations (and hence, for binary additive valuations), we can focus on two allocations: the first is the Nash welfare optimal allocation $A^\star$, which obtains the optimal $p$-mean welfare for all $p \leqslant 1$, and the second is the truncated allocation $B$, which obtains the optimal $p$-mean welfare among all \EQ{1} allocations for all $p \in \mathbb{R} \cup \{-\infty\}$.

\subsection{An optimal $p$-mean welfare allocation}
\label{subsec:Opt-pmean-allocations}

\cite{BCI+21finding} show the following results.

\begin{proposition}[\citealp{BCI+21finding}, Theorem 3.14]
Let $\Phi:\mathbb{Z}^n \rightarrow \mathbb{R}$ be a symmetric strictly convex function, and let $\Psi:\mathbb{Z}^n \rightarrow \mathbb{R}$ be a symmetric strictly concave function. Let $A$ be some allocation. For binary submodular valuations, the following statements are equivalent:
\begin{enumerate}
    \item $A$ is a minimizer of $\Phi$ over all the utilitarian optimal allocations,
    \item $A$ is a maximizer of $\Psi$ over all the utilitarian optimal allocations,
    \item $A$ is a leximin allocation, and
    \item $A$ maximizes Nash social welfare.
\end{enumerate}
\label{prop:nsw-is-optimal}
\end{proposition}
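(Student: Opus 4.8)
The statement to prove is \Cref{prop:nsw-is-optimal}, the equivalence (for binary submodular valuations) of four conditions on an allocation $A$: being a $\Phi$-minimizer over utilitarian-optimal allocations, being a $\Psi$-maximizer over utilitarian-optimal allocations, being leximin, and being Nash-welfare-maximizing. Since this is attributed to \citet{BCI+21finding}, I would structure the proof as a sequence of implications forming a cycle (or a hub through one canonical characterization), exploiting the special combinatorial structure of matroid rank valuations. The central fact I would lean on is that, for binary submodular valuations, the set of achievable utility vectors over \emph{clean} utilitarian-optimal allocations enjoys an exchange/transfer property: if two agents $i,k$ have $v_i(A_i) > v_k(A_k) + 1$, then one can move a single good from $A_i$ to $A_k$ while keeping the allocation clean and utilitarian-optimal, raising $v_k$ by one and lowering $v_i$ by one. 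Establishing this transfer lemma is the technical heart, and I expect it to be the main obstacle.

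\textbf{Key steps.} First I would reduce to clean (non-wasteful) allocations: any $\Phi$-minimizer, $\Psi$-maximizer, leximin, or Nash-optimal allocation must be utilitarian-optimal and clean, so by the preliminaries we may write $v_i(A_i) = |A_i|$ for each agent. Second, I would prove the single-good transfer lemma above using the matroid exchange property (the union of the agents' received bundles forms an independent-set-like structure, and a dominated agent can absorb a good from a richer agent's bundle without creating waste). Third, I would show that the transfer lemma implies the set of sorted utility vectors arising from clean utilitarian-optimal allocations is closed under these ``Robin Hood'' transfers, so the leximin-optimal, Nash-optimal, and $\Phi$-minimizing (equivalently $\Psi$-maximizing) vectors all coincide with the unique vector that is maximally ``balanced.'' Concretely: any symmetric strictly convex $\Phi$ strictly decreases under a transfer that narrows the gap between two agents, and any symmetric strictly concave $\Psi$ strictly increases; hence the minimizer of $\Phi$ and maximizer of $\Psi$ are each exactly the allocations admitting no improving transfer, which is precisely the leximin condition. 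Since $x \mapsto \sum_i \log x_i$ (whose maximization gives Nash welfare) is symmetric and strictly concave on positive vectors, and $x \mapsto \sum_i x_i^2$ is symmetric and strictly convex, Nash optimality falls out as the special case that ties everything together.

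\textbf{The implications.} I would route the equivalences as follows. The equivalence of (1) and (2) is immediate once we observe that over utilitarian-optimal allocations the utility vectors all have the same coordinate sum; on such a fixed-sum domain, minimizing a symmetric strictly convex function and maximizing a symmetric strictly concave function both select the most balanced vector, and these coincide. For (2)$\Leftrightarrow$(3), I would argue that a vector on this domain maximizes every symmetric strictly concave function if and only if it admits no improving Robin Hood transfer, and that the no-improving-transfer condition is exactly leximin-optimality; the nontrivial direction uses the transfer lemma to show any non-leximin allocation can be strictly improved. Finally (3)$\Leftrightarrow$(4) follows because Nash welfare corresponds to the strictly concave $\Psi(x) = \sum_i \log x_i$, so it is one instance of the (2)$\Leftrightarrow$(3) characterization, and conversely leximin already pins down the vector uniquely (up to permutation), forcing Nash optimality.

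\textbf{Main obstacle.} The delicate point is the transfer lemma and the claim that \emph{the same} utility vector (up to permutation) simultaneously optimizes \emph{all} symmetric strictly convex/concave objectives. This requires more than a pointwise Jensen argument: it needs that the achievable utility vectors form a structured set (reflecting the underlying matroid / M-convexity) on which a single maximally balanced point dominates in the majorization order. Proving this majorization-type statement --- that the leximin vector is majorized by every other achievable utility vector --- is where the binary submodular structure is essential and where I would expect to invest the most care. Once majorization is in hand, Schur-convexity of symmetric convex functions and Schur-concavity of symmetric concave functions deliver all four equivalences uniformly.
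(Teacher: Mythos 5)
First, a point of context: the paper does not prove \Cref{prop:nsw-is-optimal} at all --- it is imported verbatim from \citet{BCI+21finding} (their Theorem 3.14). So there is no in-paper proof to compare against; the closest material is \Cref{subsec:NashOpt} of the appendix, where the paper restates that work's key lemmas (\Cref{prop:not-leximin-1} and \Cref{prop:not-leximin-2}) and proves \Cref{prop:leximin-if-nearly-equal} in order to derive \Cref{prop:NashOpt_Simultaneously_Maximizes_p-mean}. Your overall architecture --- an exchange/transfer (``Robin Hood'') lemma over utilitarian-optimal allocations, then a majorization statement, then Schur-convexity/concavity to handle all symmetric $\Phi$, $\Psi$, and Nash welfare uniformly --- is indeed the architecture of the cited proof and of that appendix machinery, so the plan points in the right direction.

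However, your transfer lemma is stated too strongly and is in fact false, and this breaks the central step of your argument. It is not true that whenever $v_i(A_i) \geqslant v_k(A_k) + 2$ in a clean utilitarian-optimal allocation, a single good can be moved from $A_i$ to $A_k$ preserving cleanness and utilitarian optimality. Counterexample (binary additive, hence binary submodular): three agents, goods $X \cup Y$ with $|X| = 10$, $|Y| = 4$; agent $1$ values exactly the goods in $X$, agents $2$ and $3$ value exactly the goods in $Y$. The allocation $A_1 = X$, $A_2 = $ three goods of $Y$, $A_3 = $ one good of $Y$ is clean and utilitarian optimal, and the pair $(1,3)$ has gap $9$, yet no good can leave $A_1$ without being wasted --- indeed no utilitarian-optimal allocation gives agent $3$ utility $2$ while agent $1$ keeps $9$, since $v_2 + v_3 \leqslant 4$. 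The correct lemma (\Cref{prop:not-leximin-1}, i.e., Lemma 3.12 of \citealp{BCI+21finding}) is existential: if $A$ is utilitarian optimal and \emph{not leximin}, then \emph{some} pair with gap at least $2$ admits the rebalancing, and in the matroid-rank setting the rebalancing may require a chain of exchanges across several bundles rather than a single-good move between the two agents. This matters downstream: under your universal version, ``admits no improving transfer'' collapses to ``all utilities within $1$ of each other,'' and identifying that with leximin is wrong --- in the example above the leximin vector is $(10,2,2)$, with a gap of $8$. (The paper's \Cref{prop:leximin-if-nearly-equal} gives only the converse direction: utilitarian optimal with gap at most $1$ implies leximin.) Relatedly, your claim that (1)$\Leftrightarrow$(2) is ``immediate'' on a fixed-sum domain is also not right as stated: on the fixed-sum set $\{(0,4,4),(1,2,5)\}$, the strictly convex functions $\sum_i x_i^2$ and $\sum_i x_i^4$ have different minimizers, so even this equivalence already requires the majorization-least element that the matroid structure provides --- a point your ``main obstacle'' paragraph does acknowledge but which makes the equivalences a single package rather than a sequence of easy steps. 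Finally, a minor issue: for the Nash case, $\sum_i \log x_i$ is undefined when some agent must receive utility $0$, so you need the paper's convention of first maximizing the number of positive-utility agents.
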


\begin{proposition}[\citealp{BCI+21finding}, Theorem 3.11]
For binary submodular valuations, any Pareto optimal allocation is utilitarian optimal.
\label{prop:po-means-utilitarian-optimal}
\end{proposition}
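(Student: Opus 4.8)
The plan is to prove the contrapositive: if an allocation $A$ is not utilitarian optimal, then $A$ is Pareto dominated. First I would pass to matroid language, writing each $v_i$ as the rank function of a matroid $\mathcal{M}_i = (M, \mathcal{I}_i)$, and reduce to the clean part of $A$. If some good $g \in A_i$ is wasted for $i$ but has positive marginal value for another agent $j$, then moving $g$ from $i$ to $j$ already Pareto dominates $A$ and we are done; otherwise every wasted good is irrelevant to all agents and can be set aside. After this reduction the useful part of each $A_i$ is independent in $\mathcal{M}_i$, so $v_i(A_i) = |A_i|$ and the utilitarian welfare is $\sum_i |A_i|$, i.e., the size of a packing of $M$ into independent sets.

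Now let $B$ be a utilitarian optimal allocation, so $\sum_i v_i(B_i) > \sum_i v_i(A_i)$. The core of the argument is a single augmenting exchange that strictly raises one agent's value without lowering anyone else's. Since the total value of $B$ exceeds that of $A$, there is an agent $i_0$ with $v_{i_0}(B_{i_0}) > v_{i_0}(A_{i_0})$; by monotonicity, $\text{rank}_{i_0}(A_{i_0} \cup B_{i_0}) \ge v_{i_0}(B_{i_0}) > v_{i_0}(A_{i_0})$, so there is a good $g \in B_{i_0} \setminus A_{i_0}$ with positive marginal value for $i_0$ over $A_{i_0}$. If the current holder $j$ of $g$ does not value $g$ (marginal $0$), transferring $g$ to $i_0$ is an immediate Pareto improvement. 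Otherwise I would chase an augmenting path: $j$ must be compensated by swapping in another good, whose holder must in turn be compensated, and so on, until the path terminates at a good that some agent can absorb for free.

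To make this rigorous I would define the exchange graph whose vertices are the goods, placing for each agent $i$ an arc from $g \in A_i$ to $h \notin A_i$ whenever $A_i - g + h \in \mathcal{I}_i$, and marking as sources the goods some agent can add while strictly increasing its value. The existence of an augmenting path from a source to a good of $B$ that is usefully unabsorbed in $A$ follows from the standard matroid-union augmentation theorem applied to the fact that $A$ is not a maximum independent-set packing. Pushing along a shortest such path realizes all the swaps simultaneously, raises exactly one agent's value by $+1$, and leaves every other agent's value unchanged, contradicting Pareto optimality of $A$.

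The main obstacle is exactly this augmenting-path step: one must verify that the simultaneous swaps along the path keep every intermediate agent's bundle independent in its own matroid, so that no agent's value drops. This is where the binary submodular (matroid rank) structure is indispensable — the strong basis exchange property is what makes the local swaps mutually consistent — and the claim would genuinely fail for general submodular or non-binary additive valuations. Taking a \emph{shortest} augmenting path is the device I would use to guarantee that no good is touched twice, which ensures both consistency of the exchanges and termination.
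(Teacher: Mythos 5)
This proposition is one the paper \emph{imports} from \citet{BCI+21finding} (their Theorem~3.11) and states without proof, so there is no in-paper argument to compare yours against; judged on its own, your proof is sound and follows what is essentially the standard route (and the one used in the cited source): reduce to a clean allocation, observe that the utilitarian welfare then equals the size of a packing of goods into sets that are independent in the agents' matroids, and apply matroid-union augmentation --- an augmenting path in the exchange graph, taken shortest so that the simultaneous swaps preserve independence --- to convert any non-maximum packing into a Pareto improvement in which exactly one agent gains a unit and no one loses. Your handling of wasted goods is correct (a good wasted for its holder but valued by another agent yields an immediate Pareto improvement; otherwise wasted goods can be set aside and returned at the end, where monotonicity preserves the domination), and you correctly identify the shortest-path consistency lemma as the crux. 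One piece of your narrative should be dropped or repaired, though it does not break the argument: the augmenting path is not guaranteed to start from, or pass through, the particular good $g \in B_{i_0} \setminus A_{i_0}$ that you single out, nor to benefit agent $i_0$ specifically; the matroid-union augmentation theorem only guarantees that \emph{some} augmenting path exists, running from \emph{some} good that \emph{some} agent can absorb with strict increase to \emph{some} good left uncovered by the clean allocation, and that is all the contrapositive needs, so the detour through the optimal allocation $B$ and the agent $i_0$ is superfluous. Finally, it is worth stating explicitly that uncovered goods must exist whenever the clean allocation is not utilitarian optimal: if a clean allocation covered every good, its welfare would be $m$, which is already the maximum possible.
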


For $p \le 1$, if the $p$-mean welfare function was strictly concave, then it would follow immediately that the Nash welfare optimal allocation $A^\star$ in fact simultaneously maximizes the $p$-mean welfare for all $p \le 1$. However, in general the $p$-mean welfare is concave (\Cref{claim:pmean-is-concave}), but not strictly concave. E.g., for any $p \le 1$ and any vector of values $v=(v_1, \ldots, v_n)$ with $v_i > 0$ for all agents $i$, let us overload notation slightly and define $\mathcal{W}_p(v) = \left(\frac{1}{n} \sum_{i=1}^n v_i^p \right)^{1/p}$. Then $\mathcal{W}_p(2v) = (\mathcal{W}_p(v) + \mathcal{W}_p(3v))/2$, violating strict concavity. However, we can slightly modify the proof of Theorem 3.14 from~\cite{BCI+21finding}, to obtain the following result. The modified proof is in \Cref{subsec:NashOpt} in the Appendix.

\begin{restatable}[]{proposition}{NashOptSimultaneouslyMaximizespmean}
For binary submodular valuations, any Nash welfare maximizing allocation (and hence, leximin allocation) simultaneously maximizes the $p$-mean welfare for all $p \leqslant 1$.
\label{prop:NashOpt_Simultaneously_Maximizes_p-mean}
\end{restatable}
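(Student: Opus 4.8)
The plan is to extract from \Cref{prop:nsw-is-optimal} only the \emph{ordering} information it encodes, rather than feeding $\mathcal{W}_p$ into it directly (which fails precisely because, as the excerpt notes, $\mathcal{W}_p$ is concave but not strictly concave). Write $\mathcal{U}$ for the set of utilitarian-optimal allocations; since the valuations are binary submodular and integral, every $A \in \mathcal{U}$ is clean and shares a common total value $T$, and only finitely many distinct value vectors $u^A := (v_1(A_1), \ldots, v_n(A_n))$ arise this way. Let $A^\star$ be a Nash-optimal (equivalently, by \Cref{prop:nsw-is-optimal}, leximin) allocation, with value vector $u^\star$; note $A^\star \in \mathcal{U}$.

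\textbf{Step 1 (reduction to $\mathcal{U}$).} First I would argue that a global $\mathcal{W}_p$-maximizer can be taken in $\mathcal{U}$. For $p \in (0,1]$ the function $\mathcal{W}_p$ is strictly increasing in each coordinate, so any maximizer is Pareto optimal, hence utilitarian optimal by \Cref{prop:po-means-utilitarian-optimal}. For $p \le 0$, under the support convention a maximizer first maximizes the number $|S|$ of positively-valued agents and then $\mathcal{W}_p$ restricted to $S$; the same Pareto-domination argument shows it is Pareto optimal, hence lies in $\mathcal{U}$, and its support has the maximum possible size. Since a leximin allocation also maximizes the number of positive-value agents, $A^\star$ attains this same support $S$, so for $p \le 0$ both $A^\star$ and the competitor vanish on exactly the same coordinates $N \setminus S$.

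\textbf{Step 2 (majorization).} The key step is to show $u^\star$ is majorized by $u^B$ for every $B \in \mathcal{U}$. By the equivalence of leximin with condition~(1) of \Cref{prop:nsw-is-optimal}, $A^\star$ minimizes every symmetric strictly convex $\Phi$ over $\mathcal{U}$. Applying this to separable functions $\Phi(u) = \sum_i \phi(u_i)$ with $\phi$ strictly convex, and then approximating an arbitrary convex $\phi$ by $\phi + \delta(\cdot)^2$ with $\delta \to 0$, gives $\sum_i \phi(u^\star_i) \le \sum_i \phi(u^B_i)$ for \emph{every} convex $\phi$. As $u^\star$ and $u^B$ have the same coordinate sum $T$, the Hardy--Littlewood--P\'olya theorem yields $u^\star \prec u^B$. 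For $p \le 0$, since $u^\star$ and $u^B$ have zeros in exactly the same $n-|S|$ coordinates (the smallest entries, equal in number), this majorization descends to the positive subvectors supported on $S$.

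\textbf{Step 3 (Schur-concavity).} Finally, $\mathcal{W}_p$ is symmetric and, by \Cref{claim:pmean-is-concave}, concave for all $p \le 1$; a symmetric concave function is Schur-concave, so $u^\star \prec u^B$ gives $\mathcal{W}_p(u^\star) \ge \mathcal{W}_p(u^B)$ (for $p \le 0$ applied to the positive subvectors on $S$). Together with Step~1, $A^\star$ maximizes $\mathcal{W}_p$ over all allocations, simultaneously for every $p \le 1$. The main obstacle is exactly the non-strict concavity flagged before the statement: one cannot set $\Psi = \mathcal{W}_p$ in \Cref{prop:nsw-is-optimal}. I resolve it by invoking that proposition only with finite, genuinely strictly convex auxiliary functions to pin down the majorization order of $u^\star$, and then letting Schur-concavity of $\mathcal{W}_p$ finish the job; the secondary technical point is the $p \le 0$ support convention, handled by matching zero patterns as above. (Equivalently, over the finite set $\mathcal{U}$ one could perturb to $\Psi_\epsilon := \mathcal{W}_p - \epsilon \Phi$, which is strictly concave, apply \Cref{prop:nsw-is-optimal} to make $A^\star$ its unique $\mathcal{U}$-maximizer, and let $\epsilon \to 0$.)
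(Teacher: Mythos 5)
Your proof is correct, and it takes a genuinely different route from the paper's. The paper's own argument (in \Cref{subsec:NashOpt}) shares your Step 1 --- any $\mathcal{W}_p$-maximizer is Pareto optimal, hence utilitarian optimal by \Cref{prop:po-means-utilitarian-optimal} --- but then proceeds by local exchange rather than majorization: if that maximizer $A$ is not leximin, the exchange lemmas of \citet{BCI+21finding} (restated as \Cref{prop:not-leximin-1} and \Cref{prop:not-leximin-2}) produce a utilitarian optimal $A'$ that moves one unit of value from a rich agent to a poor one, lexicographically improving $A$ while weakly increasing every symmetric concave objective, in particular $\mathcal{W}_p$ by \Cref{claim:pmean-is-concave}; iterating terminates at a leximin allocation, which by \Cref{prop:nsw-is-optimal} has the same sorted value vector as $A^\star$. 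Your Steps 2--3 replace that induction with a one-shot argument: strict-convex minimization over $\mathcal{U}$ (condition (1) of \Cref{prop:nsw-is-optimal}), a $\delta$-perturbation, and Hardy--Littlewood--P\'olya give $u^\star \prec u^B$, and Schur-concavity of the symmetric, concave $\mathcal{W}_p$ finishes. These are two faces of the same phenomenon --- majorization is precisely reachability by the Robin Hood transfers the paper performs explicitly --- but your version never needs such transfers to be \emph{realizable} by actual allocations (that realizability, the content of \Cref{prop:not-leximin-1}, is where the paper re-uses binary submodularity; you invoke it only indirectly through \Cref{prop:nsw-is-optimal}), and your treatment of the $p \leqslant 0$ support convention (equal zero counts, majorization descending to the positive subvectors) is more explicit than the paper's, which glosses over this point. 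Two minor blemishes: your opening claim that every utilitarian optimal allocation is clean is false (wasted goods can always be added without changing any value), though nothing in your argument uses it; and the parenthetical shortcut $\Psi_\epsilon := \mathcal{W}_p - \epsilon\Phi$ needs care for $p \leqslant 0$, since the support-modified $\mathcal{W}_p$ is not obviously concave (or even well-defined) across value vectors with different zero patterns --- the majorization route you made primary is the one that actually handles this case.
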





\subsection{An optimal $p$-mean welfare \EQ{1} allocation}
\label{subsec:Truncated_Allocation}

We now show that similarly, there exists an \EQ{1} allocation $B$ that maximizes the $p$-mean welfare for all $p$. Given $A^\star$, allocation $B$ is obtained as follows, which we call the \emph{truncated allocation}. Let $l = \min_i v_i(A_i^\star)$ be the smallest value that any agent obtains in $A^\star$, and let $i_l$ be an agent that has this minimum value. Note that for any agent $i$, if $v_i(A_i^\star) \geqslant l+2$, then all goods allocated to $i$ must have marginal value $0$ for the agent $i_l$, i.e., for all $g \in A_i^\star$, $v_{i_l}(A_{i_l}^\star \cup \{g\}) = v_{i_l}(A_{i_l}^\star)$  (else we can increase the Nash social welfare by re-allocating any good that violates this to agent $i_l$). 

For the \EQ{1} allocation that we would like to construct, for any agent $i$ with $v_i(A_i^\star) \geqslant l+2$, we remove goods from $A_i^\star$ until $i$'s value for the remaining bundle is $l+1$. We allocate the removed goods to agent $i_l$ (that has marginal value $0$ for these goods). Let $B$ be the resulting allocation. Then clearly, if $v_i(A_i^\star) \in \{l,l+1\}$, then $v_i(B_i) = v_i(A_i^\star)$, else $v_i(B_i) = l+1$. Thus, allocation $B$, our truncated NSW allocation, is \EQ{1}.



\bisoptimal*
\begin{proof}
    Let $n_1$ be the number of agents that have value $l$ in allocation $B$, and $n_2$ be the number of agents with value $l+1$. Clearly, $n = n_1 + n_2$. Consider any other allocation $C$. We will show that the following statement is true: either (i) there exists an agent $i$ with $v_i(C_i) \leqslant l-1$, or (ii) if all agents have value $v_i(C_i) \geqslant l$, then at most $n_2$ agents have value $\geqslant l+1$ (and hence at least $n_1$ agents have value $ \leqslant l$). 
    
    Assuming the statement is true, if $C$ is an \EQ{1} allocation, either (i) every agent has value $\leqslant l$, or (ii) at most $n_2$ agents have value $l+1$, and at least $n_1$ agents have value $\leqslant l$. It follows that allocation $B$ maximizes any symmetric non-decreasing function of agent valuations in the set of \EQ{1} allocations, and hence $B$ maximizes the $p$-mean welfare among all \EQ{1} allocations for all $p \in \mathbb{R}$. Since the minimum agent valuation in $B$ is the same as in $A^\star$, which by~\Cref{prop:nsw-is-optimal} also maximizes the egalitarian welfare, allocation $B$ maximizes the $p$-mean welfare for $p = -\infty$ as well.
    
    Lastly, to prove the statement, by the truncation procedure that yields allocation $B$, the number of agents $|\{i: v_i(A_i^\star) \geqslant l+1\}|$ that have value at least $l+1$ in allocation $A^\star$ is also $n_2$. Further, by \Cref{prop:nsw-is-optimal}, $A^\star$ is also a leximin allocation, and hence no allocation in which every agent has value at least $l$, can have more than $n_2$ agents with value at least $l+1$. The statement follows.
\end{proof}

\section{Lower Bounds on the PoE for Binary Additive Valuations}
\label{sec:bav-lb}


\lbdgoodstwo*


Note that as $p \rightarrow -\infty$, $2^{1/p} s^{1/(1-p)} \rightarrow 1$. 

\begin{proof}
All our lower bounds are based on varying parameters in a single instance. The parameters are $r$, the number of agent types, and $W$, the normalisation constant for the agents. Given $r$, $W$, the instance has $m = rW$ goods, divided into $r$ groups of $W$ goods each. The groups are $M_1$, $M_2$, $\ldots$, $M_r$. There are $W+1$ agents of the first agent type, and 1 agent each of the remaining $r-1$ types (thus, $n = W+r$). Agents of type $t$ have value $1$ for the goods in group $M_t$, and value $0$ for all other goods. The instance is thus \emph{disjoint}; no good has positive value for agents of two different types.


    We note that following properties of our lower bound instance:

    \begin{enumerate}
        \item For any $p \leqslant 1$, an optimal $p$-mean welfare allocation has value $1$ for $W$ agents of the first type, and value $W$ for each of the remaining $r-1$ agents. 
        \item For any $p \leqslant 1$, the \EQ{1} allocation with maximum $p$-mean welfare gives value $1$ to all agents except for one agent of the first type (since there are $W+1$ agents of the first type, and only $W$ goods for which they have positive value).
    \end{enumerate}

    We use $\Lambda_p$ to denote the PoE for this instance. Then $\Lambda_p$ is exactly 

    \begin{align*}
    \Lambda_p ~ = ~    \displaystyle \left(\frac{\frac{1}{W+r-1} \left(W \times 1^p + (r-1) \times W^p\right)}{\frac{1}{W+r-1}\left(W \times 1^p + (r-1) \times 1^p\right)}\right)^{1/p} ~ = ~ \displaystyle \left(\frac{W + s \times W^p}{W+s}\right)^{1/p} \, .
    \end{align*}

    Note that although there are $W+r$ agents, in any allocation one agent must have value $0$, hence the $p$-mean average is taken over $W+r-1$ agents. For each of the cases in the theorem, we will now show how to choose $W$, $s$ to obtain the bound claimed.

    For $p=1$, choose $W = s^2$. Then 

    \begin{align*}
        \Lambda_p ~ \geqslant ~ \frac{W + sW}{ W + s} ~ = ~ \frac{s^2 + s^3}{s^2 + s} ~ = ~ s \, ,
    \end{align*}

    \noindent giving the required bound.

    For $p \in (0,1)$, choose $W = p s$. Then

    \begin{align*}
        \displaystyle \Lambda_p & \geqslant ~  \left(\frac{W + s \times W^p}{W+s}\right)^{1/p} \\
         & = \left(\frac{ps + s \times (ps)^p}{ps+s}\right)^{1/p} ~ = ~ \left(\frac{p + (ps)^p}{p+1}\right)^{1/p} \\
         & \geqslant ps (p+1)^{-1/p} \geqslant ps / e \, , \qquad \text{since $1+x \leqslant e^x$} \, .
    \end{align*}

    For $p = 0$, the $p$-mean welfare is the Nash social welfare. Note that in the \EQ{1} allocation, each of $W+s$ agents has value $1$, hence the NSW is $1$. In the optimal Nash social welfare allocation, $W$ agents have value $1$, and $s$ agents have value $W$, hence the NSW is $W^{s/W+s}$, which is also the PoE for this instance. Now choose $W = s / \ln s$. Then 
    
   \begin{align*}
        \displaystyle \Lambda_p & \geqslant ~  \exp{\frac{s \ln W}{s+W}} ~ = ~  \exp{\frac{s ( \ln s - \ln \ln s)}{s+s/\ln s}} ~ = ~ \exp{\frac{\ln s - \ln \ln s}{1+1/\ln s}} \\
            & \geqslant ~ \exp{\frac{\ln s - \ln \ln s}{1+1/(\ln s - \ln \ln s - 1)}} \\
            & = ~ \exp{(\ln s - \ln \ln s -1)} ~ = ~ \frac{s}{e \ln s} \, .
    \end{align*}

    Lastly, for $p < 0$, choose $W$ so that $W = s W^p$, or $W = s^{1/(1-p)}$. Then

    \begin{align*}
    \Lambda_p & = ~ \displaystyle \left(\frac{W + s \times W^p}{W+s}\right)^{1/p} \\
        & = ~ \displaystyle \left(\frac{2W}{W+s}\right)^{1/p} \\
        & = ~ \displaystyle 2^{1/p}\left(\frac{W^p}{1 + W^p}\right)^{1/p} \\
        & \geqslant ~ \displaystyle 2^{1/p} s^{1/(1-p)}  \, ,
    \end{align*}

    \noindent where the last inequality is because $p < 0$.
\end{proof}

\section{Upper Bounds on the PoE for Binary Additive Valuations}
\label{sec:bav-ub}

We first consider the case of utilitarian welfare, and then present our results for $p < 1$.

\subsection{Upper bounds on the PoE for $p=1$}

We assume that each good has value $1$ for at least one agent, else the good can be removed without consequence. Given an instance with binary additive valuations for the agents, for an agent $i$, we overload notation and let $v_i \coloneqq (v_i(g))_{g \in M}$ denote the vector of values for the individual goods. Define $V$ to be the matrix whose $i$th row is given by $v_i$.


We say that an instance has rank $k$ if the matrix $V$ has rank $k$ (equivalently, there are $k$ linearly independent valuation vectors among the agents). Note that the rank is a lower bound on both the number of agent types, as well as the number of good types. Finally, since the rank is $k$, we assume the agents are ordered so that the vectors $v_1$, $\ldots$, $v_k$ are linearly independent; the corresponding agents are called basis agents.

\ubrank*

\begin{proof}
Let $k$ denote the rank of the instance, and consider allocation $B$ that maximizes the utilitarian welfare among all \EQ{1} allocations. Recall that a good $g$ is \emph{wasted} if it is assigned to agent $i$ such that $v_i(g) = 0$. We will show that the number of wasted goods is at most $m(1-\frac{1}{k})$. Thus, allocation $A$ has social welfare at least $m/k$. Since the optimal social welfare is at most $m$, this would be sufficient to prove the theorem.

Since allocation $A$ is \EQ{1}, there exists a utility level $\ell$ such that for each agent $i$, $v_i(A_i) \in \{\ell, \ell+1\}$. We say an agent $i$ is \emph{poor} if $v_i(A_i) = \ell$, else agent $i$ is \emph{rich}. If $v_i(A_i) = \ell$ for all agents, then all agents are poor.

Suppose for a contradiction that strictly more than $m(1-\frac{1}{k})$ goods are wasted. Consider a wasted good $g$ and a poor agent $i$. It must be true that $v_i(g) = 0$, else we could assign $g$ to $i$ and increase the utilitarian welfare while maintaining \EQ{1}. Hence if agent $i$ is poor, then $v_i(g) = 0$ for each wasted good $g$. Hence, $v_i(g) = 1$ for strictly less than $m/k$ goods. Then due to normalisation, every agent has value $1$ for strictly less than $m/k$ goods. In particular, the $k$ basis agents have value $1$ for strictly less than $m/k$ goods each. Thus, there is a good --- say $g^*$ --- for which each basis agent has value $0$.

By definition, the value of each agent for $g^*$ is a linear combination of the values of the basis agents for $g^*$. Since the basis agents have value $0$ for $g^*$, it follows that every agent must have value $0$ for $g^*$, yielding the required contradiction. 
\end{proof}

It follows immediately from the theorem that the price of equity is also bounded by the number of agent types.

\begin{restatable}{corollary}{}
Under binary additive valuations and utilitarian welfare as the objective, the price of equity is at most $r$, the number of agent types.
\label{cor:Util-UpperBound-Agenttypes}
\end{restatable}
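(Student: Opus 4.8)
The goal is to derive Corollary~\ref{cor:Util-UpperBound-Agenttypes} from Theorem~\ref{thm:ubrank}, which states that under binary additive valuations the price of equity for utilitarian welfare is at most the rank $k$ of the valuation matrix $V$. The plan is to observe that the rank is never larger than the number of agent types, so the bound of $k$ immediately yields the weaker (but more interpretable) bound of $r$.

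The single key step is the inequality $k \le r$. Recall that $r$ is the number of distinct valuation functions in the instance, equivalently the number of distinct rows of the matrix $V = \{v_i(g_j)\}$, since for binary additive valuations an agent's valuation function is completely determined by its row vector $v_i = (v_i(g))_{g \in M}$. Now $V$ has exactly $r$ distinct rows. Deleting duplicate rows does not change the row space, and hence does not change the rank; the resulting matrix has $r$ rows, so its rank (and therefore the rank of $V$) is at most $r$. Thus $k = \mathrm{rank}(V) \le r$.

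Combining this with Theorem~\ref{thm:ubrank}, the price of equity is at most $k \le r$, which is exactly the claim. I anticipate no genuine obstacle here: the argument is just the elementary linear-algebra fact that the rank of a matrix is bounded by the number of its distinct rows, applied to the valuation matrix. The only point worth stating carefully is that, for binary additive valuations, two agents have the same \emph{type} precisely when their row vectors in $V$ coincide (since the valuation is additive and hence fully specified by the single-good values), which is what licenses identifying ``number of distinct rows'' with ``number of agent types.''
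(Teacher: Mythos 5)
Your proposal is correct and matches the paper's argument exactly: the paper derives \Cref{cor:Util-UpperBound-Agenttypes} from \Cref{thm:ubrank} by the same observation that the rank of the valuation matrix is at most the number of agent types (the paper states this as "the rank is a lower bound on the number of agent types" just before the theorem, and then notes the corollary "follows immediately"). Your write-up simply makes explicit the elementary linear-algebra step and the identification of agent types with distinct rows of $V$, both of which the paper leaves implicit.
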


\subsection{Upper bounds on the PoE for $p < 1$}
\label{subsec:Upper-Bounds-PoE}

From~\Cref{prop:nsw-is-optimal} and~\Cref{thm:b-is-optimal}, to bound the PoE for any $p<1$, it suffices to obtain an upper bound on the ratio of the $p$-mean welfare for the two allocations $A^\star$ (which maximizes the Nash welfare) and $B$ (the truncated allocation). 

We will use various properties of the allocations $A^\star$ and $B$ in the following proofs. To state these, define $T_k$ as the set of agents of type $k$, and let $S_k$ be the set of goods allocated to agents in $T_k$ by $A^\star$. That is, $S_k := \cup_{i \in T_k} A_i^\star$. Let $m_k := |S_k|$, and $n_k := |T_k|$. Then note that for each agent $i \in T_k$, 

\[
\displaystyle v_i(A_i^\star) = |A_i^\star| \in \left\{\Bigl\lfloor\dfrac{m_k}{n_k} \Bigr\rfloor,\left\lceil\dfrac{m_k}{n_k}\right\rceil \right\} \,.
\]

We reindex the types in increasing order of the averaged number of goods assigned by $A^\star$, so that $m_i/n_i \leqslant m_{i+1}/n_{i+1}$. Now define
\[
\lambda := \displaystyle 
\begin{cases}
\lceil \frac{m_1}{n_1} \rceil & \text{if $m_1/n_1$ is fractional} \\
1 + \frac{m_1}{n_1} & \text{if $m_1/n_1$ is integral} \, .
\end{cases}
\]

Thus $\lambda$ is integral, $\lambda > m_1/n_1$, and $\lambda \ge 2$ (since the $p$-mean welfare is only taken over agents with positive valuation, $m_1 \ge n_1$). Note that in $A^\star$, the smallest value of any agent is $\lfloor m_1/n_1 \rfloor$, and $\lambda \leqslant 1 + \lfloor m_1/n_1 \rfloor$. Hence agents with value at most $\lambda$ in $A^\star$ will retain their value in allocation $B$, by definition of $B$, while other agents will have their values truncated to $\lambda$.

Now let $\rho$ be the highest index so that $\lambda \geqslant m_\rho/n_\rho$. Thus, 
\begin{align}
    \lambda \geqslant \frac{\sum_{i=1}^\rho m_i}{\sum_{i=1}^\rho n_i} \, .
    \label{align:lambda-sum}
\end{align}

As stated above, any agent of type $k \leqslant \rho$ will retain their value, i.e., $v_i(B_i) = v_i(A_i^\star)$ for an agent $i$ of type $k \leqslant \rho$. 

We claim that agents of the first $\rho$ types must have at least $W$ goods assigned to them in $A^\star$.

\begin{claim}
    $\sum_{i=1}^\rho m_i \geqslant W$.
    \label{claim:items-in-first-rho-types}
\end{claim}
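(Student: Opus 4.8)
The plan is to show that the $W$ goods valued by the agents of type $1$ (which, after the reindexing, is the type with the smallest ratio $m_1/n_1$) are all assigned by $A^\star$ to agents of types $1,\dots,\rho$. Since the bundles $S_1,\dots,S_\rho$ are pairwise disjoint, this immediately gives $\sum_{i=1}^\rho m_i = \bigl|\bigcup_{i\le\rho} S_i\bigr| \ge W$. To set up, I would first assume without loss of generality that every good is valued by at least one agent (goods valued by nobody can be deleted without changing $W$ or any agent's value), so that $A^\star$ is clean and $v_i(A_i^\star)=|A_i^\star|$ for every $i$. Let $i_1 \in T_1$ be an agent attaining the minimum value $l = \lfloor m_1/n_1\rfloor$ (such an agent exists, and by the reindexing $l$ is indeed the global minimum value in $A^\star$), and let $D_1$ denote the set of $W$ goods that the type-$1$ agents value.

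The main tool is a no-envy property coming from optimality of $A^\star$. By \Cref{prop:nsw-is-optimal}, $A^\star$ is leximin, so if an agent $i$ values a good $g$ lying in another agent $j$'s bundle, then $v_j(A_j^\star) \le v_i(A_i^\star)+1$: otherwise transferring $g$ from $j$ to $i$ is a Robin-Hood transfer that strictly improves the sorted value vector. The same kind of within-type transfer also forces each type to be balanced (any two agents of a common type differ in value by at most $1$), which is exactly the floor/ceil property recorded before the claim.

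The crux, and the step I expect to be the main obstacle, is ruling out ``leakage'': showing that no good of $D_1$ is held by an agent of a type $k>\rho$. The difficulty is that the naive no-envy bound only places every good of $D_1$ in a bundle of value at most $l+1$, and a type $k>\rho$ whose average $m_k/n_k$ lies in $(l+1,l+2)$ genuinely contains agents of value exactly $l+1$, so a single transfer gives no contradiction. The resolution is a two-step transfer. Suppose some $g\in D_1$ lies in $A_j^\star$ with $j\in T_k$ and $k>\rho$. No-envy applied via $i_1$ gives $v_j\le l+1$, while $m_k/n_k > \lambda = l+1$ together with balance gives $v_j \ge \lfloor m_k/n_k\rfloor \ge l+1$; hence $v_j = l+1$ and $m_k/n_k\in(l+1,l+2)$, which forces the existence of some $j'\in T_k$ with $v_{j'}(A_{j'}^\star)=l+2$. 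Now transfer $g$ from $j$ to $i_1$ (raising $i_1$ to $l+1$ and lowering $j$ to $l$), and then transfer any good of $A_{j'}^\star$ from $j'$ to $j$ (valid since $j$ and $j'$ share the same valuation, raising $j$ back to $l+1$ and lowering $j'$ to $l+1$). The net effect on the multiset of agent values replaces $\{l,\,l+2\}$ by $\{l+1,\,l+1\}$, strictly improving the leximin objective and contradicting the optimality of $A^\star$.

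Having ruled out leakage, every good of $D_1$ is assigned to an agent of some type in $\{1,\dots,\rho\}$, so $D_1 \subseteq S_1\cup\cdots\cup S_\rho$ and therefore $\sum_{i=1}^\rho m_i \ge |D_1| = W$, as claimed. I expect the two-step transfer — and in particular the observation that a type whose average straddles $(l+1,l+2)$ must contain an $(l+2)$-valued agent to ``pay for'' the improvement — to be the only genuinely delicate point; the remainder is bookkeeping with the definitions of $\lambda$ and $\rho$.
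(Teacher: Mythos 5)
Your proposal is correct and matches the paper's own proof in essence: the paper likewise takes the minimum-value type-1 agent (at value $\lambda-1 = l$), a type-1-valued good $g$ leaked to an agent $i'$ of some type $k>\rho$, and an agent $i''$ of that same type with value at least $\lambda+1 = l+2$ (guaranteed directly by $m_k/n_k > \lambda$), and then performs the same two-step transfer (a good from $i''$ to $i'$, then $g$ from $i'$ to the type-1 agent) to contradict the optimality of $A^\star$. The only cosmetic differences are that the paper phrases the contradiction via Nash welfare rather than leximin (equivalent here by \Cref{prop:nsw-is-optimal}) and never pins down the exact value of the agent holding $g$, a fact your argument derives on the way to producing the $(l+2)$-valued agent but does not actually need.
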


\begin{proof}
    For a contradiction, let $\sum_{i=1}^\rho m_i < W$. Since $\lambda > m_1/n_1$, there is an agent $i^\star$ of type 1 with value $v_i(A_i^\star) = \lambda - 1$. Since $\sum_{i=1}^\rho m_i < W$, a good $g$ that has value $1$ for agents of type 1 is allocated in $A^\star$ to an agent $i'$ of type $k > \rho$. Since $m_k / n_k > \lambda$ by definition of $\rho$, there is an agent $i''$ of type $k$ with value $v_{i''}(A_{i''}^\star) \geqslant \lambda + 1$. Since $A^\star$ maximizes the Nash social welfare, any good $h \in A_{i''}^\star$ has value 1 for both agents $i''$ and $i'$. Then it is easy to see that transferring any good from $i''$ to $i'$, and then transferring good $g$ from $i'$ to $i^\star$, will increase the Nash social welfare. Since $A^\star$ maximizes the Nash social welfare, we have a contradiction.
\end{proof}

Then from~\eqref{align:lambda-sum} and~\Cref{claim:items-in-first-rho-types}, we obtain
\begin{align}
    \lambda \geqslant W / \sum_{i=1}^\rho n_i \, . \label{align:lambda-sum-2}
\end{align}

We now obtain a general expression for bounding the PoE for all $p \leqslant 1$. We will then optimize this expression for different ranges of $p$, to obtain upper bounds on the PoE.

\begin{restatable}[]{lemma}{upperboundexpression}

    The price of equity for $p$-mean welfare for instances with $r$ types is at most
    \begin{enumerate}
        \item $\sup_{\alpha \in [0,1]} \left(\alpha + \alpha^p s^p (1-\alpha)^{1-p}\right)^{1/p}$ for $p < 0$
        \item $\sup_{\alpha \in [0,1]} \left(\frac{s \alpha}{1-\alpha}\right)^{(1-\alpha)}$ for $p = 0$,
        \item $\sup_{\alpha \in [0,1]} \left(\alpha + 2^p\alpha^p s^p (1-\alpha)^{1-p}\right)^{1/p}$ for $p \in (0,1)$.
    \end{enumerate}

    \noindent where as before, $s = r-1$.
    \label{lem:pof-ub-base}
\end{restatable}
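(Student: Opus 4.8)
The plan is to invoke \Cref{prop:nsw-is-optimal} together with \Cref{thm:b-is-optimal}, which reduce the price of equity to the single ratio $\mathcal{W}_p(A^\star)/\mathcal{W}_p(B)$, and then bound this ratio using the two groups already isolated by the choice of $\rho$. Restricting throughout to the agents with positive value (as the $p$-mean convention requires), write $N_1 := \sum_{i=1}^\rho n_i$ for the number of such agents in the first $\rho$ types, let $N_2$ be the number of remaining positive-value agents, and set $\alpha := N_1/(N_1+N_2)$. Every agent of a type $k\le\rho$ keeps its value in $B$, whereas every agent of a type $k>\rho$ has value exceeding $\lambda$ in $A^\star$ and value exactly $\lambda$ in $B$. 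Letting $X := \sum_{k\le\rho}\sum_{i\in T_k} v_i(A_i^\star)^p$ denote the common contribution of the first group, shared by numerator and denominator, the whole ratio becomes a function of $X$, of the truncation level $\lambda$, and of the second-group values $\{m_k/n_k\}_{k>\rho}$.

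The core of the argument is to bound the second group using only three facts: each type holds at most $W$ goods, so $m_k\le W$; there are at most $s=r-1$ types in the second group; and $\lambda \ge W/N_1$ by \eqref{align:lambda-sum-2}, while \Cref{claim:items-in-first-rho-types} gives $\sum_{k\le\rho}m_k\ge W$. I would first apply \Cref{claim:averaging-values} (equivalently, Jensen applied to $x\mapsto x^p$) to replace, within each second-group type, the individual $p$-th powers by the power of the per-type average $m_k/n_k$, and then use concavity/convexity of $x\mapsto x^{1-p}$ to aggregate the at most $s$ types into a factor $s^{\,p}N_2^{1-p}$. Combining with $m_k\le W\le\lambda N_1$ expresses the second-group contribution entirely in terms of $\alpha$, $s$, and $\lambda$; dividing by $X$ and using that the first-group values lie in $[\lambda-1,\lambda]$ collapses the ratio to the form $\alpha + c\cdot\alpha^p s^p(1-\alpha)^{1-p}$, and to $(s\alpha/(1-\alpha))^{1-\alpha}$ in the geometric-mean case $p=0$, where the clean bound $\prod_{k>\rho}\prod_i v_i(A_i^\star)\le (sW/N_2)^{N_2}$ is used via AM--GM instead. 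Taking the supremum over the unknown $\alpha\in[0,1]$ yields the three stated expressions.

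The three sign regimes must be carried out separately because the direction of every inequality flips with the sign of $p$: for $p\in(0,1)$ the map $x\mapsto x^p$ is concave and the $1/p$-th root is increasing, so I upper-bound the numerator and lower-bound the denominator directly; for $p<0$ the relevant Jensen inequalities and the final $1/p$-th root all reverse, so the same relations must be applied in the opposite direction. I also need to compare $\mathcal{W}_p(B)^p$ with the first-group average of $v_i^p$, which is where the factor $\alpha$ (rather than $1$) enters: since every truncated second-group value $\lambda$ dominates every first-group value, including the second group pushes the $p$-mean toward $\lambda$ in the direction dictated by the sign of $p$, and this monotonicity is precisely the content of \Cref{corollary:averaging-subset}.

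The step I expect to be the main obstacle is getting the constants exactly as stated, in particular removing the spurious factor of $2^p$ in the regime $p<0$. The crude substitution $W\le\lambda N_1$ together with ``first-group values are at least $\lambda-1$'' loses a factor $(\lambda/(\lambda-1))^p$, which for $p<0$ is as bad as $2^p$ at the extreme $\lambda=2$. To recover the clean bound one must reduce to the extremal configuration in which the first group is a single type with all $N_1$ agents at the common value $W/N_1$ (so that $\rho=1$ and $\sum_{k\le\rho}m_k=W$ both hold with equality) and the second group is spread over exactly $s$ types---which is precisely the lower-bound family of \Cref{thm:lbdgoodstwo}. Justifying that this configuration is worst case, via the smoothing guaranteed by \Cref{corollary:averaging-subset}, is the delicate part; once the instance is in this canonical form every Jensen step is tight and the displayed suprema follow by direct substitution.
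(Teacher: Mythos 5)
Your overall skeleton matches the paper's: reduce to the single ratio $\mathcal{W}_p(A^\star)/\mathcal{W}_p(B)$ via \Cref{prop:nsw-is-optimal} and \Cref{thm:b-is-optimal}, split the types at $\rho$, average within second-group types via \Cref{corollary:averaging-subset}, invoke $m_k \le W$ and $\lambda \ge W/N_1$, and optimize over $\alpha$. Your $p=0$ case is correct (your AM--GM bound $\prod_{k>\rho}\prod_{i\in T_k} v_i(A_i^\star) \le (sW/N_2)^{N_2}$ is in fact a slightly more direct route to the same $(s\alpha/(1-\alpha))^{1-\alpha}$), and your $p\in(0,1)$ plan is workable precisely because the stated bound retains the factor $2^p$ there, so no constant needs to be recovered.

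The genuine gap is in your treatment of $p<0$, where the sign analysis is backwards. The factor $2^p$ is not a ``spurious'' artifact to be removed in the regime $p<0$; it never arises there. For $p<0$ the useful endpoint of the first-group value range is the \emph{upper} one: every first-group agent has $v_i(A_i^\star) \le \lambda$ (since $m_k/n_k\le\lambda$ for $k\le\rho$ and $\lambda$ is integral), hence $v_i(A_i^\star)^p \ge \lambda^p$, so the shared contribution $X$ dominates $\lambda^p N_1$. This is exactly the hypothesis $x\ge y$ of \Cref{prop:fraction-inequality-neg} (and $b\ge a$ holds because $(m_k/n_k)^p<\lambda^p$ for $k>\rho$); substituting and then running the $m_k\le W$ and \Cref{claim:averaging-values} steps you describe yields $\left(\alpha + \alpha^p s^p(1-\alpha)^{1-p}\right)^{1/p}$ with no constant lost. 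It is for $p\in(0,1)$ that $v\le\lambda$ points the wrong way, and there one uses $v_i(A_i^\star)\ge \lambda-1\ge \lambda/2$, which is where $2^p$ enters --- and stays, which is why item 3 of the statement has it. By planning to use ``values at least $\lambda-1$'' in the $p<0$ case you manufacture a loss that does not exist, and your proposed repair --- reducing to the canonical lower-bound instance of \Cref{thm:lbdgoodstwo} as the worst case --- is both unnecessary and unsupported: \Cref{corollary:averaging-subset} only lets you smooth agent values within a fixed allocation of a fixed instance; it says nothing about transforming one instance into another without decreasing the PoE, and that extremal claim (which you yourself flag as the delicate part) is left unproven. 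Replace that detour by the mediant-type monotonicity argument in the shared term $X$, applied with the correct endpoint in each sign regime, and your proof goes through.
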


\begin{proof}
    The $p$-mean welfare for the NSW optimal allocation $A^\star$ is
    \begin{align*}
        \mathcal{W}_p(A^\star) & = ~ \displaystyle \left( \frac{1}{n} \sum_{i=1}^n v_i(A_i^\star)^p \right)^{1/p}
        ~ = ~  \left( \frac{1}{n} \sum_{k=1}^r \sum_{i \in T_k} v_i(A_i^\star)^p \right)^{1/p} \, ,
    \end{align*}
    
    \noindent where in the last expression, we partition the agents by their respective types. 
    
    We now consider the agent types $k \leqslant \rho$ and $k > \rho$ separately. For agents of type $k > \rho$, we average out the values and replace their individual values by the average value $m_k/n_k$, and use~\Cref{corollary:averaging-subset} to obtain
    \begin{align*}
        \mathcal{W}_p(A^\star) & \leqslant ~  \displaystyle \left( \frac{1}{n} \left( \sum_{k=1}^\rho \sum_{i \in T_k} v_i(A_i^\star)^p + \sum_{k=\rho+1}^r n_k \left(\frac{m_k}{n_k}\right)^p \right) \right)^{1/p} \, .
    \end{align*}
    
    The truncated allocation $B$ is an \EQ{1} allocation, and we will consider the ratio $\mathcal{W}_p(A^\star)/\mathcal{W}_p(B)$. This is clearly an upper bound on the price of equity. For allocation $B$, recall that for agents $i$ of type $k \leqslant \rho$, $v_i(B_i) = v_i(A_i^\star)$ since these are not truncated, while for agents $i$ of type $k > \rho$, $v_i(B_i) = \lambda$. Hence the PoE is 
    \begin{align}
        \frac{\mathcal{W}_p(A^\star)}{\mathcal{W}_p(B)} & \leqslant ~  \displaystyle \left( \frac{\sum_{k=1}^\rho \sum_{i \in T_k} v_i(A_i^\star)^p + \sum_{k=\rho+1}^r n_k \left(\frac{m_k}{n_k}\right)^p}{\sum_{k=1}^\rho \sum_{i \in T_k} v_i(A_i^\star)^p + \lambda^p \sum_{k=\rho+1}^r n_k  } \right)^{1/p} \, . \label{align:pof-ratio}
    \end{align}

    We will split the rest of the analysis into three cases: (1) $p<0$, (2) $p>0$, and (3) $p=0$.

    \textbf{\underline{Case I}: $p<0$}
    
    Noting that the first term in the numerator and the denominator in~\eqref{align:pof-ratio} is the same, to simplify this further, we will use \Cref{prop:fraction-inequality-neg}. The proposition is easily verified, and we skip a formal proof.

    \begin{proposition}
    Consider non-negative real numbers $x,y,a,b$ such that $x \geqslant y$, $b \geqslant a$, and $y+a>0$. Then for any fixed $p < 0$,
    \[
    \left(\frac{x + a}{x + b}\right)^{1/p} \leqslant \left( \frac{y + a}{y + b}\right)^{1/p}.
    \]
    \label{prop:fraction-inequality-neg}
    \end{proposition}

    In~\eqref{align:pof-ratio} we then let $x = \sum_{k=1}^\rho \sum_{i \in T_k} v_i(A_i^\star)^p$, $y = \lambda^p \sum_{k=1}^\rho n_k$, $a = \sum_{k=\rho+1}^r n_k \left(\frac{m_k}{n_k}\right)^p$, and $b = \lambda^p \sum_{k=\rho+1}^r n_k$. Then since $x \geqslant y$, $b \geqslant a$, and $y+a>0$, from \Cref{prop:fraction-inequality-neg} we get
    \begin{align*}
        \frac{\mathcal{W}_p(A^\star)}{\mathcal{W}_p(B)} & \leqslant ~  \displaystyle \left( \frac{\lambda^p \sum_{k=1}^\rho n_k + \sum_{k=\rho+1}^r n_k \left(\frac{m_k}{n_k}\right)^p}{n \lambda^p } \right)^{1/p} ~ = ~  \left( \frac{\sum_{k=1}^\rho n_k}{n} + \frac{\sum_{k=\rho+1}^r n_k \left(\frac{m_k}{n_k}\right)^p}{n \lambda^p } \right)^{1/p}.
    \end{align*}
    
    We define $\alpha := \sum_{k=1}^\rho n_k/n$, i.e., the ratio of number of types that retain their values in $B$. Replacing in the above expression, and using that $\lambda \geqslant W / \sum_{i=1}^\rho n_i$ from~\eqref{align:lambda-sum-2},
    \begin{align*}
        \frac{\mathcal{W}_p(A^\star)}{\mathcal{W}_p(B)} & \leqslant ~ \left( \alpha + \frac{\sum_{k=\rho+1}^r n_k \left(\frac{m_k}{n_k}\right)^p}{n W^p / \left(\sum_{i=1}^\rho n_i\right)^p} \right)^{1/p}.
    \end{align*}
    For each type $k$, $m_k \leqslant W$, since for agents of each type at most $W$ goods have positive value. Hence
    \begin{align*}
        \frac{\mathcal{W}_p(A^\star)}{\mathcal{W}_p(B)} & \leqslant ~ \left( \alpha + \frac{\sum_{k=\rho+1}^r n_k \left(\frac{W}{n_k}\right)^p}{n W^p / \left(\sum_{i=1}^\rho n_i\right)^p} \right)^{1/p} ~ = ~ \left( \alpha + \frac{\sum_{k=\rho+1}^r n_k^{1-p}}{n / \left(\sum_{i=1}^\rho n_i\right)^p} \right)^{1/p} \\  
            & = \left( \alpha + \alpha^p \sum_{k=\rho+1}^r \left(\frac{n_k}{n}\right)^{1-p}\right)^{1/p} \, .
    \end{align*}
    We now use~\Cref{claim:averaging-values}, choosing $x_k = n_k/n$, which gives us
    \begin{align*}
        \frac{\mathcal{W}_p(A^\star)}{\mathcal{W}_p(B)} & \leqslant  \left( \alpha + \alpha^p (r -\rho)  \left(\frac{\sum_{k=\rho+1}^r n_k}{n (r-\rho) }\right)^{1-p}\right)^{1/p} \, \\
        & = \left( \alpha + \alpha^p (r -\rho)^p  \left(\frac{n - \sum_{k=1}^\rho n_k}{n}\right)^{1-p}\right)^{1/p} ~ = ~ \left( \alpha + \alpha^p (r -\rho)^p  (1-\alpha)^{1-p}\right)^{1/p}.
    \end{align*}
    
    Finally, since $\rho \geqslant 1$, $r- \rho \leqslant s$ (where we defined $s = r-1$), hence we get the claim.

    \textbf{\underline{Case II}: $p>0$}
    
    Noting that the first term in the numerator and the denominator in~\eqref{align:pof-ratio} is the same, to simplify this further, we will use \Cref{prop:fraction-inequality}. The proposition is easily verified, and we skip a formal proof.
    
    \begin{proposition}
    Consider non-negative real numbers $x,y,a,b$ such that $x \geqslant y$, $a \geqslant b$ and $y+b>0$. Then for any fixed $p > 0$,
    \[
    \left(\frac{x + a}{x + b}\right)^{1/p} \leqslant \left( \frac{y + a}{y + b}\right)^{1/p}.
    \]
    \label{prop:fraction-inequality}
    \end{proposition}

    Observe that for any agent $i \in[n]$ such that $v_i(A^*_i) > 0$, we have that $\lambda \leqslant 2 \cdot v_i(A^*_i)$. Indeed, if $\lambda > 2 \cdot v_i(A^*_i)$, then from the discussion in \Cref{subsec:Upper-Bounds-PoE}, it follows that $2 \cdot v_i(A^*_i) < 1 + v_i(A^*_i)$, which, for integral valuations, implies that $v_i(A^*_i) = 0$.
    
    In~\eqref{align:pof-ratio} we then let $x = 2^p \sum_{k=1}^\rho \sum_{i \in T_k} v_i(A_i^\star)^p$, $y = \lambda^p \sum_{k=1}^\rho n_k$, $a = 2^p \sum_{k=\rho+1}^r n_k \left(\frac{m_k}{n_k}\right)^p$, and $b = 2^p \lambda^p \sum_{k=\rho+1}^r n_k$. Then since $x \geqslant y$, $a \geqslant b$, and $y+b>0$, from \Cref{prop:fraction-inequality} we get
    \begin{align}
        \displaystyle \left( \frac{\sum_{k=1}^\rho \sum_{i \in T_k} 2^p v_i(A_i^\star)^p + 2^p \sum_{k=\rho+1}^r n_k \left(\frac{m_k}{n_k}\right)^p}{\sum_{k=1}^\rho \sum_{i \in T_k} 2^p v_i(A_i^\star)^p + 2^p \lambda^p \sum_{k=\rho+1}^r n_k  } \right)^{1/p}
        & \leqslant &
        \displaystyle \left( \frac{\lambda^p \sum_{k=1}^\rho n_k + 2^p \sum_{k=\rho+1}^r n_k \left(\frac{m_k}{n_k}\right)^p}{\lambda^p \sum_{k=1}^\rho n_k + 2^p \lambda^p \sum_{k=\rho+1}^r n_k  } \right)^{1/p} \nonumber \\
        & \leqslant &
        \displaystyle \left( \frac{\lambda^p \sum_{k=1}^\rho n_k + 2^p \sum_{k=\rho+1}^r n_k \left(\frac{m_k}{n_k}\right)^p}{\lambda^p \sum_{k=1}^\rho n_k + \lambda^p \sum_{k=\rho+1}^r n_k  } \right)^{1/p} \nonumber \\
        & \leqslant &
        \displaystyle \left( \frac{\lambda^p \sum_{k=1}^\rho n_k + 2^p \sum_{k=\rho+1}^r n_k \left(\frac{m_k}{n_k}\right)^p}{n \lambda^p} \right)^{1/p} \, .
        \label{align:temp}
    \end{align}

The LHS in~\eqref{align:temp} is equal to the RHS in~\eqref{align:pof-ratio}. Thus, we get that
    \begin{align*}
        \frac{\mathcal{W}_p(A^\star)}{\mathcal{W}_p(B)} & \leqslant ~  \displaystyle \left( \frac{\lambda^p \sum_{k=1}^\rho n_k + 2^p \sum_{k=\rho+1}^r n_k \left(\frac{m_k}{n_k}\right)^p}{n \lambda^p } \right)^{1/p} ~ = ~  \left( \frac{\sum_{k=1}^\rho n_k}{n} + \frac{2^p \sum_{k=\rho+1}^r n_k \left(\frac{m_k}{n_k}\right)^p}{n \lambda^p } \right)^{1/p}.
    \end{align*}
    
    We define $\alpha := \sum_{k=1}^\rho n_k/n$, i.e., the ratio of number of types that retain their values in $B$. Replacing in the above expression, and using that $\lambda \geqslant W / \sum_{i=1}^\rho n_i$ from~\eqref{align:lambda-sum-2},
    \begin{align*}
        \frac{\mathcal{W}_p(A^\star)}{\mathcal{W}_p(B)} & \leqslant ~ \left( \alpha + \frac{2^p \sum_{k=\rho+1}^r n_k \left(\frac{m_k}{n_k}\right)^p}{n W^p / \left(\sum_{i=1}^\rho n_i\right)^p} \right)^{1/p}.
    \end{align*}
    For each type $k$, $m_k \leqslant W$, since for agents of each type at most $W$ goods have positive value. Hence
    \begin{align*}
        \frac{\mathcal{W}_p(A^\star)}{\mathcal{W}_p(B)} & \leqslant ~ \left( \alpha + \frac{2^p \sum_{k=\rho+1}^r n_k \left(\frac{W}{n_k}\right)^p}{n W^p / \left(\sum_{i=1}^\rho n_i\right)^p} \right)^{1/p} ~ = ~ \left( \alpha + \frac{2^p \sum_{k=\rho+1}^r n_k^{1-p}}{n / \left(\sum_{i=1}^\rho n_i\right)^p} \right)^{1/p} \\  
            & = \left( \alpha + 2^p \alpha^p \sum_{k=\rho+1}^r \left(\frac{n_k}{n}\right)^{1-p}\right)^{1/p} \, .
    \end{align*}
    We now use~\Cref{claim:averaging-values}, choosing $x_k = n_k/n$, which gives us
    \begin{align*}
        \frac{\mathcal{W}_p(A^\star)}{\mathcal{W}_p(B)} & \leqslant  \left( \alpha + 2^p \alpha^p (r -\rho)  \left(\frac{\sum_{k=\rho+1}^r n_k}{n (r-\rho) }\right)^{1-p}\right)^{1/p} \, \\
        & = \left( \alpha + 2^p \alpha^p (r -\rho)^p  \left(\frac{n - \sum_{k=1}^\rho n_k}{n}\right)^{1-p}\right)^{1/p} ~ = ~ \left( \alpha + 2^p \alpha^p (r -\rho)^p  (1-\alpha)^{1-p}\right)^{1/p}.
    \end{align*}
    
    Finally, since $\rho \geqslant 1$, $r- \rho \leqslant s$ (where we defined $s = r-1$), hence we get the claim.

    \textbf{\underline{Case III}: $p=0$}. In this case, the Nash welfare of an allocation is the geometric mean of the values of the agents. By definition of the truncated allocation $B$, agents of the first $\rho$ types have the same value in $A^*$ and $B$, hence

    \begin{align*}
        \frac{\mathcal{W}_0(A^\star)}{\mathcal{W}_0(B)} 
        & = \left( \frac{\Pi_{i=1}^{n} v_i(A^*_i)}{\Pi_{i=1}^{n} v_i(B_i)} \right)^{1/n} \\
        & = \left( \frac{ \Pi_{k=1}^{\rho} \Pi_{i \in T_k} v_i(A^*_i) \cdot \Pi_{k=\rho+1}^{r} \Pi_{i \in T_k} v_i(A^*_i)}{ \Pi_{k=1}^{\rho} \Pi_{i \in T_k} v_i(B_i) \cdot \Pi_{k=\rho+1}^{r} \Pi_{i \in T_k} v_i(B_i)} \right)^{1/n} \\
        & = \left( \frac{ \Pi_{k=\rho+1}^{r} \Pi_{i \in T_k} v_i(A^*_i)}{ \Pi_{k=\rho+1}^{r} \Pi_{i \in T_k} v_i(B_i)} \right)^{1/n}
    \end{align*}

    In $A^*$, by~\Cref{corollary:averaging-subset}, for a fixed type $k$, we can bound $\Pi_{i \in T_k} v_i(A^*_i)$ from above by $(m_k/n_k)^{n_k}$ $\le (W/n_k)^{n_k}$. Further, each agent of type $> \rho$ has $v_i(B_i) = \lambda$, and from~\eqref{align:lambda-sum-2}, $\lambda \geq W/\sum_{i=1}^\rho n_i$.

    Let $n' := \sum_{i=1}^\rho n_i$ be the number of agents of the first $\rho$ types. Then substituting these values, we get

    \begin{align*}
        \frac{\mathcal{W}_0(A^\star)}{\mathcal{W}_0(B)} 
        & \le \left( \frac{ \Pi_{k=\rho+1}^{r} (W/n_k)^{n_k}}{ \Pi_{k=\rho+1}^{r} (W/n')^{n_k}} \right)^{1/n} \\
        & = \left( \frac{ (n')^{n-n'}}{ \Pi_{k=\rho+1}^{r} (n_k)^{n_k}} \right)^{1/n} \\
    \end{align*}

    Noting that $\sum_{k=\rho+1}^{r} n_k = n-n'$, and each $n_k \ge 1$, the product $\Pi_{k=\rho+1}^{r} (n_k)^{n_k}$ is maximized when the $n_k$'s are equal, hence each $n_k = (n-n')/(r-\rho)$. With this substitution,

    \begin{align*}
        \frac{\mathcal{W}_0(A^\star)}{\mathcal{W}_0(B)} 
        & \le \left( \frac{n'}{ (n-n')/(r-\rho)} \right)^{(n-n')/n} \\
    \end{align*}

    Recalling that $\alpha = n'/n$, and further $s = r-1 \ge r- \rho$,
   \begin{align*}
        \frac{\mathcal{W}_0(A^\star)}{\mathcal{W}_0(B)} 
        & \le \left( \frac{s \, \alpha}{ (1-\alpha)} \right)^{1-\alpha} \\
    \end{align*}

    \noindent which is the required expression.
\end{proof}


We are now ready to present our upper bounds.

\pofupperbounds*

\begin{proof}
    Our starting point is~\Cref{lem:pof-ub-base}. For $p \rightarrow 0$, the PoE is at most $\sup_{\alpha \in [0,1]} (s \alpha/(1-\alpha))^{(1-\alpha)}$. Let $\beta := \alpha/(1-\alpha)$, then $1-\alpha = 1/(1+\beta)$, and hence the upper bound on the PoE is $\sup_{\beta \ge 0} (s\beta)^{1/(\beta+1)}$. 

    Some calculus shows that the maximum value of this function is $\exp{(W(s/e))}$, where $W(\cdot)$ is the Lambert W function, which is the inverse of the function $f(x) = xe^x$. Further, $W(x) \leqslant \ln x - \ln \ln x + \frac{e \ln \ln x }{(e-1) \ln x}$ for $x \geqslant e$. The last term $\frac{e \ln \ln x }{(e-1) \ln x}$ $\leqslant 1$ for $x \geqslant e$. Hence for $s \geqslant e^2$, we get that the PoE is bounded by 
    \[
    \exp{(W(s/e))} \leqslant \exp{(\ln (s/e) - \ln \ln (s/e) + 1)} = \frac{s}{\ln (s/e)}
    \]

    \noindent giving the required bound on the PoE.

    For $p \in (0,1)$, again from~\Cref{lem:pof-ub-base}, the upper bound on the PoE can be written as
    
    \begin{align*}
        \sup_{\alpha \in [0,1]} \left(\alpha \times 1 + (1-\alpha) \times (2s\alpha/(1-\alpha))^p\right)^{1/p}
    \end{align*}

    Since $p \in (0,1)$, $f(x) = x^{1/p}$ is a convex function, and hence by Jensen's inequality this is at most

    \begin{align*}
        \sup_{\alpha \in [0,1]} \left( \alpha \times 1^{1/p} + (1-\alpha) \times (2s\alpha/(1-\alpha))\right) ~ = ~ \sup_{\alpha \in [0,1]} \left( \alpha + 2s\alpha\right) ~ = ~ 1 + 2s
    \end{align*}

    which is the upper bound claimed, for $p \in (0,1)$.





For $p < 0$, we separate the two cases $\alpha \geqslant 1/2$ and $\alpha \leqslant 1/2$. If $\alpha \geqslant 1/2$, then the expression from~\Cref{lem:pof-ub-base} evaluates to
    \begin{align}
    \left(\alpha + \alpha^p s^p (1-\alpha)^{1-p}\right)^{1/p} & \leqslant \alpha^{1/p} \leqslant (1/2)^{1/p} \, \, . \label{eqn:alpha-ge-half}
    \end{align}

\noindent If $\alpha \leqslant 1/2$, then $(1-\alpha) \geqslant 1/2$, and hence,
    \begin{align*}
    \left(\alpha + \alpha^p s^p (1-\alpha)^{1-p}\right)^{1/p} & \leqslant \left(\alpha + \alpha^p s^p 2^{-1+p}\right)^{1/p} \, .
    \end{align*}

Let $z := \alpha + \alpha^p (2s)^p /2$ be the parenthesized expression; our goal is to minimize this (since the exponent $1/p$ is negative, this will give us an upper bound on the PoE). Differentiating w.r.t. $\alpha$ gives us
\begin{align*}
    \frac{d z}{d \alpha} & = 1 + \frac{p}{2} (2s)^p \alpha^{p-1} \, .
\end{align*}

Since $p$ is negative, this increases with $\alpha$, and hence the derivative is a convex function with a unique minima, obtained at
\[
\displaystyle \alpha^* = \frac{(-2/p)^{1/(p-1)}}{(2s)^{p/(p-1)}}  
\]

\noindent or $(\alpha^* 2s)^p = -2 \alpha^*/p$. Replacing this value gives us
\begin{align*}
    \left(\alpha + \alpha^p s^p (1-\alpha)^{1-p}\right)^{1/p} & \leqslant \alpha^{*^{1/p}} (1 - 1/p)^{1/p}
\end{align*}

\noindent For $p < 0$ , $1-1/p \geqslant 1$, and hence $(1 - 1/p)^{1/p} \leqslant 1$. Hence
\begin{align*}
    \left(\alpha + \alpha^p s^p (1-\alpha)^{1-p}\right)^{1/p}  \leqslant \alpha^{*^{1/p}} & = (2s)^{1/(1-p)} (-2/p)^{1/p(p-1)} \\
        & = s^{1/(1-p)} 2^{-1/p} (-1/p)^{1/p(p-1)} \, . 
\end{align*}

This is greater than $2^{-1/p}$, the expression we obtain for $\alpha \geqslant 1/2$ in~\Cref{eqn:alpha-ge-half}, and hence this is a bound on the PoE for $p < 0$.

Finally for $p \leqslant -1$, let us consider the coefficient of $s^{1/(1-p)}$ obtained previously, namely $2^{-1/p} (-1/p)^{1/p(p-1)}$. This is an increasing function of $p$, and hence the maximum value obtained is $2$, at $p = -1$. Hence for $p \leqslant -1$, the PoE is at most $2s^{1/(1-p)}$.
\end{proof}

\section{PoE bounds for Doubly Normalised Instances}
\label{sec:doubly-normalised}

So far, we have considered instances with binary additive normalised valuations, where each agent values the same number $W$ of goods. In this case, for the utilitarian welfare, we have seen that the PoE can be as bad as $r$, the number of types of agents. In this section, we consider instances with further structure. In \emph{doubly normalised} instances, each good $g$ is valued by the same number $W_c$ of agents. Thus, $v_i(M) = W$ for all $i \in N$, and additionally, $\sum_{i \in N} v_i(g) = W_c$ for every good $g \in M$. The valuation matrix $V$ is thus both row and column normalised. Such instances are intuitively ``balanced,'' and we ask if this balance is reflected in the PoE for such instances. This indeed turns out to be the case. 

\doublynormalised*

For an undirected graph, the edge-incident matrix $X$ has entry $X_{i,e} = 1$ if edge $e$ is incident on vertex $i$, and $X_{i,e} = 0$ otherwise. We will use the following well-known property of edge-incidence matrices for bipartite graphs.

\begin{proposition}[e.g.,~\citealp{Schrijver98}]
    If $G$ is a bipartite graph, then the edge-incidence matrix of $G$ is totally unimodular.
    \label{prop:bipartite-TU}
\end{proposition}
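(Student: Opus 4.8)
The plan is to prove total unimodularity directly from the definition by induction on the size of a square submatrix, with the bipartite hypothesis entering in exactly one case. Recall that a matrix is \emph{totally unimodular} if every square submatrix has determinant in $\{-1,0,+1\}$. Write $G=(U\cup V,E)$ for the bipartition of the vertex set, so that the rows of the edge-incidence matrix $X$ split into $U$-rows and $V$-rows, and every column (edge) has exactly two nonzero entries --- both equal to $1$ --- one in a $U$-row and one in a $V$-row. I would fix an arbitrary $k\times k$ submatrix $B$ of $X$ and argue $\det(B)\in\{-1,0,+1\}$ by induction on $k$.

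For the base case $k=1$, the single entry is $0$ or $1$, so the claim holds. For the inductive step, I would first observe that deleting rows and columns can only decrease the number of nonzero entries in a column, so every column of $B$ still has at most two $1$'s. I then split into three cases according to some column of $B$. If a column is identically zero, then $\det(B)=0$. If some column has exactly one nonzero entry, I would Laplace-expand the determinant along that column: this gives $\det(B)=\pm\det(B')$ for a $(k-1)\times(k-1)$ submatrix $B'$, and the inductive hypothesis finishes the case since $\det(B')\in\{-1,0,+1\}$.

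The main (and only substantive) case is when every column of $B$ has exactly two nonzero entries. This is precisely where bipartiteness is used. Since each such column of $B$ is inherited from a column of $X$ with one $1$ in a $U$-row and one $1$ in a $V$-row, and since both of these $1$'s survive in $B$ (that is exactly what it means for the column to retain two nonzero entries), each column of $B$ has one $1$ among the retained $U$-rows and one $1$ among the retained $V$-rows. Hence the sum of the $U$-rows of $B$ equals the all-ones row vector, and the sum of the $V$-rows of $B$ equally equals the all-ones row vector; subtracting yields a nontrivial linear dependence among the rows of $B$, so $B$ is singular and $\det(B)=0$. This closes the induction and proves the proposition.

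The step I expect to be the crux is this last case, as it is the only place the bipartite structure is invoked. The point requiring care is the bookkeeping observation that in the ``two nonzero entries'' case the two $1$'s of each column genuinely land one in a surviving $U$-row and one in a surviving $V$-row --- which is automatic, since those are exactly the columns that retain both of their original $1$'s. It is worth flagging \emph{why} bipartiteness cannot be dropped: for a general graph an odd cycle produces a square submatrix with determinant $\pm 2$, so some two-coloring (parity) argument of this kind is unavoidable, and the $U/V$ split is precisely what supplies it.
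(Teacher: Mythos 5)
Your proof is correct: the three-case induction on square submatrices --- zero column, column with a single $1$ handled by Laplace expansion, and the case where every column retains both of its $1$'s, in which bipartiteness forces the sum of the surviving $U$-rows and the sum of the surviving $V$-rows to both equal the all-ones vector, giving a singular matrix --- is exactly the standard textbook argument for this classical fact. The paper does not prove this proposition at all (it is stated with a citation to Schrijver as a known result), so there is nothing to compare against beyond noting that you have faithfully reconstructed the canonical proof, including the correct observation that the odd-cycle submatrix with determinant $\pm 2$ is what rules out dropping the bipartiteness hypothesis.
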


\begin{proof}[Proof of~\Cref{thm:doubly-normalised}]
Let $V$ be the valuation matrix  for a doubly normalised instance, where each row sums to $W$ and each column sums to $W_c$. Divide each entry by $W_c$. Let $V^f$ be the resulting matrix. Then $V^f$ satisfies: (i) each entry is either $0$ or $1/W_c$, (ii) each column sums to 1, and (iii) each row sums to $W/W_c$. We will show that the matrix $V^f$ can be represented as the convex combination of nonnegative integer matrices $X^1$, $\ldots$, $X^t$ so that for any matrix $X^k$ in this decomposition, each column sums to $1$ and each row sums to either $\lceil W/W_c \rceil$ or $\lfloor W/W_c \rfloor$. Assuming such a decomposition, fix any such matrix $X^k$ in this decomposition. Clearly, due to (ii) and  nonnegativity, each entry of $X^k$ is either $1$ or $0$. Further if the entry $X^k_{i,g} = 1$, then $V^f_{i,g} = 1/W_c$ since $V^f$ is a convex combination of the $M$-matrices, and hence $V_{i,g} = 1$,

Consider then the allocation $A$ that assigns good $g$ to agent $i$ if $X^k_{i,g} = 1$. In this allocation, following the properties of $X^k$, each good is assigned to an agent that has value 1 for it, and each agent is assigned either $\lceil W/W_c \rceil$ or $\lfloor W/W_c \rfloor$ goods. The allocation is thus \EQ{1} and maximizes the utilitarian welfare. Further by~\Cref{prop:leximin-if-nearly-equal} this is also a leximin allocation, and hence by~\Cref{prop:NashOpt_Simultaneously_Maximizes_p-mean} and~\Cref{prop:nsw-is-optimal} this maximizes the $p$-mean welfare for all $p \le 1$, proving the theorem.

It remains to show that $V$ can be decomposed as stated. To see this, consider a complete bipartite graph $G = (A \cup B, E)$ with $|A| = n$ and $|B| = m$. To each edge $\{i,g\}$ with $i \in A$, $g \in B$, we associate a variable $x_{ig}$. Consider now the set of linear constraints:
\[
\begin{array}{lrl}
\forall i \in A, & \sum_{g \in B} x_{ig} & \ge \lfloor W/W_c \rfloor \\
\forall i \in A, & \sum_{g \in B} x_{ig} & \le \lceil W/W_c \rceil \\
\forall g \in B, & \sum_{i \in A} x_{ig} & = 1 \\
\forall i \in A, \, g \in B, & x_{ig} & \ge 0
\end{array}
\]

Together, these linear constraints ask for a fractional set of edges that have degree $1$ for each vertex in $B$ and degree between $\lfloor W/W_c \rfloor$ and $\lceil W/W_c \rceil$ for each vertex in $A$.

Consider the polytope obtained by these inequalities. Taking $x_{ig} = V^f_{ig}$ satisfies these constraints. Further, it can be seen that the constraint matrix is equal to the edge-incidence matrix for the bipartite graph $G$ (with the rows corresponding to vertices $i \in A$ repeated, and the identity matrix appended for nonnegativity of the variables). Hence, the constraint matrix is totally unimodular by~\Cref{prop:bipartite-TU}, and thus the extreme points of the polytope are integral. Since $V^f$ is a point in the polytope, $V^f$ can be represented as the convex combination of nonnegative integral matrices $X^1$, $\ldots$, $X^t$ corresponding to the vertices of the polytope, as required.
\end{proof}

We make two remarks. Firstly, note that since each matrix $X^k$ in the convex decomposition of $V^f$ gives us an \EQ{1} allocation with maximum utilitarian welfare, the convex combination gives us a randomized allocation that is ex ante \EQ{}, and ex post \EQ{1} and welfare optimal. Secondly, the doubly normalised constraint is sufficient, but not necessary, for the price of equity to be 1. Consider an instance with $3$ agents $\{a_1, a_2, a_3\}$ and $4$ goods $\{g_1, g_2, g_3, g_4\}$ such that $a_1$ values $\{g_1, g_2\}$ while $a_2$ and $a_3$ both value $\{g_3, g_4\}$. This instance is not column normalised, but admits an $\EQ1{}$ allocation with optimal welfare.

In the appendix, we offer an alternate proof of \Cref{thm:doubly-normalised}, based on a so-called ``eating argument'' and an extension of Hall's theorem.

\section{PoE Bounds for Binary Submodular Valuations}
\label{sec:binary-submodular}

We now consider the more general case of binary submodular valuations. Here we focus on the utilitarian welfare, and show that our results for binary additive valuations that bound the PoE by the number of types of agents do not extend to binary submodular valuations. We first show that from prior work (see \Cref{prop:mrv-ef1} below), it follows that if the agents have identical valuations, then PoE is 1 for the $p$-mean welfare objective for all $p \le 1$.

\mrvpofonetype*

As earlier, an allocation $A = (A_1,\dots,A_n)$ is \emph{clean} if for all agents $i$, $v_i(A_i) = |A_i|$, that is, no good is wastefully allocated. We note that, given any allocation $A$, we can obtain a clean (possibly partial) allocation $\hat{A}$ so that $v_i(A_i) = v_i(\hat{A}_i)$ for all agents $i$ by repeatedly removing wasted items from the allocation $A$. We will use the following result due to \cite{BCI+21finding}. 


\begin{proposition}[\citealp{BCI+21finding}, Corollary 3.8]
For binary submodular valuations, any clean, utilitarian optimal (partial) allocation that minimizes $\Phi(A):= \sum_i v_i(A_i)^2$ among all utilitarian optimal allocations is \EF{1}.
\label{prop:mrv-ef1}
\end{proposition}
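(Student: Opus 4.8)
The plan is to argue by contradiction, using the matroid structure of binary submodular valuations to show that any violation of \EF{1} can be repaired by transferring a single good in a way that preserves utilitarian optimality while strictly decreasing $\Phi$. Let $A$ be a clean, utilitarian optimal allocation minimizing $\Phi$, and suppose towards a contradiction that $A$ is not \EF{1}: there are agents $i,k$ with $A_k \neq \emptyset$ such that $v_i(A_i) < v_i(A_k \setminus \{g\})$ for every $g \in A_k$. Since $A$ is clean, each bundle is independent in its owner's matroid; writing $a := v_i(A_i) = |A_i|$ and $b := v_k(A_k) = |A_k|$, the violation becomes $v_i(A_k \setminus \{g\}) \geq a+1$ for all $g \in A_k$.

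The crux is to convert this hypothesis---which constrains only agent $i$'s valuation of $A_k$---into a cardinality gap in agent $k$'s \emph{own} bundle, namely $b \geq a+2$. I would establish this by a short case analysis on $v_i(A_k)$. If $v_i(A_k) \geq a+2$, then $b = |A_k| \geq v_i(A_k) \geq a+2$, since rank never exceeds cardinality. If instead $v_i(A_k) = a+1$, the violation forces $v_i(A_k \setminus \{g\}) = a+1 = v_i(A_k)$ for every $g \in A_k$; but were $A_k$ independent in agent $i$'s matroid (equivalently $|A_k| = a+1$), deleting any good would drop the $i$-rank to $a$, a contradiction, so $|A_k| \geq a+2$ again.

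Next I would construct the repairing transfer. Because $v_i(A_k) \geq a+1 > a = v_i(A_i)$ and $A_i$ is independent in agent $i$'s matroid, the matroid augmentation property applied to $A_i$ and a maximum $i$-independent subset of $A_k$ yields a good $g^\star \in A_k$ with $v_i(A_i \cup \{g^\star\}) = a+1$. Let $A'$ be obtained from $A$ by moving $g^\star$ from $k$ to $i$. Then $A'$ is clean ($A_i \cup \{g^\star\}$ is $i$-independent by construction, and $A_k \setminus \{g^\star\}$ is $k$-independent as a subset of $A_k$), and it is utilitarian optimal because agent $i$'s value rises by $1$ while agent $k$'s falls by $1$, leaving the total unchanged.

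Finally I would account for $\Phi$: only the $i$- and $k$-terms change, so
\[
\Phi(A') - \Phi(A) = (a+1)^2 + (b-1)^2 - a^2 - b^2 = 2(a - b + 1) \leq -2 < 0,
\]
where the last inequality uses $b \geq a+2$. Since $A'$ is a utilitarian optimal allocation with strictly smaller $\Phi$, this contradicts the minimality of $\Phi(A)$, so $A$ must be \EF{1}. The only genuinely delicate step is the cardinality gap $b \geq a+2$ of the second paragraph; once it is in hand, the augmentation and the $\Phi$-accounting are routine.
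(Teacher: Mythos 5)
Your proof is correct. The paper does not actually prove this proposition---it is imported by citation from \citet{BCI+21finding} (their Corollary 3.8)---and your argument is essentially the standard exchange argument from that source: an \EF{1} violation at a clean allocation forces the cardinality gap $|A_k| \geqslant |A_i| + 2$ (your case analysis on $v_i(A_k)$ handles the one delicate point correctly, since in the case $v_i(A_k) = a+1$ the hypothesis that deleting any good leaves the $i$-rank at $a+1$ rules out $i$-independence of $A_k$), after which matroid augmentation yields a single-good transfer that preserves cleanness and utilitarian optimality while decreasing $\Phi$ by at least $2$, contradicting minimality.
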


\begin{proof}[Proof of~\Cref{prop:mrv-pof-1}]
Let $A^\star$ be a Nash welfare maximizing allocation for the given instance. 
%
We will show that under identical binary submodular valuations, $A^*$ can be transformed into an \EQ{1} allocation without any change in the Nash welfare objective, thus implying that PoE is 1 for Nash welfare. Furthermore, from \Cref{prop:NashOpt_Simultaneously_Maximizes_p-mean}, we know that any allocation that maximizes Nash welfare also simultaneously maximizes $p$-mean welfare for all $p \leqslant 1$. This would imply that PoE is 1 for $p$-mean welfare objective for all $p \leqslant 1$.

First, we will transform $A^*$ into a clean partial allocation via the following procedure: For each agent $i$ with $v_i(A_i^*) > |A_i^*|$, there must be a wasted good in $A_i^*$; we simply remove such wasted goods until we get a clean partial allocation $\hat{A}$. Next, we will add back the removed goods \emph{arbitrarily} to obtain a complete allocation $A$ (in particular, adding back the removed goods may get back the original allocation $A^*$).

Note that $\hat{A}$ is a partial allocation with $v_i(\hat{A}_i) = v_i(A_i^*)$ for each agent $i$; in other words, $A^*$ and $\hat{A}$ have the same $p$-mean welfare for all $p \leqslant 1$. We know from \Cref{prop:NashOpt_Simultaneously_Maximizes_p-mean} that, for all $p \leqslant 1$, $A^*$ maximizes the $p$-mean welfare. The same holds true for $\hat{A}$.

By adding the removed goods back to $\hat{A}$, the utility of any agent cannot decrease; that is, for every agent $i$, $v_i(A_i) = v_i(\hat{A}_i)$. This means that $A$ is a complete allocation that simultaneously maximizes the $p$-mean welfare for all $p \leqslant 1$.

By \Cref{prop:nsw-is-optimal}, allocation $A$ minimizes the strictly convex function $\Phi(A):= \sum_i v_i(A_i)^2$ among all utilitarian allocations, and the same holds for the partial allocation $\hat{A}$. Then, by~\Cref{prop:mrv-ef1}, we get that $\hat{A}$ is \EF{1}. By the identical valuations assumption, $\hat{A}$ is also \EQ{1}.

In going from $\hat{A}$ to $A$, each good that is added back has zero marginal value for the agent it is assigned to under $A$. Thus, the allocation $A$ is also \EQ{1}, which readily implies that for all $p \leqslant 1$, the \PoE{} for $p$-mean welfare is 1, as desired.
\end{proof}

The bound in \Cref{prop:mrv-pof-1} is, in a certain sense, the best that can be obtained. We will now show that with more than one type of agent under binary submodular valuations, the PoE is at least $n/6$ for utilitarian welfare. Hence we cannot obtain bounds on the PoE that depend on the number of agent types for all $p \le 1$, as we did for binary additive valuations.
    
\pofmrvtwotypes*

\begin{proof}
In our example for the lower bound, we represent goods as vectors (i.e., elements of a linear matroid). Then the value of an agent for a bundle is just the number of linearly independent vectors in the bundle. Fix $k \in \mathbb{N}$. Our example will have $2k$ agents and $k^2+k$ goods.\\

\noindent \textbf{Goods:} There are $k(k+1)$ goods, consisting of $k+1$ groups of $k$ goods each. The groups are $G_1$, $G_2$, $\ldots$, $G_{k+1}$.

\noindent \textbf{Agents:} There are $2k$ agents, with $k$ agents of type 1 and $k$ agents of type 2. Agents of type $1$ see goods in $G_1$ as the standard basis vectors for $\mathbb{R}^k$, and goods in $G_j$ for $j \neq 1$ as zero vectors. Thus, for an agent $i$ of type 1, $v_i(G_1) = v_i(M) = k$, and $v_i(G_j) = 0$ for $j > 1$.

Agents of type $2$ see the goods in each group $G_i$ as the standard basis vectors for $\mathbb{R}^k$, and hence for an agent $i$ of type 2, $v_i(G_j) = v_i(M) = k$, for all $j \in [k+1]$. Thus, the valuations are normalised.

In an \EQ{1} allocation, each agent of type 1 has value at most 1, and hence the social welfare is at most $3k$. In the optimal allocation, each agent of type 1 gets a single vector from $G_1$. Each agent of group 2 gets assigned an entire group $G_j$ of vectors, and hence has value $k$. The optimal social welfare is thus $k + k^2$, and hence the PoE is at least $k/3$, or $n/6$, where $n$ is the number of agents.
\end{proof}

Note that for the example in the proof of \Cref{thm:pofmrvtwotypes}, for any $p \in (0,1]$, the PoE is 

\[
\Lambda_p ~=~ \displaystyle \left(\frac{\frac{1}{2k}\left(k \times 1 + k \times k^p\right)}{\frac{1}{2k}\left(k \times 1 + k \times 2^p\right)} \right)^{1/p} ~=~ \left(\frac{1 + k^p}{ 1 + 2^p} \right)^{1/p} ~\geqslant~ \frac{k}{3^{1/p}} \, ,
\]

\noindent and hence the PoE depends on the number of agents, even with two types. Similarly, for the Nash social welfare, one obtains the PoE as $\sqrt{k/2} = \sqrt{n/4}$.

For $p < 0$, for this example, the PoE is a constant that depends on $p$ (for example, for $p = -1$, the PoE for this example is $1.5$). It is possible that for $p < 0$ the PoE may depend on the agent types, rather than number of agents. We leave this as an open question.

Despite this, we show that $2n$ is an upper bound on the PoE for all $p \le 1$. For an allocation $A = (A_1, \ldots, A_n)$ of the goods, we say good $g$ is \emph{valuable} for $i$ if $v_i(A_i \cup g) > v_i(A_i)$ (and $i$ \emph{values} $g$ in this case). 




    
\pofmrvubn*

\begin{proof}
As before, let $A^\star$ be an allocation with optimal Nash welfare. If $A^\star$ is an \EQ{1} allocation, we are done, since from~\Cref{prop:NashOpt_Simultaneously_Maximizes_p-mean}, $A^\star$ simultaneously maximizes the $p$-mean welfare for all $p \leqslant 1$. Else, we construct the truncated allocation $B$ as described in~\Cref{subsec:Truncated_Allocation}. We will show that for every agent $i$ with non-zero value in $B$, $v_i(B_i) \ge W/(2n)$, where $W$ is the normalisation constant. It follows that the PoE is bounded by $2n$ for all $p \leqslant 1$.

Consider the allocation $A^\star$. Let $i_l$ be a minimum positive value agent in $A^\star$. Note that $A_{i_l}^* = B_{i_l}$. Let $\nu$ be the value of agent $i_l$ under $A^*$. Since $v_{i_l}(M) = W$, there are $W - \nu$ goods that $i_l$ values that are allocated to other agents. Further, any agent $i \neq i_l$ is allocated at most $\nu+1$ goods that $i_l$ values, since otherwise, we can transfer a good that $i_l$ values from $i$ to $i_l$ and increase the Nash social welfare of allocation $A^*$. Hence, $W-\nu \le (n-1)(\nu+1)$, or $W \le n\nu + n - 1 \le 2n\nu$ for $\nu \ge 1$. Thus for any agent $i$, $v_i(B_i) \ge v_i(B_{i_l}) = \nu \ge W/(2n)$, as required. 
%
%
%
%
\end{proof}

\section{Some Concluding Remarks on Chores}

Our focus in the paper has been on goods, where agents have non-negative marginal utility for all items. We briefly remark on the case of bads or chores, where all marginal utilities are non-positive. Consider any instance with binary additive valuations, i.e., the value of each item is either $0$ or $-1$. It is not hard to see that in these instances, there is always a utilitarian optimal \EQ{1} allocation: if chore $c$ has value $0$ for an agent $i$, assign $c$ to $i$. The remaining chores have value $-1$ for all agents, and can be assigned using the round robin procedure. This allocation is clearly \EQ{1} and also achieves the best possible utilitarian welfare.

For more general additive instances with chores, we now show that the PoE is unbounded, even in very simple cases.\footnote{For chores, we adopt the natural definition of PoE: the ratio of the utilitarian welfare of the best \EQ{1} allocation, to the maximum utilitarian welfare obtainable in any allocation. Note that if the denominator is $0$, then so is the numerator (and this can be identified in polynomial time).} To this end, consider the following example involving $2n$ items and $n+1$ agents. The first $n$ agents mildly dislike the first $n$ chores and severely dislike the last $n$, while it is the opposite for the $(n+1)$th agent, who strongly dislikes the first $n$ items and mildly dislikes the last $n$. 

\begin{center}
\begin{tabular}{|c|c|c|}
\hline
& $c_1,\cdots,c_n$ & $c_{n+1},\cdots,c_{2n}$ \\
\hline
$a_1$, $\ldots$, $a_n$ & $-\epsilon$ & $-1$ \\
\hline
$a_{n+1}$ & $-1$ & $-\epsilon$ \\
\hline
\end{tabular}
\vspace{0.2in}
\end{center}

In this example, the maximum utility is $-2n \epsilon$: assign the first $n$ chores to the first agent and the last $n$ chores to the last agent. On the other hand, in any \EQ{1} allocation, the last agent can get at most 2 chores, and hence some agent gets a chore that they value at $-1$. The PoE is thus at least $1/(2n\epsilon)$, which can be made arbitrarily large by choosing $\epsilon$ appropriately. Note that this instance has two item types, two agent types, and only two distinct entries in the valuation matrix. Relaxing any of these conditions implies identical valuations, where the PoE is $1$; so, in some sense, this is a ``minimally complex'' example that already exhibits unbounded PoE. There is thus a sharp change in the PoE between instances where the values are in $\{0,-1\}$ and those where the values are in $\{-\epsilon,-1\}$. While the PoE is unbounded as $\epsilon$ approaches $0$, it ``snaps back'' to $1$ at $\epsilon=0$. 

To conclude, we obtain nearly tight bounds on the price of equity in terms of agent types for the $p$-mean welfare spectrum. This captures, as special cases, the notions of utilitarian, egalitarian, and Nash welfare. Our bounds are in terms of agent types ($r$) rather than the number of agents. Overall, our results provide a fine-grained perspective on the behavior of the price of equity parameterized by $p$ and $r$. 

In future work, it would be interesting to extend the insights that we obtain in this work beyond the domain of binary valuations. We also propose obtaining bounds on the PoE parameterized by other structural parameters, such as the number of item types. We note that for additive valuations, the rank of the valuation matrix is a lower bound on the number of item types, and hence~\Cref{thm:ubrank} bounds the PoE in this case by the number of item types as well.

\section*{Acknowledgments}
We thank the anonymous reviewers from SAGT 2023 and COMSOC 2023 for helpful comments. We are especially grateful to an anonymous reviewer from COMSOC 2023 who pointed out an error in the proof of~\Cref{lem:pof-ub-base} in an earlier version of the draft and provided several other helpful comments. RV acknowledges support from DST INSPIRE grant no. DST/INSPIRE/04/2020/000107 and SERB grant no. CRG/2022/002621.


\bibliographystyle{plainnat} 
\bibliography{References}

\newpage

\section{Appendix}

\subsection{A proof for~\Cref{prop:NashOpt_Simultaneously_Maximizes_p-mean}}
\label{subsec:NashOpt}

\NashOptSimultaneouslyMaximizespmean*

The following property of leximin allocations is useful in the proof.

\begin{proposition}
    For agents with binary submodular valuations, let $A$ be a utilitarian optimal allocation so that $\max_i v_i(A_i) \le \min_i v_i(A_i) + 1$. Then $A$ is a leximin allocation.
    \label{prop:leximin-if-nearly-equal}
\end{proposition}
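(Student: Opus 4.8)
The plan is to invoke the equivalence established in \Cref{prop:nsw-is-optimal}. Taking the symmetric strictly convex function $\Phi(A) := \sum_{i} v_i(A_i)^2$, that proposition tells us an allocation is leximin if and only if it is a minimizer of $\Phi$ over all utilitarian optimal allocations. So it suffices to show that the given nearly-balanced allocation $A$ minimizes $\Phi$ among the utilitarian optimal allocations, after which leximin-ness is immediate.

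First I would observe that every utilitarian optimal allocation realizes the same utilitarian welfare, say $U := \sum_i v_i(A_i)$, so the value profile of any such allocation is a vector of $n$ nonnegative integers summing to $U$ (integrality holds since, for binary submodular valuations, each $v_i(A_i)$ is a matroid rank). The key combinatorial fact is that, among all length-$n$ nonnegative integer vectors with fixed sum $U$, the quantity $\sum_i v_i^2$ is minimized exactly by the \emph{balanced} profiles, i.e., those whose entries all lie in $\{\lfloor U/n\rfloor, \lceil U/n\rceil\}$. This is the standard smoothing argument: if two entries satisfy $v_i \ge v_j + 2$, then replacing $(v_i, v_j)$ by $(v_i - 1, v_j + 1)$ changes the sum of squares by $2(v_j - v_i + 1) \le -2 < 0$, so any non-balanced profile can be strictly improved; moreover all balanced profiles summing to $U$ share the same value of $\sum_i v_i^2$.

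Now the hypothesis $\max_i v_i(A_i) \le \min_i v_i(A_i) + 1$ says precisely that the value profile of $A$ is balanced (all entries lie in $\{\ell, \ell+1\}$ for $\ell = \min_i v_i(A_i)$, which forces $\ell = \lfloor U/n\rfloor$), so $A$ attains the global minimum of $\Phi$ over \emph{all} integer vectors summing to $U$. Since the value profiles of utilitarian optimal allocations form a subset of these vectors and $A$ itself is such an allocation, $A$ a fortiori minimizes $\Phi$ over the utilitarian optimal allocations. Applying the equivalence of \Cref{prop:nsw-is-optimal} then gives that $A$ is leximin.

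The only real content is the smoothing fact and the (trivial but worth stating) observation that it is legitimate to argue via the \emph{larger} set of all integer vectors summing to $U$ rather than just the utilitarian optimal ones, precisely because $A$ lies in the smaller set while the minimum is computed over the larger one. I do not anticipate a genuine obstacle here, as both ingredients are elementary once \Cref{prop:nsw-is-optimal} is available; the mild care needed is only in confirming that the near-balance hypothesis does coincide with the global minimizer condition.
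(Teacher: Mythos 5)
Your proof is correct, but it takes a genuinely different route from the paper's. The paper proves \Cref{prop:leximin-if-nearly-equal} directly and self-containedly: assuming $A$ is not leximin, it takes a leximin allocation $A'$, compares the two sorted value profiles at the first index $k$ where they differ, and uses near-balance plus integrality to get $v_i(A_i) \le v_k(A_k)+1 \le v_{\pi(k)}(A'_{\pi(k)}) \le v_{\pi(i)}(A'_{\pi(i)})$ for all $i>k$; summing then shows $A'$ has strictly larger utilitarian welfare than $A$, contradicting utilitarian optimality. Your argument instead reduces the claim to the imported equivalence of \Cref{prop:nsw-is-optimal} with $\Phi(A)=\sum_i v_i(A_i)^2$, supplying only the elementary smoothing fact that, among nonnegative integer vectors of fixed sum $U$, the minimizers of the sum of squares are exactly the balanced profiles; your checks (that near-balance plus $\min_i v_i(A_i)=\ell$ forces $\ell=\lfloor U/n\rfloor$, and that it is legitimate to minimize over the larger set of all integer vectors) are both sound, and there is no circularity since \Cref{prop:nsw-is-optimal} is taken as a black box from \citep{BCI+21finding} rather than proved via this proposition. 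What each approach buys: yours is a short, clean reduction that reuses machinery already present in the paper (the same $\Phi$ appears in \Cref{prop:mrv-ef1} and in the proof of \Cref{prop:mrv-pof-1}), and it actually establishes slightly more, namely that $A$'s profile is a global minimizer of $\Phi$ over all integer profiles with sum $U$; the paper's proof, by contrast, needs nothing beyond integrality and the definitions, which is arguably preferable for a statement serving as a basic lemma, since it does not lean on the full strength of the external equivalence theorem. One small point to make explicit if you keep your route: statement~1 of \Cref{prop:nsw-is-optimal} presupposes that $A$ lies in the feasible set (i.e., is itself utilitarian optimal); this holds by hypothesis, but it should be stated when you invoke the equivalence.
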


\begin{proof} (of \Cref{prop:leximin-if-nearly-equal})
    Assume without loss of generality that $v_i(A_i) \le v_{i+1}(A_{i+1})$ for $i \in [n-1]$. If $A$ is not a leximin allocation, let $A'$ be a leximin allocation. Let permutation $\pi \in S_n$ be such that $v_{\pi(i)}(A'_{\pi(i)}) \le v_{\pi(i+1)}(A'_{\pi(i+1)})$ for $i \in [n-1]$. Then there exists $k \in [n]$, so that $v_i(A_i) = v_{\pi(i)}(A'_{\pi(i)})$ for $i < k$, and $v_k(A_k) < v_{\pi(k)}(A'_{\pi(k)})$. Note that for $i > k$, 
    \[
     v_i(A_i) ~\le~ v_k(A_k) + 1 ~\le~ v_{\pi(k)}(A'_{\pi(k)}) ~\le~ v_{\pi(i)}(A'_{\pi(i)}) \, .
    \]

     But then $\sum_{i=1}^n v_{\pi(i)}(A'_{\pi(i)}) > \sum_{i=1}^n v_i(A_i)$, and allocation $A$ cannot be utilitarian optimal.
\end{proof}

The following results are shown by~\cite{BCI+21finding}. 

\begin{proposition}[\citealp{BCI+21finding}, Lemma 3.12]
   For agents with binary submodular valuations, let $A$ be a utilitarian optimal allocation that is not a leximin allocation. Let agents $i$, $j$ be such that $v_j(A_j) \ge v_i(A_i) + 2$.\footnote{By~\Cref{prop:leximin-if-nearly-equal}, such agents must exist.} Then there is another allocation $A'$ that is utilitarian optimal and satisfies (i) $v_j(A_j') = v_j(A_j) - 1$, (ii) $v_i(A_i') = v_i(A_i) + 1$, and (iii) the values for other agents are unchanged.
   \label{prop:not-leximin-1}
\end{proposition}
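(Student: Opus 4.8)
The plan is to reprove this via an augmenting-path (exchange) argument on the goods, following the matroid-rank toolkit of~\cite{BCI+21finding}. First I would assume without loss of generality that $A$ is \emph{clean}, so that $v_k(A_k) = |A_k|$ for every agent $k$; any wasted good may be removed (preserving all values) and re-attached at the end with zero marginal value. I then build a directed \emph{exchange graph} $G$ on the ground set $M$: place an arc $g \to h$ whenever some agent $k$ holds $h \in A_k$, does not hold $g$, and can swap without loss, i.e. $v_k((A_k \setminus \{h\}) \cup \{g\}) = v_k(A_k)$. Intuitively, an arc $g \to h$ means that if good $g$ is ``freed'', the agent currently holding $h$ can absorb $g$ and release $h$ instead, keeping its own value unchanged. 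I designate as \emph{sources} the goods in $A_j$, which $j$ can release, dropping its value by exactly one since $A_j$ is clean, and as \emph{targets} the goods $g$ with $v_i(A_i \cup \{g\}) = v_i(A_i) + 1$, which $i$ can absorb to gain exactly one.

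Given a shortest directed path $g_0 \to g_1 \to \cdots \to g_t$ from a source $g_0 \in A_j$ to a target $g_t$, I would construct $A'$ by executing the associated chain of moves: $j$ releases $g_0$; the agent realizing the arc $g_s \to g_{s+1}$ swaps $g_{s+1}$ out for the just-freed $g_s$; and finally $i$ takes the last freed good $g_t$. By construction every intermediate agent keeps its value, $j$ loses exactly one, and $i$ gains exactly one, giving (i)--(iii). Since the transfer changes the utilitarian welfare by $+1 - 1 = 0$, the resulting allocation $A'$ remains utilitarian optimal, as required.

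The two points that need real care are the \emph{validity} of the simultaneous swaps and the \emph{existence} of the path, and the latter is the main obstacle. For validity I would invoke the minimality of the chosen path: a shortest path has no chords, and the standard matroid exchange lemma then guarantees that, once all swaps are performed at once, each intermediate bundle remains a maximal independent set of the same value (a chord would produce a strictly shorter path, contradicting minimality). For existence I would exploit the hypothesis that $A$ is \emph{not} leximin. Fix $i$ to be a minimum-value agent, and let $R$ be the set of goods reachable in $G$ from $\bigcup_{k : v_k(A_k) \ge v_i(A_i) + 2} A_k$, the bundles of all \emph{surplus} agents. If $R$ contains a target good, then some surplus agent $j$ has a path to a good that $i$ can absorb, which yields both the pair $(i,j)$ (with $v_j(A_j) \ge v_i(A_i)+2$) and the desired transfer. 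If instead $R$ contains no target, then every $g \in R$ satisfies $v_i(A_i \cup \{g\}) = v_i(A_i)$, so $A_i$ spans $R$; combined with the fact that no arc leaves $R$ (closure under the exchange relation), one shows that the minimum value cannot be increased by any reallocation, forcing $A$ to be leximin and contradicting the hypothesis. Turning ``no arc leaves $R$'' into ``the minimum value is already optimal'' is the step I expect to demand the most care, and it uses~\Cref{prop:leximin-if-nearly-equal} to certify the gap of at least two. A pleasant by-product is that this argument also pins down the correct partner $j$: an agent sharing no exchangeable goods with $i$ can never terminate such a path, which is precisely why the transfer must be arranged for a suitably chosen surplus agent rather than an arbitrary one.
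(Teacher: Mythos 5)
First, a framing note: the paper does not prove this proposition at all --- it is imported verbatim from \citet{BCI+21finding} (their Lemma 3.12) --- so the relevant comparison is with that proof, which indeed works with clean allocations and transfer paths in an exchange graph. Your machinery matches that toolkit: cleaning $A$, arcs $g \to h$ when $v_k((A_k \setminus \{h\}) \cup \{g\}) = v_k(A_k)$, and the shortest-path/no-chord argument for executing all swaps simultaneously are all standard and sound. You also correctly sensed that the pair $(i,j)$ must be read existentially (as the footnote via \Cref{prop:leximin-if-nearly-equal} suggests): for an \emph{arbitrary} pair with gap $\ge 2$ the conclusion is false --- take agent $j$ valuing $\{x_1,x_2,x_3\}$ and agents $k,i$ both valuing $\{y_1,y_2,y_3\}$, with $A_j = \{x_1,x_2,x_3\}$, $A_k = \{y_1,y_2,y_3\}$, $A_i = \emptyset$; this is utilitarian optimal and not leximin, the pair $(i,j)$ has gap $3$, yet no allocation achieves $v_j = 2$, $v_i = 1$, $v_k = 3$ simultaneously. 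So identifying the right partner $j$ is necessary, as you noted.

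The genuine gap is in your existence step. You fix $i$ to be a minimum-value agent and argue that if no target of $i$ is reachable from the surplus agents' bundles, then ``the minimum value cannot be increased by any reallocation, forcing $A$ to be leximin.'' That last inference is invalid: leximin is a \emph{lexicographic} condition, and an allocation can maximize the minimum value --- so that genuinely no transfer into any minimum-value agent exists --- while still failing leximin at the second minimum or a higher level. In that situation your dichotomy terminates in the ``no target reachable'' branch and produces no transfer at all, even though the proposition still guarantees one (between, say, an agent at the second-lowest level and a surplus agent two above it). What your argument actually establishes is the strictly weaker statement: if $A$ is utilitarian optimal and does not maximize the \emph{egalitarian} welfare, then a gap-$\ge 2$ transfer exists. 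To close the gap you must run the reachability argument at the first sorted level at which $A$'s profile falls lexicographically short --- for instance, compare $A$ against a utilitarian-optimal allocation $B$ that lexicographically dominates it and extract the transfer path from the exchange structure between $A$ and $B$ (a matroid-exchange, local-to-global step, which is essentially how \citet{BCI+21finding} proceed). Your bottom-level closure argument, even granting the delicate step you flagged, does not supply this.
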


Note that in the above proposition, the allocation $A'$ is a lexicographic improvement on $A$. 

\begin{proposition}[\citealp{BCI+21finding}, Lemma 3.13]
Let $\Psi$ be a symmetric concave function, and $A$ be a utilitarian optimal allocation with agents $i$, $j$ such that $v_j(A_j) \ge v_i(A_i) + 2$. Let $A'$ be another utilitarian optimal allocation that satisfies (i) $v_j(A_j') = v_j(A_j) - 1$, (ii) $v_i(A_i') = v_i(A_i) + 1$, and (iii) the values for other agents are unchanged. Then $\Psi(A') \ge \Psi(A)$.
\label{prop:not-leximin-2}
\end{proposition}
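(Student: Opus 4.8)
The plan is to prove this via a Pigou--Dalton / majorization argument that exploits exactly the two hypotheses on $\Psi$: symmetry and concavity. Write $u \in \mathbb{Z}_{\ge 0}^n$ for the vector of agent values under $A$, i.e.\ $u_t = v_t(A_t)$, and let $u'$ be the corresponding vector under $A'$. By the hypotheses (i)--(iii), $u'$ agrees with $u$ in every coordinate except $i$ and $j$, where $u'_i = u_i + 1$ and $u'_j = u_j - 1$. Let $w$ be the vector obtained from $u$ by swapping its $i$th and $j$th coordinates (leaving all others fixed). Since $\Psi$ is symmetric, $\Psi(w) = \Psi(u) = \Psi(A)$.

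The crux is the observation that the equalized vector $u'$ lies on the segment joining $u$ and its swap $w$. Concretely, set $d := u_j - u_i = v_j(A_j) - v_i(A_i) \ge 2$ and $\lambda := (d-1)/d$. I would first verify that $u' = \lambda u + (1-\lambda) w$: in every coordinate other than $i$ and $j$ both $u$ and $w$ take the same value, so the convex combination reproduces it; in coordinate $i$ one gets $\lambda u_i + (1-\lambda) u_j = u_i + (1-\lambda)d = u_i + 1$, and symmetrically coordinate $j$ yields $u_j - 1$, matching $u'$ exactly. The gap hypothesis $d \ge 2$ is precisely what guarantees $\lambda = 1 - 1/d \in [\tfrac12, 1)$, so this is a genuine convex combination with both weights in $(0,1)$.

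Applying concavity of $\Psi$ along this segment and then symmetry gives the result:
\begin{align*}
\Psi(A') = \Psi(u') = \Psi\bigl(\lambda u + (1-\lambda) w\bigr) \ge \lambda \Psi(u) + (1-\lambda)\Psi(w) = \Psi(A).
\end{align*}
Here ``concave'' is understood in the standard sense, as (the restriction of) a function that is concave along the line through the integer points $u$, $u'$, $w$; this is the sense in which the $p$-mean welfare (\Cref{claim:pmean-is-concave}) and the symmetric functions appearing in \Cref{prop:nsw-is-optimal} are concave.

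The argument is essentially mechanical once the three collinear points are identified, so I do not anticipate a serious obstacle. The only point requiring care is confirming that the equalizing transfer places $u'$ strictly inside the segment $[u,w]$, i.e.\ that $\lambda \in (0,1)$, which is exactly where the hypothesis $v_j(A_j) \ge v_i(A_i) + 2$ is used. (If the gap were only $1$, the transfer would map $u$ onto its swap $w$, and the conclusion would follow immediately from symmetry alone.)
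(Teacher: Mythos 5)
Your proof is correct, and since the paper itself supplies no proof of this proposition---it is imported directly from \cite{BCI+21finding} (their Lemma 3.13)---the right comparison is with that source, whose argument is exactly your Pigou--Dalton/majorization one: with $w$ the $(i,j)$-swap of the value vector $u$ and $\lambda = (d-1)/d$ for $d = v_j(A_j)-v_i(A_i) \ge 2$, one has $u' = \lambda u + (1-\lambda)w$, so symmetry gives $\Psi(w)=\Psi(u)$ and concavity gives $\Psi(u') \ge \lambda\Psi(u)+(1-\lambda)\Psi(w) = \Psi(u)$. You also correctly flag the one point needing care---that $\Psi$ is defined on integer vectors, so ``concave'' must be read as the restriction of a concave function (as holds for the $p$-mean welfares via \Cref{claim:pmean-is-concave})---and your observation that $d \ge 2$ forces $\lambda \in [\tfrac{1}{2},1)$, with the $d=1$ case following from symmetry alone, is exactly right.
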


We can now prove~\Cref{prop:NashOpt_Simultaneously_Maximizes_p-mean}. Firstly, note that the Nash welfare maximizing allocation is also leximin from~\Cref{prop:nsw-is-optimal}, and hence maximizes the egalitarian welfare (in other words, maximizes $p$-mean welfare for $p \rightarrow -\infty$). For any fixed $p \le 1$, let $A$ be an allocation that maximizes the $p$-mean welfare. Since the $p$-mean welfare is strictly increasing, allocation $A$ is Pareto optimal, and hence from~\Cref{prop:po-means-utilitarian-optimal} is also utilitarian optimal. We will show that there exists an allocation $B$ so that $\mathcal{W}_p(B) = \mathcal{W}_p(A)$, and $B$ is a leximin allocation. If $A$ is not leximin, then by Propositions~\ref{prop:not-leximin-1} and~\ref{prop:not-leximin-2}, there is an allocation $A'$ so that $\mathcal{W}_p(A') \ge \mathcal{W}_p(A)$, and $A'$ lexicographically dominates $A$. Then either $A'$ is a leximin allocation, or we can continue in this manner until we get a leximin allocation $B$ with $\mathcal{W}_p(B) \ge \mathcal{W}_p(A)$, as required. Finally, by~\Cref{prop:po-means-utilitarian-optimal}, an allocation is leximin if and only if it maximizes the Nash social welfare, hence if allocation $A^*$ maximizes the Nash social welfare, it also maximizes the $p$-mean welfare for all $p \le 1$.

\subsection{An alternate proof of \Cref{thm:doubly-normalised}}

We now turn to an alternate proof of the fact that doubly normalised instances have PoE 1.

We first state some results that we will use. Consider a bipartite graph $G$ with bipartition $A$ and $B$. A set of edges $T \subseteq E(G)$ is said to be a \emph{$q$-expansion} from $A$ to $B$ if every vertex of $A$ is incident to exactly $q$ edges in $T$ and exactly $q|A|$ vertices in $B$ are incident on $T$. A perfect matching, for instance, is a $1$-expansion, and a star with $q$ leaves is a $q$-expansion.  

\begin{lemma}
[\citealp{DBLP:books/sp/CyganFKLMPPS15}, Lemma 2.17] 
\label{lem:qexpansion}
Let $G$ be a bipartite graph with bipartition $A$ and $B$. There is a $q$-expansion from $A$ to $B$ if and only if $|N(X)| \geq q|X|$ for every $X \subseteq A$. Furthermore, if there is no $q$-expansion from $A$ to $B$, then a set $X \subseteq A$ such that $|N(X)| < q|X|$ can be found in polynomial time.
\end{lemma}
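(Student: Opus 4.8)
The plan is to reduce the existence of a $q$-expansion to an ordinary bipartite matching problem and then invoke Hall's marriage theorem. First I would build an auxiliary bipartite graph $G'$ with parts $A'$ and $B$, where $A'$ contains $q$ copies $a^{(1)}, \dots, a^{(q)}$ of each vertex $a \in A$, and each copy $a^{(j)}$ is joined to exactly the neighborhood $N_G(a)$ in $B$. The key correspondence is that a matching in $G'$ saturating $A'$ is the same object as a $q$-expansion from $A$ to $B$ in $G$: distinct copies of the same $a$ are matched to distinct vertices of $N_G(a)$ (since a matching uses each $B$-vertex at most once), and copies of different $A$-vertices are likewise matched to distinct $B$-vertices, so each of the $q|A|$ matched $B$-vertices has degree exactly $1$ in the associated edge set $T$. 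These are precisely the defining conditions of a $q$-expansion (each vertex of $A$ incident to exactly $q$ edges, and exactly $q|A|$ vertices of $B$ incident to $T$).

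Next I would verify that Hall's condition on $G'$ is equivalent to the stated expansion condition on $G$. The crucial observation is that for any $S \subseteq A'$, the neighborhood $N_{G'}(S)$ depends only on the set of base vertices $X := \{a \in A : S \text{ contains some copy of } a\}$, and in fact $N_{G'}(S) = N_G(X)$. Consequently, to check Hall's condition $|N_{G'}(S)| \ge |S|$ it suffices to consider sets $S$ that contain \emph{all} $q$ copies of each of their base vertices, since adding the missing copies of an already-represented base vertex increases $|S|$ without enlarging $N_{G'}(S)$, making such saturated sets the tightest constraints. For such $S$ we have $|S| = q|X|$ and the condition reads exactly $|N_G(X)| \ge q|X|$. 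Thus the expansion condition for every $X \subseteq A$ is equivalent to Hall's condition on $G'$, and hence, by Hall's theorem, equivalent to the existence of a matching saturating $A'$, i.e.\ to the existence of a $q$-expansion.

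For the algorithmic ``furthermore'' part, I would run a maximum-matching routine on $G'$. If the maximum matching does not saturate $A'$, then by K\"onig's theorem (equivalently, by analyzing the alternating-path structure of the matching) a violating set $S \subseteq A'$ with $|N_{G'}(S)| < |S|$ can be extracted in polynomial time. Taking $X$ to be the base set of $S$ yields $|N_G(X)| = |N_{G'}(S)| < |S| \le q|X|$, so $X$ is the desired witness. This runs in polynomial time because $G'$ has $q|A| + |B|$ vertices and $q|E(G)|$ edges, which is polynomial since $q$ is polynomially bounded in our application.

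The only delicate point is the reduction to ``all copies present'' sets in the Hall step described above; this is a matter of care rather than a genuine obstacle, and I do not anticipate any real difficulty. An alternative, equally clean route would replace the matching/Hall argument by a max-flow formulation (a source joined to each $a \in A$ with capacity $q$, the edges of $G$ with capacity $1$, and each $b \in B$ joined to a sink with capacity $1$), where a flow of value $q|A|$ is exactly a $q$-expansion and max-flow/min-cut duality delivers both the characterization and the violating set; I would retain the Hall-based version as the primary argument for brevity.
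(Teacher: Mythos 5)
Your proof is correct: the paper itself does not prove this lemma but imports it by citation from Cygan et al., and your vertex-duplication reduction to Hall's theorem (with the K\"onig/alternating-path extraction of a violating set for the algorithmic part) is exactly the standard argument given in that cited source. The only point worth making self-contained, rather than deferring to ``our application,'' is the polynomial bound on the blown-up graph: one may assume $q \le |B|$ without loss of generality, since otherwise any single vertex $a \in A$ already witnesses $|N(\{a\})| \le |B| < q$, so the construction with $q|A| + |B|$ vertices is genuinely polynomial in the input size.
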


A non-negative, square ($m \times m$) matrix $Y$ is said to be \emph{doubly stochastic} if the sum of entries in each row and each column is $1$, that is, $\sum_{i \in [m]} Y[i][j] = 1$ $\forall \, j \in [m]$ and  $\sum_{j \in [m]} Y[i][j] = 1$ $\forall \, i \in [m]$. A \emph{permutation matrix} is a doubly stochastic matrix such that all the entries are either $0$ or $1$.
The following result, known as the Birkhoff-von Neumann Theorem, states that a doubly stochastic matrix can be represented as a convex combination of permutation matrices.

\begin{theorem}[\citealp{1573387450959988992,Neumann+1953+5+12}] 
\label{thm:birkhoff} 
Let $Y$ be a doubly stochastic matrix. Then, there exist positive weights $w_1, w_2, \ldots w_k$ and permutation matrices $P_1, P_2, \ldots P_k$ such that $\sum_{i \in [k]} w_i = 1$ and $Y = \sum_{i \in [k]} w_i P_i$.

In other words, the convex hull of the set of all permutation matrices is the set of doubly-stochastic matrices.
\end{theorem}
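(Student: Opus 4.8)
The plan is to prove the theorem by induction on the number of nonzero entries of $Y$, peeling off one permutation matrix at each stage. The base case is an $m \times m$ doubly stochastic matrix with exactly $m$ nonzero entries (the minimum possible, since each of the $m$ rows needs at least one positive entry to sum to $1$): here every row and every column must contain exactly one nonzero entry, which therefore equals its row sum of $1$, so $Y$ is already a permutation matrix and the decomposition is trivial.

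For the inductive step I would first extract a permutation matrix sitting inside the support of $Y$. Define a bipartite graph $G = (R \cup C, E)$ where $R$ indexes rows, $C$ indexes columns, and $\{r_i, c_j\} \in E$ exactly when $Y_{ij} > 0$. The key claim is that $G$ admits a perfect matching, which I would verify through Hall's condition (equivalently, a $1$-expansion via \Cref{lem:qexpansion}). For any $X \subseteq R$, conservation of mass gives the required bound:
\[
|X| = \sum_{i \in X} \sum_{j} Y_{ij} = \sum_{i \in X} \sum_{j \in N(X)} Y_{ij} \le \sum_{j \in N(X)} \sum_i Y_{ij} = |N(X)|,
\]
where the middle equality holds because every positive entry in a row of $X$ lies in a column of $N(X)$ by construction of $G$, and the two outer equalities use that each row and each column of $Y$ sums to $1$. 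Hence Hall's condition is met; let $\sigma$ be the resulting perfect matching and $P$ its permutation matrix, so that $Y_{i,\sigma(i)} > 0$ for all $i$.

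Next I would subtract a suitable multiple of $P$. Set $\theta := \min_i Y_{i,\sigma(i)} > 0$. If $\theta = 1$ then $Y = P$ and we are done; otherwise let $Y' := Y - \theta P$. By construction $Y'$ is entrywise nonnegative, each of its rows and columns sums to $1 - \theta > 0$, and it has strictly fewer nonzero entries than $Y$ (the entry attaining the minimum vanishes). Thus $\tfrac{1}{1-\theta} Y'$ is doubly stochastic with a smaller nonzero count, and by the induction hypothesis equals $\sum_j \mu_j P_j$ with positive weights $\mu_j$ summing to $1$ and each $P_j$ a permutation matrix. Rearranging yields
\[
Y = \theta P + (1-\theta)\sum_{j} \mu_j P_j,
\]
which is a convex combination since $\theta + (1-\theta)\sum_j \mu_j = 1$. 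This establishes that every doubly stochastic matrix lies in the convex hull of permutation matrices; the reverse inclusion is immediate, as double stochasticity is preserved under convex combinations.

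The main obstacle I anticipate is the combinatorial extraction step, namely guaranteeing a permutation matrix inside the support of $Y$; but this reduces cleanly to Hall's condition, and the double-counting inequality above is exactly what makes it go through. Everything else is bookkeeping: nonnegativity of $Y'$, the monotone decrease of the nonzero count as a termination potential, and the observation that the process halts after at most $m^2$ rounds since each round eliminates at least one nonzero entry, bounding the number $k$ of permutation matrices in the decomposition.
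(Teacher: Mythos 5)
Your proof is correct. Note, however, that the paper does not prove this statement at all: it is the classical Birkhoff--von Neumann theorem, which the paper simply imports with citations to Birkhoff and von Neumann and then uses as a black box in the alternate proof of \Cref{thm:doubly-normalised}. So there is no in-paper argument to compare against; what you have written is the standard textbook proof, and it goes through exactly as you describe. The extraction step is sound: Hall's condition holds by your double-counting inequality, and since the bipartite graph is balanced ($|R|=|C|=m$), a matching saturating $R$ is perfect, so a permutation matrix $P$ lives in the support of $Y$. Subtracting $\theta P$ with $\theta=\min_i Y_{i,\sigma(i)}$ kills at least one nonzero entry without creating new ones, the rescaled matrix $\tfrac{1}{1-\theta}Y'$ is again doubly stochastic, and the induction on the support size is well-founded with base case $m$ nonzeros (where $Y$ is itself a permutation matrix, as you argue). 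The reverse inclusion is indeed immediate. Two cosmetic remarks: your appeal to \Cref{lem:qexpansion} with $q=1$ is a nice touch since it keeps the argument within the paper's own toolkit, and your bound of at most $m^2$ permutation matrices is valid though crude (the same potential argument gives $k \leqslant m^2-2m+2$, since the final matrix still has at least $m$ nonzeros); neither affects correctness.
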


We are now ready to prove that doubly normalised instances have PoE $1$.

\doublynormalised*

\begin{proof} (of \Cref{thm:doubly-normalised}) Let $\mathcal{I}= \langle N, M, \mathcal{V} \rangle$ be a doubly normalised instance. Let $G = \left(N, M\right)$ be the corresponding ($W, W_c$)-regular bipartite graph with agents and goods as bi-partitions and $\left(i, g\right) \in E(G)$ if and only if agent $i$ values the good $g$. Note that the number of edges in $G$ is $nW$, as exactly $W$ edges are incident on each agent. Likewise, as exactly $W_c$ edges are incident on each of the $m$ goods, therefore, $|E(G)| = nW = mW_c$. 

We first consider the case when $n/m = W/W_c = p$ for some integer $p$. That is, the number of agents is an integer multiple of the number of goods, and show the existence of a non-wasteful $\EQ{}$ allocation that allocates a utility of $p$ to every agent.

Consider any subset $S \subseteq N$. The number of edges from $S$ to its neighborhood $N(S)$ is exactly $W|S|$. The number of edges incident on $N(S)$ in $G$ is exactly $W_c|N(S)|$. Then $W_c |N(S)| \ge W |S|$, and hence $|N(S)| \ge p |S|$. 

By \Cref{lem:qexpansion}, $G$ must have a $p$-expansion, say $T$, from $N$ to $M$. Now consider the allocation $A$ that allocates along the edges of $T$. Precisely, if $(i, g) \in T$, then good $g$ is allocated to the agent $i$ under $A$. Then by definition of $p$-expansion, $A$ is an indeed an \EQ{} allocation as it allocates exactly $p$ goods to every agent. Since $A$ is non-wasteful, it achieves the optimal utilitarian welfare $m$.
    
    
    
    Now suppose $W$ is not an integer multiple of $W_c$. We propose the following version of the probabilistic serial algorithm that constructs a non-wasteful \EQ1{} allocation.\footnote{The probabilistic serial algorithm was proposed by~\cite{BM01new} in the context of the assignment problem where the number of goods and agents is the same. Subsequently,~\cite{AFS+23best} used this algorithm (in combination with the Birkhoff-von Neumann decomposition) to study fair allocation with an unequal number of goods and agents. The work of~\cite{AFS+23best} focuses on computing a randomized allocation with desirable ex-ante and ex-post \emph{envy-freeness} guarantees. By contrast, our work uses the technique of~\cite{AFS+23best} to achieve an \emph{equitability} guarantee.}
    
    Let $W = pW_c+q$ for some constant $p$ and $q \neq 0$. We create $p+1$ copies of every agent, say $\{a_i^1, a_i^2, \ldots a_i^{p+1}\}$. We also add $t = \left(p+1\right)n - m$ many dummy goods to the instance which are valued at zero by everyone. Note that the new instance has an equal number of agents and goods, precisely $(p+1)n$. Now each good is represented as food, and the agent copies start eating away all the available goods that they like, all at once. By the structure of the instance, exactly $W_c$ agent copies eat the same $W$ goods at the same time, and the same speed -- at the rate of one good per unit time. In particular, at the $t^{th}$ timestep, $\nicefrac{1}{\left(W-(t-1\right)W_c)}$ fraction of the remaining good is consumed by the $t^{th}$ copy of the agent. This gives us a square matrix $Y$ with $\left(p+1\right)n$ columns as goods and $\left(p+1\right)n$ rows as copies of the agents. The entry $Y[a_i^j][g]$ corresponds to the fraction of good $g$ eaten by $j^{th}$ copy of agent $i$ at timestep $j$.
    
    In particular, in the first timestep, $W$ goods are consumed by all the first copies ($a_i^1$) of $W_c$ agents (who like them) simultaneously, each of whom eats  $\nicefrac{1}{W}$ fraction of $W$ goods each. At second timestep, $\left(1-\nicefrac{W_c}{W}\right)$ fraction of these $r$ goods remain, out of which $\nicefrac{1}{\left(W-W_c\right)}$ fraction is consumed by the second copy of all the $W_c$ agents and so on. That is,  assuming $N_g$ be the set of $W_c$ agents who like $g$, we have:
    
    $$\sum_{i \in N_g} v_i(a_i^1) = \frac{W_c}{W}$$ $$\sum_{i \in N_g} v_i(a_i^2) = W_c \left(\frac{1}{W-W_c}\left(1-\frac{W_c}{W}\right) \right) = \frac{W_c}{W}$$
    $$\sum_{i \in N_g}v_i(a_i^3)= W_c \left(\frac{1}{W-2W_c}\left(1-\frac{2W_c}{W}\right)\right)=\frac{W_c}{W}$$ $$\vdots$$ $$\sum_{i \in N_g}v_i(a_i^p)=W_c \left(\frac{1}{W-(p-1)W_c} \left(1- \frac{(p-1)W_c}{W}\right)\right) = \frac{W_c}{W}$$

    For the last agent copy $a_i^{p+1}$, the fraction of each of the $W$ goods that remain is $1- \nicefrac{pW_c}{W} = \nicefrac{(W-pW_c)}{W} = \nicefrac{q}{W}$. This is divided equally among the last copy of all the $W_c$ agents, each of them getting $\nicefrac{q}{WW_c}$ fraction.
    
    Also, $a_i^{p+1}$ eats $\nicefrac{q}{WW_c}$ of $W$ goods, thereby summing to $\nicefrac{q}{W_c}$. Now $\nicefrac{q}{W_c} < 1$, and we have $t$ dummy goods remaining to be consumed. Therefore, at this timestep, $a_i^{p+1}$ for $i \in [n]$, start eating $\left(\frac{1-\nicefrac{q}{W_c}}{t}\right)$ fraction of each of the dummy goods, thereby consuming one unit of good, in aggregate. 
    
    
    We now claim that $Y$ is a doubly stochastic matrix. To this end, we first show that the fractions in every column of $Y$ adds up to $1$. For a column $c$ corresponding to an original good $g$, summing over the $p+1$ copies of $s$ agents who like $g$, we get:
    
    $$\sum_{j=1}^{p+1} \sum_{i \in N_g} v_i(a_i^j) = \sum_{j=1}^{p} \sum_{i \in N_g} v_i(a_i^j) + v_{i}(a_i^{p+1}) = p \left(\frac{W_c}{W} \right) + W_c\left(\frac{q}{WW_c}\right) = \frac{pW_c+q}{W} = 1$$

    Now for a column $c$ corresponding to a dummy good $g$, each of the agent copies eat $\left(\frac{1-\nicefrac{q}{W_c}}{t}\right)$ fraction of the dummy goods. Since there are $n$ such agents, the sum of the fractions in column $c$ is 
    $$n\left(\frac{1-\nicefrac{q}{W_c}}{t}\right) = \frac{n(W_c-q)}{W_c(p+1)n-mW_c} = \frac{n(W_c-q)}{npW_c +nW_c - nW} = \frac{W_c-q}{W_c-q} = 1$$

    Also, it is easy to see that rows in $Y$ add up to $1$. For $j \in [1, p]$, $a_i^j$ starts eating at $j^{th}$ timestep, when $\left(1-\nicefrac{(j-1)W_c}{W}\right)$ fraction of any good $g$ remains. She eats $\nicefrac{1}{\left(W-(j-1\right)W_c)}$ fraction of $W$ such goods, that adds to $W (\frac{1}{W-(j-1)W_c}) (1-\frac{(j-1)W_c}{W}) = 1$. As for the row corresponding to $a_i^{p+1}$, it adds up to $1$ by construction.
    
    This establishes the following claim.
    
    \begin{claim}
    $Y$ is a doubly stochastic matrix. 
    \end{claim}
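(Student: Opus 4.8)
The plan is to verify directly the three defining properties of a doubly stochastic matrix: that every entry of $Y$ is non-negative, that every row sums to $1$, and that every column sums to $1$. Non-negativity is immediate, since each entry $Y[a_i^j][g]$ records the fraction of good $g$ consumed by the $j$-th copy of agent $i$ during the eating process, and a consumed fraction is never negative. The remaining two conditions carry the real work, and I would establish each by tracking the eating process timestep by timestep, reusing the displayed identities computed above.

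For the column sums I would split the columns into two kinds. For a column indexed by an original good $g$, exactly $W_c$ agents value $g$; at each timestep $j \le p$ the $j$-th copies of these agents collectively consume a $W_c/W$ fraction of $g$, while at the final timestep their $(p+1)$-th copies consume the leftover $q/W$ fraction, divided equally. Summing over timesteps then gives $p \cdot (W_c/W) + q/W = (pW_c + q)/W = 1$. For a column indexed by a dummy good, only the final copies $a_i^{p+1}$ with $i \in [n]$ eat, each consuming an equal $(1 - q/W_c)/t$ fraction; here I would substitute $t = (p+1)n - m$ and reduce the resulting expression to $1$.

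For the row sums I would argue that each copy consumes exactly one unit of good in total. A copy $a_i^j$ with $j \le p$ enters at timestep $j$, when a $1 - (j-1)W_c/W$ fraction of each good it values remains, and eats a $1/(W - (j-1)W_c)$ fraction of that remainder for each of the $W$ such goods; the product simplifies to a $1/W$ fraction of each good, so the row sums to $W \cdot (1/W) = 1$. The final copy $a_i^{p+1}$ is constructed to eat its remaining $q/W_c$ worth of valued goods together with a compensating share $1 - q/W_c$ spread across the dummy goods, so its row also sums to $1$ by design.

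I expect the main obstacle to be the bookkeeping at the final timestep and for the dummy goods: one must check simultaneously that the dummy-good count $t = (p+1)n - m$ makes every final-copy row sum to $1$ and makes every dummy column sum to $1$, and that these two demands are mutually consistent. The single identity that makes everything fit is $nW = mW_c$ (both sides count the edges of the $(W,W_c)$-regular bipartite graph $G$), which, together with $W = pW_c + q$, converts the dummy column-sum expression $n(W_c - q)/(W_c(p+1)n - mW_c)$ into $1$. Once this identity is in hand, the remaining verifications are routine substitutions.
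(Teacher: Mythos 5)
Your proposal is correct and takes essentially the same approach as the paper's proof: a direct verification that every column (split into original versus dummy goods) and every row (split by copy index $j \le p$ versus the final copy $a_i^{p+1}$) sums to $1$, using the per-timestep consumption identities together with the edge-counting identity $nW = mW_c$ to handle the dummy columns. Nothing further is needed.
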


    By \Cref{thm:birkhoff}, $Y$ can be represented as a convex combination of permutation matrices. An illustration of this is shown in \Cref{example1}. For the final allocation, one of these permutation matrices, say $P$, is selected with probability equal to the corresponding weight. A good is allocated to $a_i^j$ if and only if $P[a_i^j][g]=1$. Finally, all the goods allocated to the copies of agent $i$ are said to be allocated to agent $i$.
    
    We now claim that the resulting allocation is \EQ1{} with optimal utilitarian welfare.
    
    \begin{claim} Every integral allocation returned by the above algorithm satisfies \EQ1{}.
    \label{clm:doublynormalized}
    \end{claim}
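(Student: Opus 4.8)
The plan is to prove something slightly stronger than \EQ{1}: that in \emph{every} integral allocation returned by the algorithm, each agent receives value exactly $p$ or $p+1$. Once all agent values are confined to the two-element set $\{p, p+1\}$, equitability up to one good is immediate. First I would record the support property of the decomposition: since $Y = \sum_k w_k P_k$ with every $w_k > 0$, the identity $Y[a_i^j][g] = \sum_k w_k P_k[a_i^j][g]$ forces $P_k[a_i^j][g] = 0$ whenever $Y[a_i^j][g] = 0$. Hence the selected permutation matrix $P$ assigns to each agent copy $a_i^j$ exactly one good lying in the support of its row of $Y$, and assigns distinct goods to distinct copies (each good is a distinct column of $P$).

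Next I would read off the support of each row of $Y$ directly from the eating process. For $j \in [1,p]$, copy $a_i^j$ consumes positive fractions only of the $W$ real goods that agent $i$ values (dummy goods are consumed solely at the final timestep), so $P$ hands each such copy a distinct real good valued by $i$. For the last copy $a_i^{p+1}$, the support of its row consists of the real goods valued by $i$ together with the dummy goods, so $P$ gives it either a real good valued by $i$ or a dummy good. Because copies of a common agent receive distinct goods, agent $i$'s bundle $A_i$ therefore contains exactly $p$ distinct real goods that $i$ values (one from each of copies $1,\dots,p$), plus possibly one more real good from copy $p+1$, any remaining mass on copy $p+1$ being a value-$0$ dummy. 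As the valuations are binary additive, this yields $v_i(A_i) \in \{p, p+1\}$ for every agent $i$.

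With all values in $\{p, p+1\}$, \EQ{1} follows by a one-line check. Fix an ordered pair $(i,k)$ with $A_k \neq \emptyset$. If $v_k(A_k) \ge 1$, then $A_k$ contains a good $g$ that $k$ values, and $v_k(A_k \setminus \{g\}) = v_k(A_k) - 1 \le p \le v_i(A_i)$; if $v_k(A_k) = 0$, then $v_k(A_k \setminus \{g\}) = 0 \le v_i(A_i)$ for any $g \in A_k$. In either case the defining inequality of \EQ{1} holds. (The degenerate case $p = 0$, where there is a single copy per agent and values lie in $\{0,1\}$, is covered by the same argument with the empty range $[1,p]$.)

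The routine content above is bookkeeping; the one point that needs care, and which I would treat as the main obstacle, is the support-containment step together with the precise identification of each row's support from the construction, i.e.\ verifying that $P$ never assigns a copy a good it did not eat and that only the last copy can land on a dummy. Once these are established, the confinement of every agent's value to $\{p, p+1\}$ does all the work, and it simultaneously shows that each real good is assigned to an agent who values it, so the same allocation attains the optimal utilitarian welfare $m$.
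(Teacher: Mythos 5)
Your proof is correct and takes essentially the same route as the paper's: each agent copy receives exactly one good under the selected permutation matrix, only the last copy can land on a dummy good, so every agent's value lies in $\{p, p+1\}$, which immediately gives \EQ{1} and optimal utilitarian welfare. The only difference is that you make explicit the support-containment property of the Birkhoff--von Neumann decomposition (that $P_k[a_i^j][g]=0$ wherever $Y[a_i^j][g]=0$), which the paper uses implicitly when it asserts that agents are only allocated goods they ate.
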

    
    \begin{proof} (of \Cref{clm:doublynormalized}) Since the matrices in the decomposition are permutation matrices and the number of goods is equal to the number of agents, each of the agent-copies gets exactly one good. This implies that all the agents end up with an equal number of goods, precisely, $p+1$. \\
    Since the dummy goods are consumed by only the last agent copy, therefore, every agent gets at most one dummy good in the final allocation. Also, all the original goods are allocated non-wastefully, as except for the dummy goods, agents eat only the goods that they like.
    Therefore,
    whoever ends up with a dummy good have a utility of $p$ and the remaining agent have a utility of $p+1$, resulting in an $\EQ{1}$ allocation with optimal utilitarian welfare.
    \end{proof}

    Therefore the price of equity for doubly normalised instances is $1$. This finishes the proof of \Cref{thm:doubly-normalised}.
    \end{proof}


\begin{example}
\label{example1}
    Consider an instance with $4$ agents, $\{a_1, \ldots a_4\}$ and $6$ goods, $\{1, 2, \ldots 6\}$. We set $W=3$ and $W_c=2$. $a_1$ likes $\{1, 2, 3\}$, $a_2$ likes $\{4, 5, 6\}$, $a_3$ likes $\{2, 3, 4\}$ and $a_4$ likes $\{1, 5, 6\}$. Here, since $p=1$, we create $p+1 = 2$ copies of every agent, and introduce $(p+1)n-m=2\cdot4-6 = 2$ dummy goods. The corresponding matrix $Y$ and its decomposition is as follows.

\footnotesize
\begin{gather*}
\renewcommand*{\arraystretch}{0.8}
\begin{pmatrix}
\nicefrac{1}{3} &   \nicefrac{1}{3}     &    \nicefrac{1}{3}   &  & &  & 0 & 0   \\
         \nicefrac{1}{6}  &     \nicefrac{1}{6}     &    \nicefrac{1}{6}  &  & &  & \nicefrac{1}{4}  & \nicefrac{1}{4}    \\
 &          &  &  \nicefrac{1}{3} & \nicefrac{1}{3} & \nicefrac{1}{3}  & 0 & 0       \\
  &          &  &  \nicefrac{1}{6} & \nicefrac{1}{6} & \nicefrac{1}{6}  & \nicefrac{1}{4} & \nicefrac{1}{4}       \\
 & \nicefrac{1}{3} & \nicefrac{1}{3} & \nicefrac{1}{3}    &   &     & 0 & 0    \\
  & \nicefrac{1}{6} & \nicefrac{1}{6} & \nicefrac{1}{6}    &   &     & \nicefrac{1}{4} & \nicefrac{1}{4}    \\
 \nicefrac{1}{3} &          &   & & \nicefrac{1}{3} & \nicefrac{1}{3}   & 0 & 0     \\
 \nicefrac{1}{6} &          &   & & \nicefrac{1}{6} & \nicefrac{1}{6}   & \nicefrac{1}{4} & \nicefrac{1}{4}
\end{pmatrix}
=
\renewcommand*{\arraystretch}{0.1}
0.16
\begin{pmatrix} 1&&&&&&&  \\ 
&1&&&&&&  \\
&&&1&&&& \\
&&&&1&&& \\
&&1&&&&& \\
&&&&&&1& \\
&&&&&1&& \\
&&&&&&&1
\end{pmatrix}
 +
 0.083
 \begin{pmatrix} 1&&&&&&&  \\ 
&&1&&&&&  \\
&&&1&&&& \\
&&&&&1&& \\
&1&&&&&& \\
&&&&&&1& \\
&&&&1&&& \\
&&&&&&&1
\end{pmatrix} \\
\renewcommand*{\arraystretch}{0.1}
+
0.083
 \begin{pmatrix} 1&&&&&&&  \\ 
&&1&&&&&  \\
&&&1&&&& \\
&&&&&1&& \\
&1&&&&&& \\
&&&&&&&1 \\
&&&&1&&& \\
&&&&&&1&
\end{pmatrix} 
\renewcommand*{\arraystretch}{0.1}
+
0.083
 \begin{pmatrix} &1&&&&&&  \\ 
1&&&&&&&  \\
&&&&1&&& \\
&&&1&&&& \\
&&1&&&&& \\
&&&&&&&1 \\
&&&&&1&& \\
&&&&&&1&
\end{pmatrix} 
+
0.166
 \begin{pmatrix} &1&&&&&&  \\ 
&&&&&&1&  \\
&&&&1&&& \\
&&&&&&&1 \\
&&&1&&&& \\
&&1&&&&& \\
1&&&&&&& \\
&&&&&1&&
\end{pmatrix} \\
\renewcommand*{\arraystretch}{0.1}
+
0.166
 \begin{pmatrix} &&1&&&&&  \\ 
&&&&&&1&  \\
&&&&&1&& \\
&&&&&&&1 \\
&1&&&&&& \\
&&&1&&&& \\
1&&&&&&& \\
&&&&1&&&
\end{pmatrix} 
+
0.083
\begin{pmatrix} &&1&&&&&  \\ 
&&&&&&&1  \\
&&&&&1&& \\
&&&&&&1& \\
&1&&&&&& \\
&&&1&&&& \\
1&&&&&&& \\
&&&&1&&&
\end{pmatrix} 
+
0.166
\begin{pmatrix} &&1&&&&&  \\ 
&&&&&&&1  \\
&&&&&1&& \\
&&&&&&1& \\
&&&1&&&& \\
&1&&&&&& \\
&&&&1&&& \\
1&&&&&&&
\end{pmatrix} \\
\end{gather*}
\end{example}

\subsection{Graphs for comparing bounds on the PoE for binary additive valuations.}
\label{sec:figures}

To help in visualising our upper and lower bounds on the PoE (as presented in~\Cref{tab:Results}), we also present the PoE bounds graphically below (see \Cref{Fig:p=1,Fig:p=0,Fig:p=-1,Fig:p=-10}). Each graph shows the lower and upper bounds obtained as a function of $r$, the number of agent types. In each figure, the upper line in blue represents the upper bound, and the lower line in orange represents the lower bound. We provide the plots for four values of $p$, namely $p=1$ (the utilitarian welfare), $p=0$ (the Nash social welfare), $p=-1$, and $p=-10$ (recall that for $p \rightarrow -\infty$, the egalitarian welfare, the PoE is 1).

\begin{figure}[!htb]
\begin{minipage}{0.48\textwidth}
     \centering
     \includegraphics[width=.95\linewidth]{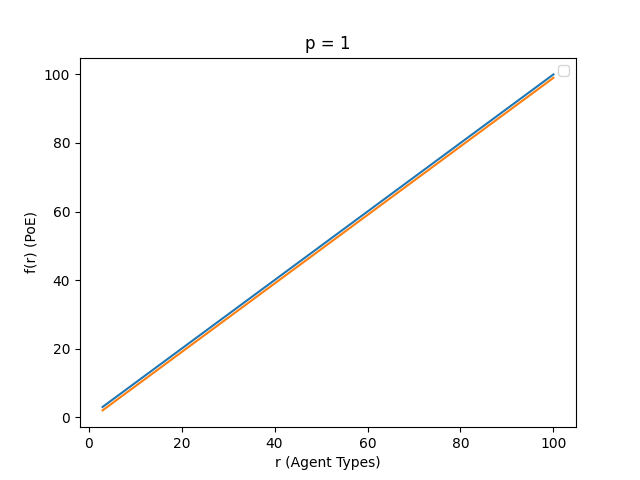}
     \caption{PoE as a function of $r$ for $p=1$}\label{Fig:p=1}
   \end{minipage}\hfill
   \begin{minipage}{0.48\textwidth}
     \centering
     \includegraphics[width=.95\linewidth]{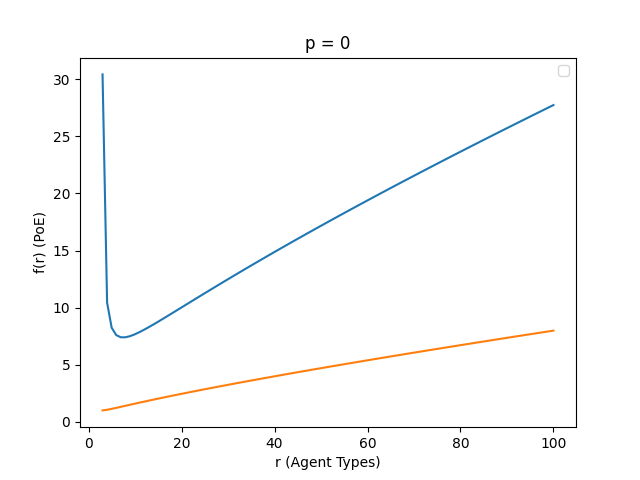}
     \caption{PoE as a function of $r$ for $p=0$}\label{Fig:p=0}
   \end{minipage}

   \begin{minipage}{0.48\textwidth}
     \centering
     \includegraphics[width=.95\linewidth]{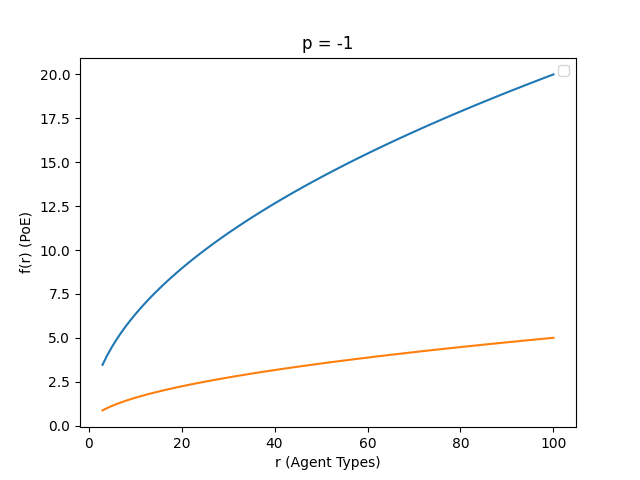}
     \caption{PoE as a function of $r$ for $p=-1$}\label{Fig:p=-1}
   \end{minipage}\hfill
   \begin{minipage}{0.48\textwidth}
     \centering
     \includegraphics[width=.95\linewidth]{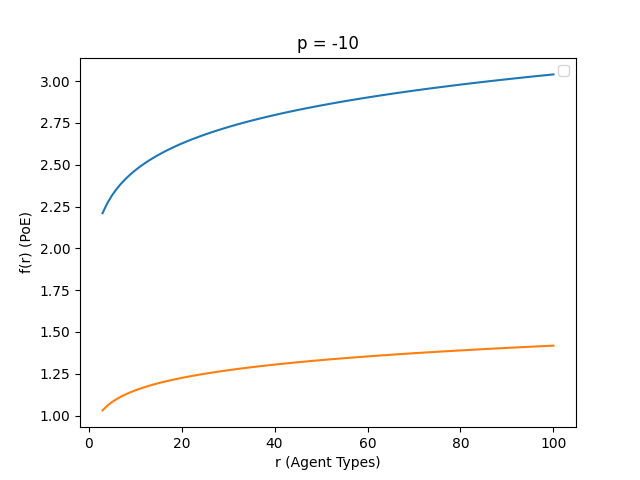}
     \caption{PoE as a function of $r$ for $p=-10$}\label{Fig:p=-10}
   \end{minipage}
\end{figure}


    
    






\end{document}